\numberwithin{equation}{section}
\newtheorem{thm}{Theorem}[section]
\newtheorem{lem}[thm]{Lemma}
\newtheorem{prop}[thm]{Proposition}
\newtheorem{cor}[thm]{Corollary}
\theoremstyle{definition}
\newtheorem{defi}[thm]{Definition}
\newtheorem{rem}[thm]{Remark}
\newtheorem{meth}[thm]{Procedure}
\newenvironment{abst}{\begin{minipage}[c]{0.9\textwidth} \footnotesize \textbf{Abstract.}}
{\end{minipage}\\[2ex]}
\newenvironment{key}{\begin{minipage}[c]{0.9\textwidth} \footnotesize \textbf{Keywords.}}
{\end{minipage}\\[2ex]}
\newenvironment{ackno}{\begin{minipage}[c]{1\textwidth} \footnotesize \textbf{Acknowledgements.}}
{\end{minipage}\\[7ex]}
\newcommand{\eps}{\ensuremath{\varepsilon}}
\DeclareMathOperator{\dd}{\ensuremath\normalfont{d}}
\DeclareMathOperator{\DD}{\ensuremath\normalfont{D}}
\DeclareMathOperator{\LL}{\ensuremath\normalfont{L}}
\newcommand{\oox}{\tilde{o}^{(1)}}
\newcommand{\ooy}{\tilde{o}^{(2)}}
\newcommand{\ooz}{\tilde{o}^{(3)}}
\newcommand{\ox}{{o}^{(1)}}
\newcommand{\oy}{{o}^{(2)}}
\newcommand{\oz}{{o}^{(3)}}
\newcommand{\Fx}{\tilde{f}_1}
\newcommand{\Fy}{\tilde{f}_2}
\newcommand{\Fz}{\tilde{f}_3}
\newcommand{\fx}{\tilde{\mathbf{f}}_1}
\newcommand{\fy}{\tilde{\mathbf{f}}_2}
\newcommand{\fz}{\tilde{\mathbf{f}}_3}
\newcommand{\y}{\mathbf{y}}
\newcommand{\f}{\mathbf{f}}
\newcommand{\g}{\mathbf{g}}
\newcommand{\B}{\mathbf{B}}
\renewcommand{\dd}{\mathbf{d}}
\newcommand{\m}{\mathbf{m}}
\renewcommand{\SS}{\mathbb{S}}
\newcommand{\R}{\mathbb{R}}
\newcommand{\N}{\mathbb{N}}
\newcommand{\C}{\mathbb{C}}
\begin{document}
\begin{center}
\Large\bfseries On the Reconstruction of Dipole Directions from Spherical Magnetic Field Measurements\normalsize\mdseries
\\[3ex]
{Christian Gerhards}\footnote{University of Vienna, Computational Science Center
\\Oskar-Morgenstern-Platz 1, 1090 Vienna
\\e-mail: christian.gerhards@univie.ac.at}
\\[3ex]
\today
\end{center}

\begin{abst}
Reconstructing magnetizations from measurements of the generated magnetic potential is generally non-unique. The non-uniqueness still remains if one restricts the magnetization to those induced by an ambient magnetic dipole field (i.e., the magnetization is described by a scalar susceptibility and the dipole direction). Here, we investigate the situation under the additional constraint that the susceptibility is either spatially localized in a subregion of the sphere or that it is band-limited. If the dipole direction is known, then the susceptibility is uniquely determined under the spatial localization constraint while it is only determined up to a constant under the the assumption of band-limitedness. If the dipole direction is not known, uniqueness is lost again. However, we show that all dipole directions that could possibly generate the measured magnetic potential need to be zeros of a certain polynomial which can be computed from the given potential. We provide examples of non-uniqueness of the dipole direction and examples on how to find admissible candidates for the dipole direction under the spatial localization constraint. 
\end{abst}

\begin{key}
Inverse Magnetization Problem, Decomposition of Spherical Vector Fields, Uniqueness, Magnetic Dipoles, Susceptibility
\end{key}


\section{Introduction}

Assuming a magnetic field $\B$ of the form $\B=\nabla V$ on a sphere $\SS_R=\{x\in\R^3:|x|=R\}$ that is generated by a magnetization $\m$ on a sphere $\SS_r$ of radius $r<R$, we are interested in the question of which contributions of $\m$ can be reconstructed from knowledge of the potential $V$. In particular, we are interested in magnetizations that are induced by an ambient dipole magnetic field, i.e., $\m$ is of the form
\begin{align}\label{eqn:m}
 \m(x)=Q(x)\frac{3(x\cdot \dd)x-\dd |x|^2}{|x|^5}, \quad x\in\SS_r,
\end{align}
where $\dd\in\SS$ denotes the direction of the dipole and $Q$ the susceptibility on $\SS_r$. For brevity, all unmentioned physical quantities (such as the permeability $\mu_0$ or the actual strength of the ambient dipole magnetic field) and other constant factors are implicitly included in the function $Q$ (so, technically, $Q$ is not a susceptibility, but we still call it ' susceptibility' throughout this paper). For the magnetic field $\B$ we assume that it has no other sources than $\m$, i.e., outside $\SS_r$, it can be written in the form $\B=\nabla V$ with a harmonic potential $V$ given by
\begin{align}\label{eqn:smag0}
 V(x)=V[\m](x)=\frac{1}{4\pi}\int_{\SS_r}\m(y)\cdot\frac{x-y}{|x-y|^3} {\mathrm{d}}\omega(y),\quad x\in\mathbb{R}^3\setminus\SS_r.
\end{align}
When $\m$ is of induced form as described in \eqref{eqn:m}, we typically write $V[Q,\dd]$ instead of the more general notation $V[\m]$.

In general, even if the dipole direction $\dd$ is known, the susceptibility $Q$ is not determined uniquely by knowledge of the potential $V[Q,\dd]$ on the sphere $\SS_R$ (see, e.g., \cite{maushaak03}, where they named magnetizations that produce no magnetic potential on $\SS_R$ 'annihilators'; here, we call such magnetizations 'silent from outside'). If we make the additional assumption that $Q$ is locally supported in some subregion $\Gamma\subset\SS_{r}$, then the susceptibility is actually determined uniquely (cf. \cite{gerhards16a}, based on results from \cite{baratchart13, lima13} in a Euclidean setup). Therefore, in the latter scenario, but under the condition that the dipole direction is not known, our goal is to find suitable candidates for the dipole direction $\dd$. If the magnetization $\m$ were known, then a standard procedure such as described in \cite[Chapter 7]{butler04} can be used to derive $\dd$ from the direction of $\m$ or to see that $\m$ cannot be of the form \eqref{eqn:m}. However, just given the corresponding magnetic potential $V[\m]$ on the sphere $\SS_R$, only certain components of $\m$ can be reconstructed uniquely (cf. \cite{baratchart13,gerhards16a,gubbins11}; a summary is provided in Section \ref{sec:aux}). In other words, the question we are interested in can be reformulated as follows: Knowing only the uniquely determined components of $\m$, what can be said about $\dd$ and $Q$? An illustration of the effect of this non-uniqueness on classical methods of paleopole estimation can be found, e.g., in \cite{vervelidou16}. In the paper at hand, we investigate the influence of additional constraints on $\m$ (namely, the constraint that the magnetization is localized in a subdomain $\Gamma\subset\SS_r$ of the sphere or that it is band-limited). More precisely, we provide examples of non-uniqueness for the simultaneous reconstruction of $Q$ and $\dd$ from knowledge of $V[Q,\dd]$ on $\SS_R$, even under the mentioned additional constraints. But we also show that all possible candidates for the dipole direction $\dd$ for which the given potential can be expressed in the form $V[Q,\dd]$ need to be zeros of a particular polynomial that can be obtained from the given potential (cf. Sections \ref{sec:unique} and \ref{sec:bl}). This allows to restrict the set of candidates for the dipole direction and, to some extent, improve the handling of the non-uniqueness. 

The approach above seems to be particularly feasible for the case of the spatial localization constraint. The localization constraint could be enforced by geophysically reasonable means if one has knowledge of the true magnetization in a small subregion of $\SS_r$ or if it is known in advance that there exists a region with nearly vanishing magnetization. Being able to compute the set of admissible candidates for the dipole direction could be of use, e.g., for paleopole estimations (cf. \cite{vervelidou16} and references therein for an overview on the current procedures). The assumption that the magnetization $\m$ is concentrated on a spherical surface $\SS_r$ is fairly common in geophysical applications since magnetization typically occurs only in the upper few tens of kilometers of the Earth. Actually, for any 'sufficiently nice' volumetric magnetization in the ball $\mathbb{B}_r=\{x\in\R^3:|x|<r\}$ there can be found a magnetization concentrated on $\SS_r$ that produces the same magnetic potential on $\SS_R$, $r<R$, as its volumetric counterpart (see, e.g., \cite[Section 3]{baratchartgerhards16}). For the notion of vertically integrated magnetizations, the reader is referred to \cite{gubbins11}. Last, it should be noted that the inversion of the magnetic potential $V[\m]$ from \eqref{eqn:smag0} is closely related to the gravimetric problem (see, e.g., \cite{michel05,michel08} and references therein). However, while the gravimetric problem is unique when restricted to harmonic mass densities, the vectorial nature of the inverse magnetization problem causes the described non-uniqueness issues.

Finally, the structure of the paper at hand is as follows: In Section \ref{sec:aux}, we provide some notations and a brief recapitulation of the spherical Helmholtz and Hardy-Hodge decompositions. Latter classifies those components of the magnetization $\m$ (not necessarily of the form \eqref{eqn:m}) that are determined uniquely by knowledge of $V[\m]$ on $\SS_R$. Namely, if $\m=\tilde{\m}_1+\tilde{\m}_2+\tilde{\m}_3$ is the Hardy-Hodge decomposition, then only $\tilde{\m}_2$ is determined uniquely (e.g., \cite{baratchart13,gerhards16a,gubbins11}; we say that $\m$ and $\tilde{\m}^{(2)}$ are 'equivalent from outside'). Under the additional constraint that $\m$ is locally supported in a subdomain $\Gamma$ of the sphere $\SS_r$, both $\tilde{\m}_1$ and $\tilde{\m}_2$ are determined uniquely (cf. \cite{baratchart13,gerhards16a}). We also formulate the Helmholtz and Hardy-Hodge decompositions in terms of some well-known vector spherical harmonics, which will be of use for our considerations on band-limited magnetizations. However, it should already be noted that the constraint of $\m$ being band-limited, opposed to being spatially localized, still only yields that $\tilde{\m}_2$ is determined uniquely by $V[\m]$ on $\SS_R$.

Based on the results from Section \ref{sec:aux}, Sections \ref{sec:unique} and \ref{sec:bl} focus on the case of induced magnetizations of the form \eqref{eqn:m} under the constraint that the susceptibilities $Q$ are localized in a subregion $\Gamma\subset\SS_r$ or that $Q$ is band-limited, respectively. In both cases, we supply counter-examples to the uniqueness issue, i.e., we construct two susceptibilities $Q$ and $\overline{Q}$ and dipole directions $\dd\not=\pm\overline{\dd}$ that satisfy the respective constraints and additionally yield $V[Q,\dd]=V[\overline{Q},\overline{\dd}]$ on $\SS_R$ (throughout the course of this paper, we call $(Q,\dd)$ and $(\overline{Q},\overline{\dd})$ 'equivalent (from outside)' if they produce the same potential on $\SS_R$). Although non-uniqueness prevails under the additional constraints, for a given potential $V$ of the form \eqref{eqn:smag0}, we derive a way of computing a subset of $\SS$ which contains all dipole directions $\dd$ for which there exists a susceptibility $Q$ such that $V=V[Q,\dd]$ on $\SS_R$. Namely, in the case of spatially localized susceptibilities, the admissible dipole directions are zeros of a fourth order polynomial that can be computed from the known potential $V$ (cf. Theorem \ref{thm:reconst}). This way, we at least obtain some additional information on the otherwise non-unique problem. In the optimal case, there exists only a single zero $\pm\dd$ of the polynomial, which would guarantee uniqueness for the particular measured magnetic potential $V$ (note that uniqueness is only understood up the sign because, obviously, $V[Q,\dd]=V[-Q,-\dd]$). Similar results can be obtained for band-limited susceptibilities (cf. Section \ref{sec:bl}). However, here the degree of the polynomial of which the zeros need to be determined depends on the band-limit (furthermore, the zeros do not directly resemble the dipole direction $\dd$ but rather the vector $\mathbf{y}_\dd=(Y_{1,-1}(\dd),Y_{1,0}(\dd),Y_{1,1}(\dd))$ of spherical harmonics of degree one evaluated at the point $\dd$). Additionally, while for the spatial localization constraint, a known dipole direction uniquely determines the susceptibility, the assumption of band-limitedness only implies that a given dipole direction determines the susceptibility up to an additive constant (cf. Lemma \ref{lem:indmagbl}).

Eventually, in Section \ref{sec:num}, we provide some numerical examples on how the considerations from Section \ref{sec:unique} for spatially localized magnetizations can help to obtain suitable candidates for the dipole directions $\dd$ and on how to decide if a given potential $V$ on $\SS_R$ can be produced by a dipole induced magnetization of the form \eqref{eqn:m} in the first place. For brevity, we restrict the numerical examples to the case of spatial localization constraints (and not the constraint of band-limitation) as we believe this to be more relevant for potential applications. For notational reasons, we choose $r=1$ throughout the remainder of this paper (dipole induced magnetizations then have the form $\m(x)=Q(x)(3(x\cdot\dd)-\dd)$) while the radius $R$ of the sphere where the potential $V$ is given can still be any radius $R>1$. However, the results hold true for any $0<r<R$.

\section{Auxiliary Results and Notations}\label{sec:aux}
Throughout this paper, bold-face letters $\f,\g$ typically denote vector valued functions mapping $\SS$, $\SS_R$, or $\R^3$ into $\R^3$, while $f,g,F,G$ denote scalar valued functions mapping $\SS$, $\SS_R$, or $\R^3$ into $\R$. For brevity, we denote the unit sphere $\SS_1=\{x\in\R^3:|x|=1\}$ by $\SS$ throughout the course of this paper. Accordingly, $L^2(\SS,\R^3)$ and $H_k(\SS,\R^3)$ mean the function space of vector valued square-integrable functions and the Sobolev space as denoted, e.g., in \cite{freeden98}, respectively. $L^2(\SS)$ and $H_k(\SS)$ denote the corresponding scalar valued function spaces. For the rest of this section, we briefly recapitulate some notations and results from \cite{backus96,baratchart13,freeden98,freedenschreiner09,gerhards12,gerhards16a,gubbins11}. First, we define the following Helmholtz operators, acting at a point $x\in\SS_r$:
\begin{align}
 \ox&=\frac{x}{|x|}\,\rm{id},\label{eqn:o1}
 \\\oy&=\nabla^*,\label{eqn:o2}
 \\\oz&=\LL^*=\frac{x}{|x|}\times \nabla^*,\label{eqn:o3}
\end{align}
where $\nabla^*$ denotes the surface gradient on the unit sphere $\SS$, $\LL^*$ the surface curl gradient ($\times$ means the vector product), and $\rm{id}$ the identity operator. The Euclidean gradient is denoted by $\nabla$ and can be expressed in the form $\nabla=\frac{x}{|x|}\partial_r+\frac{1}{r}\nabla^*$, for $r=|x|$. These operators allow to decompose a spherical vector field into a radial, surface curl-free, and a surface divergence-free tangential contribution. 

\begin{thm}[Spherical Helmholtz Decomposition]\label{thm:helmdecomp}
Any function $\f\in L^2(\SS,\R^3)$ can be decomposed into
 \begin{align}
  \f=\f_1+\f_2+\f_3=\ox [f_1]+\oy [f_2]+\oz [f_3],
 \end{align}
where the scalar functions $f_1$, $f_2$, $f_3$ are uniquely determined by the conditions $\int_{\SS} f_2\,  {\mathrm{d}}\omega=\int_{\SS}f_3 \, {\mathrm{d}}\omega=0$. 
\end{thm}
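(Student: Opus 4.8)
The plan is to establish existence and uniqueness separately, reducing the vectorial identity to scalar elliptic problems on $\SS$ via the Laplace--Beltrami operator $\Delta^*=\nabla^*\cdot\nabla^*$.

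For existence, I would first split $\f$ pointwise into radial and tangential parts. Writing $\xi=x/|x|$, the radial contribution is singled out by $f_1=\f\cdot\xi$, so that $\f-\ox[f_1]$ is everywhere tangential to $\SS$. It then remains to write an arbitrary tangential field $\f_{\mathrm{tan}}\in L^2(\SS,\R^3)$ in the form $\oy[f_2]+\oz[f_3]=\nabla^*f_2+\LL^*f_3$. This is precisely the tangential Hodge decomposition on the sphere, and the decisive structural point is that $\SS$ carries no nontrivial harmonic tangential fields (it is simply connected), so that a surface gradient and a surface curl gradient already exhaust all tangential fields.

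To construct $f_2$ and $f_3$ explicitly, I would use the orthogonality relations $\nabla^*\cdot\LL^*[g]=0$ and $\LL^*\cdot\nabla^*[g]=0$ together with $\nabla^*\cdot\nabla^*=\LL^*\cdot\LL^*=\Delta^*$. Applying the surface divergence and the surface curl to $\f_{\mathrm{tan}}$ then decouples the problem into the two scalar equations $\Delta^*f_2=\nabla^*\cdot\f_{\mathrm{tan}}$ and $\Delta^*f_3=\LL^*\cdot\f_{\mathrm{tan}}$. Because $\SS$ has no boundary, both right-hand sides integrate to zero over $\SS$ and hence lie in the range of $\Delta^*$; since $\Delta^*$ has the constants as its only kernel and restricts to a bijection on the space of mean-zero functions, each equation admits a unique solution with $\int_\SS f_2\,{\mathrm{d}}\omega=\int_\SS f_3\,{\mathrm{d}}\omega=0$. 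This produces the decomposition and simultaneously enforces the stated normalization.

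For uniqueness, suppose $\ox[f_1]+\oy[f_2]+\oz[f_3]=0$ with $f_2,f_3$ of vanishing mean. Taking the radial component gives $f_1=0$ at once, leaving $\nabla^*f_2+\LL^*f_3=0$; applying $\nabla^*\cdot$ and $\LL^*\cdot$ and using the identities above yields $\Delta^*f_2=\Delta^*f_3=0$, so $f_2,f_3$ are constant and the mean-zero conditions force them to vanish. The main obstacle I anticipate is not the formal manipulation but the analytic justification of the tangential Hodge decomposition and of the mapping properties of $\Delta^*$, namely its invertibility on mean-zero functions and the correct Sobolev regularity of the resulting potentials. An alternative route that bypasses these technicalities is to invoke completeness of the vector spherical harmonics $\ox Y_{n,k}$, $\oy Y_{n,k}$, $\oz Y_{n,k}$ in $L^2(\SS,\R^3)$ and to read off $f_1,f_2,f_3$ from the expansion coefficients; there the absence of a degree-zero term for the families built from $\oy$ and $\oz$ is exactly the mean-zero normalization.
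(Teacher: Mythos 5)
The paper itself contains no proof of this theorem: Section \ref{sec:aux} states it explicitly as a recapitulation of known results from the cited literature (Freeden et al., Backus et al.), so there is no in-paper argument to compare against. Your proposal is correct and is essentially the standard proof given in those references: isolate the radial part via $f_1=\f\cdot\xi$, reduce the tangential part to the two scalar Beltrami equations $\Delta^* f_2=\nabla^*\cdot\f_{\mathrm{tan}}$ and $\Delta^* f_3=\LL^*\cdot\f_{\mathrm{tan}}$ (solvable since both right-hand sides have vanishing mean, e.g.\ via the Green's function for $\Delta^*$, which is exactly how the cited texts do it), with the mean-zero normalization eliminating the constant ambiguity; the analytic technicalities you flag for general $L^2$ fields are genuine but standard, and your fallback route via completeness of the vector spherical harmonics $\ox Y_{n,k},\oy Y_{n,k},\oz Y_{n,k}$ is precisely the $L^2$ argument used in the literature.
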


A further decomposition that is of particular importance for the characterization of magnetizations is based on the spherical Hardy-Hodge operators
\begin{align}
 \oox&=\ox\left(\DD+\frac{1}{2}\right)-\oy,\label{eqn:oo1}
 \\\ooy&=\ox\left(\DD-\frac{1}{2}\right)+\oy,\label{eqn:oo2}
 \\\ooz&=\oz,\label{eqn:oo3}
\end{align}
where $\DD$ denotes the pseudo-differential operator
\begin{align}\label{eqn:d}
  \DD=\left(-\Delta^*+\frac{1}{4}\right)^{\frac{1}{2}}
\end{align}
and $\Delta^*=\nabla^*\cdot\nabla^*$ the spherical Beltrami operator. These operators above reflect the decomposition into a surface curl-free tangential contribution and two further contributions generated by the gradient of functions that are harmonic in the interior and the exterior of $\SS$, respectively. 

\begin{thm}[Spherical Hardy-Hodge Decomposition]\label{thm:shhdecomp}
Any function $\f\in {L}^2(\SS,\R^3)$ can be decomposed into
 \begin{align}\label{eqn:shhdecomp}
  \f=\fx+\fy+\fz=\oox [\Fx]+\ooy [\Fy]+\ooz [\Fz],
 \end{align}
where the scalar functions $\Fx$, $\Fy$, $\Fz$ are uniquely determined by the conditions $\int_{\SS}\Fx-\Fy\,  {\mathrm{d}}\omega=\int_{\SS} \Fz\,  {\mathrm{d}}\omega=0$. If $f_1$, $f_2$, $f_3$ are the Helmholtz scalars of $\f$ as given in Theorem \ref{thm:helmdecomp}, then
\begin{align}
 \Fx&=\frac{1}{2}\left(\DD^{-1}[f_1]-f_2+\frac{1}{2}\DD^{-1}[f_2]\right),\label{eqn:fx}
 \\\Fy&=\frac{1}{2}\left(\DD^{-1}[f_1]+f_2+\frac{1}{2}\DD^{-1}[f_2]\right),\label{eqn:fy}
 \\\Fz&=f_3.\label{eqn:fz}
\end{align}
\end{thm}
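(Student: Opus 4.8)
The plan is to derive the Hardy--Hodge decomposition directly from the already-established Helmholtz decomposition (Theorem \ref{thm:helmdecomp}) by re-expressing the Hardy--Hodge operators $\oox,\ooy,\ooz$ through the Helmholtz operators $\ox,\oy,\oz$. First I would apply Theorem \ref{thm:helmdecomp} to write $\f=\ox[f_1]+\oy[f_2]+\oz[f_3]$ with $\int_\SS f_2\,{\mathrm d}\omega=\int_\SS f_3\,{\mathrm d}\omega=0$, and then substitute the defining relations \eqref{eqn:oo1}--\eqref{eqn:oo3} into a tentative expansion $\oox[\Fx]+\ooy[\Fy]+\ooz[\Fz]$. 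Collecting terms according to $\ox$, $\oy$, $\oz$ yields
\[
\oox[\Fx]+\ooy[\Fy]+\ooz[\Fz]=\ox\bigl[(\DD+\tfrac12)\Fx+(\DD-\tfrac12)\Fy\bigr]+\oy[\Fy-\Fx]+\oz[\Fz].
\]
Since the Helmholtz scalars are unique up to the null spaces of $\oy=\nabla^*$ and $\oz=\LL^*$ (which consist only of constants), comparing this expression with $\f$ reduces everything to the scalar system $(\DD+\tfrac12)\Fx+(\DD-\tfrac12)\Fy=f_1$, $\Fy-\Fx=f_2+c$, and $\Fz=f_3+c'$ for two constants $c,c'$.

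Next I would solve this system. Eliminating $\Fy=\Fx+f_2+c$ in the radial equation gives $2\DD\Fx=f_1-(\DD-\tfrac12)(f_2+c)$, so that $\Fx$ is recovered by applying $\DD^{-1}$, and substituting back reproduces the closed forms \eqref{eqn:fx}--\eqref{eqn:fz}. The mean-value conditions then pin down the two free constants: integrating $\Fy-\Fx=f_2+c$ over $\SS$ and using $\int_\SS(\Fx-\Fy)\,{\mathrm d}\omega=0$ together with $\int_\SS f_2\,{\mathrm d}\omega=0$ forces $c=0$, and likewise $\int_\SS\Fz\,{\mathrm d}\omega=0$ with $\int_\SS f_3\,{\mathrm d}\omega=0$ forces $c'=0$. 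Conversely, defining $\Fx,\Fy,\Fz$ through \eqref{eqn:fx}--\eqref{eqn:fz} and reversing the computation establishes existence, while the uniqueness of the Helmholtz scalars propagates to the uniqueness of $\Fx,\Fy,\Fz$ under the stated normalizations.

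The step that actually carries the argument --- and the reason the constant $\tfrac14$ is built into $\DD$ --- is the invertibility of $\DD=(-\Delta^*+\tfrac14)^{1/2}$. I would justify this spectrally: writing $Y_{n,k}$ for the spherical harmonics of degree $n$, one has $-\Delta^* Y_{n,k}=n(n+1)Y_{n,k}$, hence $\DD Y_{n,k}=(n+\tfrac12)Y_{n,k}$, and the eigenvalues $n+\tfrac12\geq\tfrac12$ stay bounded away from zero for every $n\geq 0$. Thus $\DD$ has trivial kernel and $\DD^{-1}$, acting by multiplication with $(n+\tfrac12)^{-1}$ on each harmonic, is a bounded (indeed smoothing) operator on $L^2(\SS)$; this is exactly what makes $\Fx$ and $\Fy$ well defined (had one used $(-\Delta^*)^{1/2}$, the degree-zero term would lie in the kernel and the inversion would fail). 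The only remaining bookkeeping is regularity: for $\f\in L^2(\SS,\R^3)$ the tangential Helmholtz scalars $f_2,f_3$ lie in $H_1(\SS)$, so that $\DD f_2\in L^2(\SS)$ and every expression in \eqref{eqn:fx}--\eqref{eqn:fz} is meaningful.
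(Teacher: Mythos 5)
Your proposal is correct, and in fact it is more than the paper itself offers: Theorem \ref{thm:shhdecomp} is presented there as a recapitulated result, with the proof delegated to the cited literature (Freeden et al., Backus et al., Baratchart et al., Gerhards), so there is no in-paper argument to compare against. Your derivation is the natural self-contained route suggested by the statement itself: expanding $\oox[\Fx]+\ooy[\Fy]+\ooz[\Fz]$ via \eqref{eqn:oo1}--\eqref{eqn:oo3}, matching against the Helmholtz decomposition, and solving the resulting scalar system $(\DD+\tfrac12)\Fx+(\DD-\tfrac12)\Fy=f_1$, $\Fy-\Fx=f_2+c$, $\Fz=f_3+c'$ reproduces \eqref{eqn:fx}--\eqref{eqn:fz} exactly, and your normalization argument correctly kills the two free constants using $\int_\SS f_2\,{\mathrm d}\omega=\int_\SS f_3\,{\mathrm d}\omega=0$. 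The two points that genuinely carry the argument are both handled: the spectral identity $\DD Y_{n,k}=(n+\tfrac12)Y_{n,k}$, which makes $\DD^{-1}$ bounded with trivial kernel (and explains the role of the $\tfrac14$), and the regularity observation $f_2\in H_1(\SS)$ for $\f\in L^2(\SS,\R^3)$, which is what legitimizes writing $(\DD-\tfrac12)f_2$ in $L^2$ before inverting. One small point worth making explicit if you write this up: uniqueness requires noting that two sets of Hardy--Hodge scalars for the same field yield, after imposing the normalizations, the homogeneous system $2\DD(\Fx-\Fx')=0$, $\Fy-\Fy'=\Fx-\Fx'$, $\Fz-\Fz'=\mathrm{const}$ with zero mean, so injectivity of $\DD$ finishes it; your closing sentence asserts this propagation but does not spell out the two-line computation.
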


Although, the Hardy-Hodge decomposition in Theorem \ref{thm:shhdecomp} reflects the decomposition that we require to describe the uniqueness issues of the treated inverse magnetization problem, the contributions $f_1$, $f_2$, $f_3$ from the Helmholtz decomposition in Theorem \ref{thm:helmdecomp} are often easier to handle and compute (e.g., $f_1(x)=\frac{x}{|x|}\cdot \f(x)$). Therefore, the relations \eqref{eqn:fx}--\eqref{eqn:fz} can be quite helpful. Some related applications and information on such a decomposition on the Euclidean plane instead of a sphere can be found in \cite{baratchart17a,baratchart13,lima13}.

In the following, we recapitulate some earlier results on how the Hardy-Hodge decomposition characterizes the uniqueness of general magnetization $\m$ (for details and proofs, the reader is referred to \cite{baratchart13,gerhards16a}). First, we introduce the notion of equivalent magnetizations, which simply means that the two magnetizations produce the same potential $V$ (i.e., the same magnetic field $\B=\nabla V$) on some sphere $\SS_R$. In other words, if there exist two equivalent magnetizations, we have non-uniqueness (i.e., the knowledge of $V=V[\m]$ on $\SS_R$ does not uniquely determine $\m$). It should be noted that, when talking about induced magnetizations with susceptibility $Q$ and dipole direction $\dd\in\SS$, uniqueness is only meant up to the sign because, clearly, $V[Q,\dd]=V[-Q,-\dd]$. 

\begin{defi}
Two magnetizations $\m$, $\overline{\m}\in L^2(\SS,\R^3)$ are called \emph{equivalent from outside} if $V[\m]=V[\overline{\m}]$ on $\SS_R$ for an $R>1$. They are called  \emph{equivalent from inside} if $V[\m]=V[\overline{\m}]$ on $\SS_R$ for an $R<1$.  A magnetization $\m$ is called \emph{silent from outside} or \textit{inside} if it is equivalent to the zero-magnetization $\overline{\m}\equiv0$ from outside or inside, respectively (i.e., if $V[\m]\equiv0$ on $\SS_R$ for $R>1$ or $R<1$, respectively; such silent magnetizations are also frequently called annihilators).

If the magnetizations $\m$, $\overline{\m}$ are of the form \eqref{eqn:m}, with susceptibilities $Q$, $\overline{Q}$ and dipole directions $\dd$, $\overline{\dd}$, then we say that $(Q,\dd)$ and $(\overline{Q},\overline{\dd})$ are equivalent from inside/outside or we say that  $(Q,\dd)$ is silent from inside/outside if the corresponding magnetizations $\m$, $\overline{\m}$ have these properties.
\end{defi}

For us, the case $R>1$ (i.e., equivalence/silence from outside) is of major relevance since we are eventually interested in using satellite magnetic field measurements, which are obviously collected in the exterior of a planet. Now we can formulate the characterization of those contributions of $\m$ that are uniquely determined by knowledge of the potential $V[\m]$ by using the notion of equivalent magnetizations.

\begin{thm}\label{thm:sunique1}
 Let $\m\in {L}^2(\SS,\R^3)$ and its decomposition into $\tilde{\m}^{(1)}$, $\tilde{\m}^{(2)}$, $\tilde{\m}^{(3)}$ be given as in Theorem \ref{thm:shhdecomp}. Then the following assertions hold true:
 \begin{itemize}
  \item[(a)] The magnetization $\tilde{\m}^{(2)}$ is equivalent from outside to $\m$ while  $\tilde{\m}^{(1)}$ is equivalent from inside to $\m$.
  \item[(b)] The magnetization $\m$ is silent from outside if and only if $\tilde{\m}^{(2)}\equiv0$ while $\m$ is silent from inside if and only if $\tilde{\m}^{(1)}\equiv0$.
  \item[(c)] If $\textnormal{supp}(\m)\subset \Gamma$, for a region $\Gamma\subset\SS$ with $\overline{\Gamma}\not=\SS$, then $\m$ is silent from outside if and only if it is silent from inside.
  \end{itemize}
\end{thm}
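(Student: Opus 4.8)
The plan is to reduce everything to an expansion in (scalar) spherical harmonics $Y_{n,k}$ and to compute, once and for all, how a surface magnetization propagates to the potential on $\SS_R$. Writing the kernel as $\frac{x-y}{|x-y|^3}=\nabla_y\frac{1}{|x-y|}$ and inserting the Helmholtz decomposition $\m=\ox[f_1]+\oy[f_2]+\oz[f_3]$ from Theorem \ref{thm:helmdecomp}, I would split the integrand via $\nabla_y=\frac{y}{|y|}\partial_{|y|}+\frac{1}{|y|}\nabla^*_y$. Two surface integration-by-parts identities do the essential bookkeeping: $\int_\SS \nabla^*f_2\cdot\nabla^*_y G\,{\mathrm d}\omega=-\int_\SS f_2\,\Delta^*_y G\,{\mathrm d}\omega$ and $\int_\SS \LL^*f_3\cdot\nabla^*_y G\,{\mathrm d}\omega=0$, the latter because the surface curl gradient is $L^2$-orthogonal to surface gradients. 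This second identity already shows that the $\oz[f_3]$-contribution, i.e. $\tilde{\m}^{(3)}=\oooz[\Fz]$, produces no potential at all, so it is silent from inside and from outside.

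Next I would expand the kernel by the Legendre addition theorem, $\frac{1}{|x-y|}=\sum_{n,k}\frac{4\pi}{2n+1}\frac{|y|^n}{|x|^{n+1}}Y_{n,k}(\tfrac{x}{|x|})Y_{n,k}(\tfrac{y}{|y|})$ for $|y|<|x|$ (and with the roles of $x,y$ interchanged for $|x|<|y|$). Using $\Delta^* Y_{n,k}=-n(n+1)Y_{n,k}$ and the eigenvalue $n+\tfrac12$ of the operator $\DD$ in \eqref{eqn:d}, the two surviving contributions collapse, after substituting the relations \eqref{eqn:fx}--\eqref{eqn:fy}, into a single Hardy-Hodge scalar on each side:
\[
 V[\m](x)=\sum_{n,k} n\,(\Fy)_{n,k}\,|x|^{-(n+1)}\,Y_{n,k}\!\left(\tfrac{x}{|x|}\right),\qquad |x|>1,
\]
\[
 V[\m](x)=-\sum_{n,k}(n+1)\,(\Fx)_{n,k}\,|x|^{n}\,Y_{n,k}\!\left(\tfrac{x}{|x|}\right),\qquad |x|<1,
\]
where $(\Fx)_{n,k},(\Fy)_{n,k}$ denote the spherical-harmonic coefficients of the Hardy-Hodge scalars of $\m$. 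This is the crux: the exterior potential depends only on $\Fy$, the interior potential only on $\Fx$. Since the Hardy-Hodge decomposition of $\tilde{\m}^{(2)}=\oooy[\Fy]$ has $\Fy$ as its second scalar and vanishing first and third scalars (and symmetrically for $\tilde{\m}^{(1)}$), applying the same formulas to $\tilde{\m}^{(2)}$ and to $\tilde{\m}^{(1)}$ yields assertion (a). For (b), the functions $|x|^{-(n+1)}$ (resp. $|x|^{n}$) are linearly independent in $|x|$, so $V[\m]\equiv0$ on $\SS_R$ forces $n\,(\Fy)_{n,k}=0$ (resp. $(n+1)\,(\Fx)_{n,k}=0$) for all $n,k$; since $\oooy[Y_{0,0}]=\oooz[Y_{0,0}]=0$, the degenerate degree-zero term in the exterior formula carries no information, and one obtains $\m$ silent from outside $\iff\Fy\equiv0\iff\tilde{\m}^{(2)}\equiv0$, and likewise $\m$ silent from inside $\iff\Fx\equiv0\iff\tilde{\m}^{(1)}\equiv0$.

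For (c) I would switch to a potential-theoretic, spatial argument, since the support hypothesis does not transform conveniently under the spherical-harmonic expansion. The function $V[\m]$ is harmonic in $\R^3\setminus\SS$ and, by the exterior formula, decays like $|x|^{-2}$ at infinity. Assume $\m$ is silent from outside; then $V[\m]$ vanishes on one sphere $\SS_R$, $R>1$, and matching its exterior multipole expansion forces $V[\m]\equiv0$ on all of $\{|x|>1\}$. On the nonempty open patch $U=\SS\setminus\overline{\Gamma}$ (nonempty since $\overline{\Gamma}\neq\SS$) the surface density vanishes, so the one-sided limits of $V[\m]$ and of its normal derivative agree across $\SS$ there; hence $V[\m]$ extends harmonically to a full neighborhood $N$ of $U$ in $\R^3$. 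This extension vanishes on $N\cap\{|x|>1\}$, so by real-analyticity and unique continuation it vanishes on the connected set $N$, in particular on the nonempty open set $N\cap\mathbb{B}_1$. As $V[\m]$ is harmonic on the connected ball $\mathbb{B}_1$ and vanishes on an open subset, it vanishes on all of $\mathbb{B}_1$, i.e. $\m$ is silent from inside; the converse is symmetric. The step I expect to require the most care is exactly this harmonic-continuation claim: one must verify that the vanishing of $\m$ on $U$ really yields continuity of both $V[\m]$ and $\nabla V[\m]$ across the patch, i.e. that the jump relations of the combined single- and double-layer potential implicit in \eqref{eqn:smag0} vanish there, so that the unique-continuation argument genuinely applies; the remainder is the lengthy but routine verification of the two collapsed coefficient formulas.
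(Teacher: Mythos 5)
The paper does not prove Theorem \ref{thm:sunique1} at all: it is stated as a recapitulation of known results, with the reader referred to \cite{baratchart13,gerhards16a} for proofs, so there is no in-paper argument to compare yours against. Judged on its own, your proof is correct and self-contained. The two collapsed coefficient formulas are exactly right: with the eigenvalues $n+\frac{1}{2}$ of $\DD$ from \eqref{eqn:d}, the relations \eqref{eqn:fx}--\eqref{eqn:fy} give $(\Fy)^\wedge(n,k)=\frac{1}{2n+1}\big(f_1^\wedge(n,k)+(n+1)f_2^\wedge(n,k)\big)$ and $(\Fx)^\wedge(n,k)=\frac{1}{2n+1}\big(f_1^\wedge(n,k)-n\,f_2^\wedge(n,k)\big)$, which match the expansion of the layer potential term by term, so the exterior potential sees only the type-(2) scalar and the interior potential only the type-(1) scalar; (a) and (b) follow as you say. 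Two remarks. First, in (b) the intermediate equivalence ``silent from outside $\iff \Fy\equiv 0$'' is false as written: the uniform radial field $\m=\ox[1]$ is silent from outside yet has $\Fy\equiv 1$ under the paper's normalization $\int_\SS \Fx-\Fy\,{\mathrm{d}}\omega=0$; what your computation actually yields, and what you correctly use elsewhere in the same sentence, is $(\Fy)^\wedge(n,k)=0$ for all $n\geq 1$, which is equivalent to $\tilde{\m}^{(2)}=\ooy[\Fy]\equiv0$ because constants lie in the kernel of $\ooy$. Second, your part (c) is correct but heavier than necessary: since $\textnormal{supp}(\m)\subset\overline{\Gamma}$, you may differentiate under the integral in \eqref{eqn:smag0} for every $x\notin\overline{\Gamma}$, so $V[\m]$ is harmonic and real-analytic on the connected open set $\R^3\setminus\overline{\Gamma}$ --- there is no layer on the patch $U$ at all, hence no jump relations to verify. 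Silence from outside forces $V[\m]\equiv0$ on $\{x\in\R^3:|x|>1\}$ by the exterior expansion, and real-analytic continuation through the hole $U$ then gives $V[\m]\equiv0$ on $\mathbb{B}_1$; the converse is symmetric. (A last cosmetic point: with the paper's complex-valued $Y_{n,k}$ the addition theorem carries a conjugate on one factor, which propagates harmlessly through your formulas.)
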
 

We see that the contribution $\tilde{\m}^{(2)}$ is determined uniquely by $V=V[\m]$ on a sphere $\SS_R$ of radius $R>1$. If additionally $\textnormal{supp}(\m)\subset \Gamma$, then both  $\tilde{\m}^{(1)}$  and $\tilde{\m}^{(2)}$ are determined uniquely. Observing that $3(x\cdot \dd)x-\dd |x|^2$ is non-tangential for almost all $x\in\SS$, the next corollary is a direct consequence of Theorem \ref{thm:sunique1} for dipole induced magnetizations.

\begin{cor}\label{cor:indmag}
Let $\m\in {L}^2(\SS,\R^3)$ be of the induced form \eqref{eqn:m},  with $Q\in L^2(\SS)$ and $\dd\in\SS$, and \emph{supp}$(Q)\subset\Gamma$ for a fixed region $\Gamma\subset{\SS}$ with $\overline{\Gamma}\not={\SS}$. Then there does not exist another susceptibility $\overline{Q}\in L^2(\SS)$ with supp$(\overline{Q})\subset\Gamma$ such that $(Q,\dd)$ and $(\overline{Q},\dd)$ are equivalent from outside or inside, respectively. 
\end{cor}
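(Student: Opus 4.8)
The plan is to reduce the claim to the vanishing of the radial part of the difference magnetization. Suppose, for contradiction, that such a $\overline{Q}$ exists with $(Q,\dd)$ and $(\overline{Q},\dd)$ equivalent from outside; the case of equivalence from inside is completely analogous. Denoting by $\m$ and $\overline{\m}$ the induced magnetizations associated with $(Q,\dd)$ and $(\overline{Q},\dd)$, the linearity of \eqref{eqn:smag0} in $\m$ shows that $\m-\overline{\m}$ is silent from outside. Since both magnetizations share the dipole direction $\dd$ and $r=1$, the difference has the explicit form
\begin{align*}
 \m(x)-\overline{\m}(x)=\big(Q(x)-\overline{Q}(x)\big)\big(3(x\cdot\dd)x-\dd\big),\quad x\in\SS .
\end{align*}
Because $\mathrm{supp}(Q),\mathrm{supp}(\overline{Q})\subset\Gamma$, we also have $\mathrm{supp}(Q-\overline{Q})\subset\Gamma$, so that $\m-\overline{\m}$ is supported in $\Gamma$ with $\overline{\Gamma}\neq\SS$.

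Next I would use the locality of the support to promote one-sided silence to two-sided silence. By Theorem \ref{thm:sunique1}(c), a magnetization supported in such a region is silent from outside if and only if it is silent from inside, so $\m-\overline{\m}$ is silent from both sides. Theorem \ref{thm:sunique1}(b) then forces the second Hardy-Hodge contribution of $\m-\overline{\m}$ to vanish (silence from outside) as well as the first (silence from inside). Consequently only the third contribution survives, i.e.\ $\m-\overline{\m}=\ooz[g]=\LL^*[g]$ for some scalar $g$, which is a purely tangential field.

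Finally I would read off the radial component. Since $\LL^*[g]$ is orthogonal to the radial direction $\tfrac{x}{|x|}$, the radial Helmholtz scalar of $\m-\overline{\m}$, namely $\tfrac{x}{|x|}\cdot(\m-\overline{\m})$, vanishes almost everywhere on $\SS$. On the other hand, using $|x|=1$ one computes
\begin{align*}
 x\cdot\big(3(x\cdot\dd)x-\dd\big)=3(x\cdot\dd)-(x\cdot\dd)=2\,(x\cdot\dd),
\end{align*}
so that $2\,(Q(x)-\overline{Q}(x))(x\cdot\dd)=0$ for almost every $x\in\SS$. The factor $x\cdot\dd$ vanishes only on the great circle $\{x\in\SS:x\cdot\dd=0\}$, a set of spherical measure zero; this is precisely the non-tangentiality of $3(x\cdot\dd)x-\dd$ for almost all $x$ alluded to before the statement. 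Hence $Q-\overline{Q}=0$ almost everywhere, i.e.\ $Q=\overline{Q}$ in $L^2(\SS)$, contradicting the assumption that $\overline{Q}$ is a genuinely different susceptibility. The only point requiring a little care is this last measure-theoretic step, namely that the product vanishing a.e.\ together with $x\cdot\dd\neq0$ a.e.\ forces equality in $L^2(\SS)$; once it is in place, the corollary is an immediate assembly of the three parts of Theorem \ref{thm:sunique1}.
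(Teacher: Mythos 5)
Your proof is correct and follows exactly the route the paper intends: the paper dispenses with the corollary in one sentence, noting that $3(x\cdot\dd)x-\dd|x|^2$ is non-tangential for almost all $x\in\SS$ and invoking Theorem \ref{thm:sunique1}, which is precisely your argument (parts (c) and (b) kill $\tilde{\m}^{(1)}$ and $\tilde{\m}^{(2)}$ of the difference, leaving a tangential field whose radial scalar $2(Q-\overline{Q})(x\cdot\dd)$ must vanish a.e.). Your expanded write-up, including the measure-theoretic remark about the great circle $\{x\in\SS:x\cdot\dd=0\}$, is a faithful and complete elaboration of that same approach.
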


In other words, a spatially localized susceptibility $Q$ is uniquely determined by the knowledge of $V=V[Q,\dd]$ on a sphere $\SS_R$ of radius $R\not=1$ if $\dd$ is assumed to be given in advance. Next, we introduce two classical sets of vector spherical harmonics that reflect the decompositions from Theorems \ref{thm:helmdecomp} and \ref{thm:shhdecomp} in spectral domain. For details, the reader is referred to, e.g., \cite{backus96,edmonds57,freeden98,freedenschreiner09}.

\begin{defi}\label{def:vh}
For $n\in\mathbb{N}_0$, $k=-n,\ldots,n$, and $i=1,2,3$, we set
\begin{align*}
 \y_{n,k}^{(i)}=\big(\mu_{n}^{(i)}\big)^{-\frac{1}{2}}{o}^{(i)}Y_{n,k}
\end{align*}
and
\begin{align*}
 \tilde{\y}_{n,k}^{(i)}=\big(\tilde{\mu}_{n}^{(i)}\big)^{-\frac{1}{2}}\tilde{o}^{(i)}Y_{n,k},
\end{align*}
with normalization constants $\mu_n^{(1)}=1$, $\mu_n^{(2)}=\mu_n^{(3)}=n(n+1)$, and $\tilde{\mu}_n^{(1)}=(n+1)(2n+1)$, $\tilde{\mu}_n^{(2)}=n(2n+1)$, $\tilde{\mu}_n^{(3)}=n(n+1)$. The $Y_{n,k}$ denote an orthonormal set of scalar spherical harmonics (to be consistent with later computations in Section \ref{sec:bl}, we particularly choose $Y_{n,k}$ to be the complex-valued spherical harmonics as defined in \cite{fengler05,freedengutting13}). It is to note that the type-$(2)$ and type-$(3)$ vector spherical harmonics vanish for degree $n=0$ while this is  not the case for type (1). To avoid introducing additional notation, the type-(2) and type-(3) vector spherical harmonics should, therefore, simply be regarded as void whenever they appear for degree $n=0$.
\end{defi}

The sets $\{ \y_{n,k}^{(i)}:n\in\mathbb{N}_0,k=-n,\ldots,n,i=1,2,3\}$ and $\{\tilde{\y}_{n,k}^{(i)}:n\in\mathbb{N}_0,k=-n,\ldots,n,$ $i=1,2,3\}$ each form a complete orthonormal system in $L^2(\SS,\R^3)$.  Thus, a Fourier expansion 
\begin{align}
 \m=\sum_{i=1}^3\sum_{n=0}^\infty \sum_{k=-n}^n \big(\tilde{m}^{(i)}\big)^\wedge(n,k) \tilde{\y}^{(i)}_{n,k},
\end{align}
of a magnetization $\m$, with Fourier coefficients $(\tilde{m}^{(i)})^\wedge(n,k)=\int_{\SS}\m(y)\cdot \tilde{\y}^{(i)}_{n,k}(y)d\omega(y)$, inherits the properties of the Hardy-Hodge decomposition described in Theorem \ref{thm:sunique1}. For example., $\m$ is silent from outside if and only if all type-$(2)$ Fourier coefficients vanish, i.e.,
\begin{align}
 \big(\tilde{m}^{(2)}\big)^\wedge(n,k)=\int_{\SS}\m(y)\cdot \tilde{\y}^{(2)}_{n,k}(y)d\omega(y)=0,\quad n\geq 1,k=-n,\ldots,n.\label{eqn:fcoeff}
\end{align}
Analogously, $\m$ is silent from inside if and only if $(\tilde{m}^{(1)})^\wedge(n,k)=0$ for all $n\geq 0,k=-n$, $\ldots,n$. Just as the Helmholtz and Hardy-Hodge decomposition in Theorem \ref{thm:shhdecomp}, the two sets of vector spherical harmonics have a simple connection: obviously $\tilde{\y}^{(3)}_{n,k}={\y}^{(3)}_{n,k}$, and additionally
\begin{align}
\tilde{\y}^{(1)}_{n,k}&=\sqrt{\frac{n+1}{2n+1}}{\y}^{(1)}_{n,k}-\sqrt{\frac{n}{2n+1}}{\y}^{(2)}_{n,k},\label{eqn:ynkynkt1}
\\\tilde{\y}^{(2)}_{n,k}&=\sqrt{\frac{n}{2n+1}}{\y}^{(1)}_{n,k}+\sqrt{\frac{n+1}{2n+1}}{\y}^{(2)}_{n,k}.\label{eqn:ynkynkt2}
\end{align}

\section{Spatially Localized Induced Magnetizations}\label{sec:unique}

Let $Q,\overline{Q}\in L^2(\SS)$ with $\textnormal{supp}(Q)$, $\textnormal{supp}(\overline{Q})\subset\Gamma$ for a region $\Gamma\subset\SS$ with closure $\overline{\Gamma}\not=\SS$, and $\dd\not=\pm\overline{\dd}\in{\SS}$. In order to check whether $(Q,\dd)$ and $(\overline{Q},\overline{\dd})$ are equivalent from outside, we are lead to investigating if the residual magnetization
 \begin{align}\label{eqn:magind}
  \m^-(x)=Q(\xi)\left(3(x\cdot \dd)x-\dd\right)-\overline{Q}(x)\left(3(x\cdot\overline{\dd})x-\overline{\dd}\right),\quad x\in\SS.
 \end{align}
is silent from outside. According to Theorem \ref{thm:sunique1}, latter would imply
\begin{align}
 \tilde{m}_1^-&\equiv0,\label{eqn:cond1}
 \\\tilde{m}_2^-&\equiv0,\label{eqn:cond2}
\end{align}
which by Theorems \ref{thm:helmdecomp} and \ref{thm:shhdecomp} directly implies
\begin{align}
 m_1^-&\equiv0,\label{eqn:cond11}
 \\m_2^-&\equiv0,\label{eqn:cond22}
\end{align}
where $m_i^-$ and $\tilde{m}_i^-$, $i=1,2,3$, denote the scalar functions appearing in the  Helmholtz decomposition and the Hardy Hodge decomposition of $\m^-$ according to Theorems \ref{thm:helmdecomp} and \ref{thm:shhdecomp}, respectively. Equations \eqref{eqn:magind} and \eqref{eqn:cond11} yield
\begin{align}
 m_1^-(x)&=x\cdot \m^-(x)=2(Q(x)\dd\cdot x-\overline{Q}(x)\overline{\dd}\cdot x)=0, \quad x\in\SS,\label{eqn:m1}
\end{align}
which can be reformulated to $\overline{Q}(x)=\frac{Q(x)}{x\cdot\overline{\dd}}x\cdot \dd$ and leads to the following representation of $\m^-$:
 \begin{align}
  \m^-(x)=\frac{Q(x)}{x\cdot\overline{\dd}}\left(\overline{\dd}(x\cdot \dd)-\dd(x\cdot\overline{\dd})\right),\quad x\in\SS\setminus \{y\in \R^3:y\cdot\overline{\dd}=0\}.\label{eqn:mresrep}
\end{align}
For later reference, we define 
\begin{align}
 P_{Q,\dd,\overline{\dd}}(x)=\frac{Q(x)}{x\cdot\overline{\dd}}x\cdot \dd,\quad x\in\SS\setminus \{y\in \R^3:y\cdot\overline{\dd}=0\}.
\end{align}
Additionally, equations \eqref{eqn:cond11} and \eqref{eqn:cond22} imply that $\m^-$ has to be surface divergence-free if it is silent from outside, since it must hold $\m^-=\m_3^-$, where $\m_3^-$ is the vectorial surface divergence-free function of the Helmholtz and Hardy-Hodge decomposition of $\m^-$. Summarizing, we are lead to the following assertion on uniqueness of dipole-induced magnetizations.

\begin{lem}\label{thm:unique}
Let $Q\in L^2(\SS)$, with $\textnormal{supp}(Q)\subset\Gamma$, and $\dd\in\SS$. Then, for a given $\overline{\dd}\not=\pm\dd\in\SS$, there exists a $\overline{Q}\in L^2(\SS)$ with $\textnormal{supp}(\overline{Q})\subset\Gamma$ such that $(Q,\dd)$ and $(\overline{Q},\overline{\dd})$ are equivalent from outside if and only if $P_{Q,\dd,\overline{\dd}}\in L^2(\SS)$ and $\m^-$ as in \eqref{eqn:mresrep} is surface divergence-free. 
\end{lem}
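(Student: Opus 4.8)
The plan is to prove both directions of the equivalence by tracing through the chain of implications already assembled in the discussion preceding the lemma, and then verifying that the two remaining conditions are not only necessary but sufficient. For the forward direction, suppose $(Q,\dd)$ and $(\overline{Q},\overline{\dd})$ are equivalent from outside. By definition this means $\m^-$ in \eqref{eqn:magind} is silent from outside, so Theorem \ref{thm:sunique1}(b) gives $\tilde{\m}_2^-\equiv 0$, and since $\textnormal{supp}(\m^-)\subset\Gamma$ with $\overline{\Gamma}\not=\SS$, part (c) of the same theorem yields that $\m^-$ is also silent from inside, hence $\tilde{\m}_1^-\equiv 0$ as well. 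Through the explicit relations \eqref{eqn:fx}--\eqref{eqn:fz} between Hardy-Hodge and Helmholtz scalars, vanishing of both $\tilde{m}_1^-$ and $\tilde{m}_2^-$ forces $m_1^-\equiv 0$ and $m_2^-\equiv 0$, which is exactly \eqref{eqn:cond11}--\eqref{eqn:cond22}. The condition $m_1^-\equiv 0$ produces the pointwise constraint \eqref{eqn:m1}, from which one solves for $\overline{Q}$ and arrives at the representation \eqref{eqn:mresrep}; in particular $\overline{Q}=P_{Q,\dd,\overline{\dd}}$ must lie in $L^2(\SS)$, since $\overline{Q}$ was assumed to be an $L^2$ susceptibility. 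Finally, $m_1^-\equiv m_2^-\equiv 0$ means $\m^-=\m_3^-$, i.e. $\m^-$ is purely the surface divergence-free Helmholtz contribution, which is precisely the assertion that $\m^-$ is surface divergence-free. This establishes necessity of both conditions.

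For the converse, I would assume that $P_{Q,\dd,\overline{\dd}}\in L^2(\SS)$ and that $\m^-$ as written in \eqref{eqn:mresrep} is surface divergence-free, and construct the required $\overline{Q}$. The natural choice is $\overline{Q}:=P_{Q,\dd,\overline{\dd}}$, so that $\overline{Q}\in L^2(\SS)$ by hypothesis and, because $Q$ is supported in $\Gamma$ and $\overline{Q}$ is $Q$ times a scalar factor, $\textnormal{supp}(\overline{Q})\subset\Gamma$ as well. With this choice of $\overline{Q}$, the identity \eqref{eqn:m1} holds by construction (it is just the defining relation $\overline{Q}(x)(x\cdot\overline{\dd})=Q(x)(x\cdot\dd)$), so $m_1^-\equiv 0$; equivalently the radial component $x\cdot\m^-$ vanishes. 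The remaining task is to verify \eqref{eqn:cond1}--\eqref{eqn:cond2}, i.e. that $\m^-$ is silent from outside. Here the surface divergence-free hypothesis does the work: a field whose radial part vanishes and which is surface divergence-free has no type-$(2)$ Helmholtz component ($m_2^-\equiv 0$), so via \eqref{eqn:fx}--\eqref{eqn:fz} both Hardy-Hodge scalars $\tilde{m}_1^-$ and $\tilde{m}_2^-$ vanish; by Theorem \ref{thm:sunique1}(b) this is exactly silence from outside, which is equivalence of $(Q,\dd)$ and $(\overline{Q},\overline{\dd})$.

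The main subtlety I anticipate is the careful handling of the singular set $\{y\in\R^3:y\cdot\overline{\dd}=0\}$, the great circle where the factor $\frac{1}{x\cdot\overline{\dd}}$ blows up: the representations \eqref{eqn:mresrep} and the definition of $P_{Q,\dd,\overline{\dd}}$ are a priori only valid off this circle, and the content of the hypothesis $P_{Q,\dd,\overline{\dd}}\in L^2(\SS)$ is precisely that the potential singularity along this circle is integrable in the $L^2$ sense. I would note that this is where the support assumption $\textnormal{supp}(Q)\subset\Gamma$ interacts with the geometry of $\overline{\dd}$, and that it is exactly the invocation of Theorem \ref{thm:sunique1}(c) — equivalence of inside and outside silence for compactly supported fields — that lets me collapse the two Hardy-Hodge conditions \eqref{eqn:cond1}--\eqref{eqn:cond2} down to the pair of Helmholtz conditions \eqref{eqn:cond11}--\eqref{eqn:cond22} stated in scalar form. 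Beyond that, the proof is essentially bookkeeping: identifying $\overline{Q}=P_{Q,\dd,\overline{\dd}}$ as the forced candidate, checking it inherits the support and integrability properties, and reading off the divergence-free condition as the one piece of information not already contained in the radial constraint.
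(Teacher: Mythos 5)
Your proof is correct and takes essentially the same route as the paper: the paper's ``proof'' is precisely the discussion preceding the lemma (silence of $\m^-$, Theorem \ref{thm:sunique1}(b),(c) for the support constraint, the relations \eqref{eqn:fx}--\eqref{eqn:fz} to pass from \eqref{eqn:cond1}--\eqref{eqn:cond2} to $m_1^-\equiv m_2^-\equiv0$, and hence the representation \eqref{eqn:mresrep} with $\overline{Q}=P_{Q,\dd,\overline{\dd}}$), which is exactly what you reproduce for necessity. Your explicit converse---setting $\overline{Q}:=P_{Q,\dd,\overline{\dd}}$ and running the decomposition argument backwards to conclude silence from outside via Theorem \ref{thm:sunique1}(b)---is just the sufficiency direction the paper leaves implicit in its ``Summarizing'' step, so the two arguments coincide.
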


\begin{rem}\label{rem:smooth}
In particular, the lemma above implies that if $P_{Q,\dd,\overline{\dd}}\notin L^2(\SS)$ and $\overline{\dd}\not=\pm\dd$, then there exist no other susceptibility $\overline{Q}\in L^2(\SS)$ with $\textnormal{supp}(\overline{Q})\subset\Gamma$ such that $(Q,\dd)$ and $(\overline{Q},\overline{\dd})$ are equivalent from outside. This is a condition that should guarantee uniqueness for many geophysically relevant dipole induced magnetizations as it would require the susceptibility $Q$ to be zero along a meridian.

However, in general, it is fairly easy to construct examples where non-uniqueness is given: Let $\dd\not=\pm\overline{\dd}\in\SS$ and assume $Q$ to be such that the function $P_{Q,\overline{\dd}}$ given by $P_{Q,\overline{\dd}}(x)=\frac{Q(x)}{x\cdot\overline{\dd}}$ is continuously differentiable on $\SS$. Then, in order for a $\overline{Q}$ with $\textnormal{supp}(\overline{Q})\subset\Gamma$ to exist such that $(Q,\dd)$ and $(\overline{Q},\overline{\dd})$ are equivalent from outside, Lemma \ref{thm:unique} implies that $\m^-$ as in \eqref{eqn:mresrep} has to be surface divergence-free, i.e.,
\small\begin{align}
 \left(\overline{\dd}(x\cdot \dd)-\dd(x\cdot\overline{\dd})\right)\cdot\nabla^*P_{Q,\overline{\dd}}(x)= \nabla^*\cdot\left(P_{Q,\overline{\dd}}(x)\left(\overline{\dd}(x\cdot \dd)-\dd(x\cdot\overline{\dd})\right)\right)=0,\quad x\in{\SS}.\label{eqn:ode}
\end{align}\normalsize
A closer investigation of \eqref{eqn:ode} shows that the spherical circles $\mathcal{C}_{t,\dd\times\overline{\dd}}=\{x\in{\SS}:x\cdot \frac{\dd\times\overline{\dd}}{|\dd\times\overline{\dd}|}=t\}$, $t\in[-1,1]$, represent the characteristic curves of the given differential equation and that $P_{Q,\overline{\dd}}$ has to be constant along these curves. Thus, $P_{Q,\overline{\dd}}$ has to be of the form $P_{Q,\overline{\dd}}(x)=P\big(x\cdot\frac{\dd\times\overline{\dd}}{|\dd\times\overline{\dd}|}\big)$, where $P:[-1,1]\to \R$ is a continuously differentiable function with $P(t)=0$ for all $t\in[-1,1]$ that satisfy $\mathcal{C}_{t,\dd\times\overline{\dd}}\cap({\SS}\setminus\Gamma)\not=\emptyset$. Given such a $P$, we see from Lemma \ref{thm:unique} that
 \begin{align*}
  &Q(x)=P\left(x\cdot\frac{\dd\times\overline{\dd}}{|\dd\times\overline{\dd}|}\right)x\cdot \overline{\dd},\quad  x\in{\SS},
  \\&\overline{Q}(x)=P\left(x\cdot\frac{\dd\times\overline{\dd}}{|\dd\times\overline{\dd}|}\right)x\cdot\dd,\quad x\in{\SS},
 \end{align*}
satisfy $\textnormal{supp}(Q),\textnormal{supp}(\overline{Q})\subset\Gamma$ and that $(Q,\dd)$ and $(\overline{Q},\overline{\dd})$ are equivalent from outside. An illustration of two such magnetizations, with $\Gamma\subset\SS$ being the eastern hemisphere, is shown in Figure \ref{fig:suszex}.
\end{rem}

\begin{figure}\begin{center}
 \includegraphics[scale=0.38]{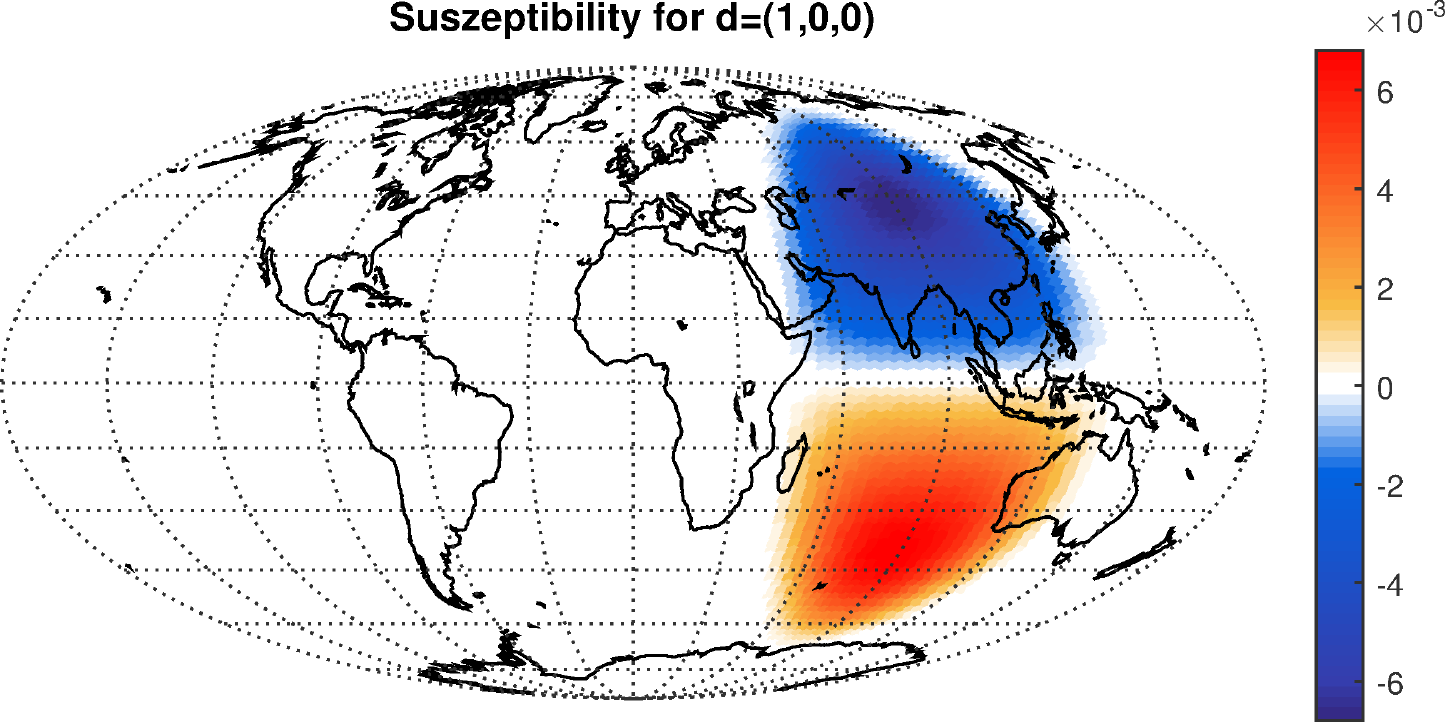}\quad\includegraphics[scale=0.38]{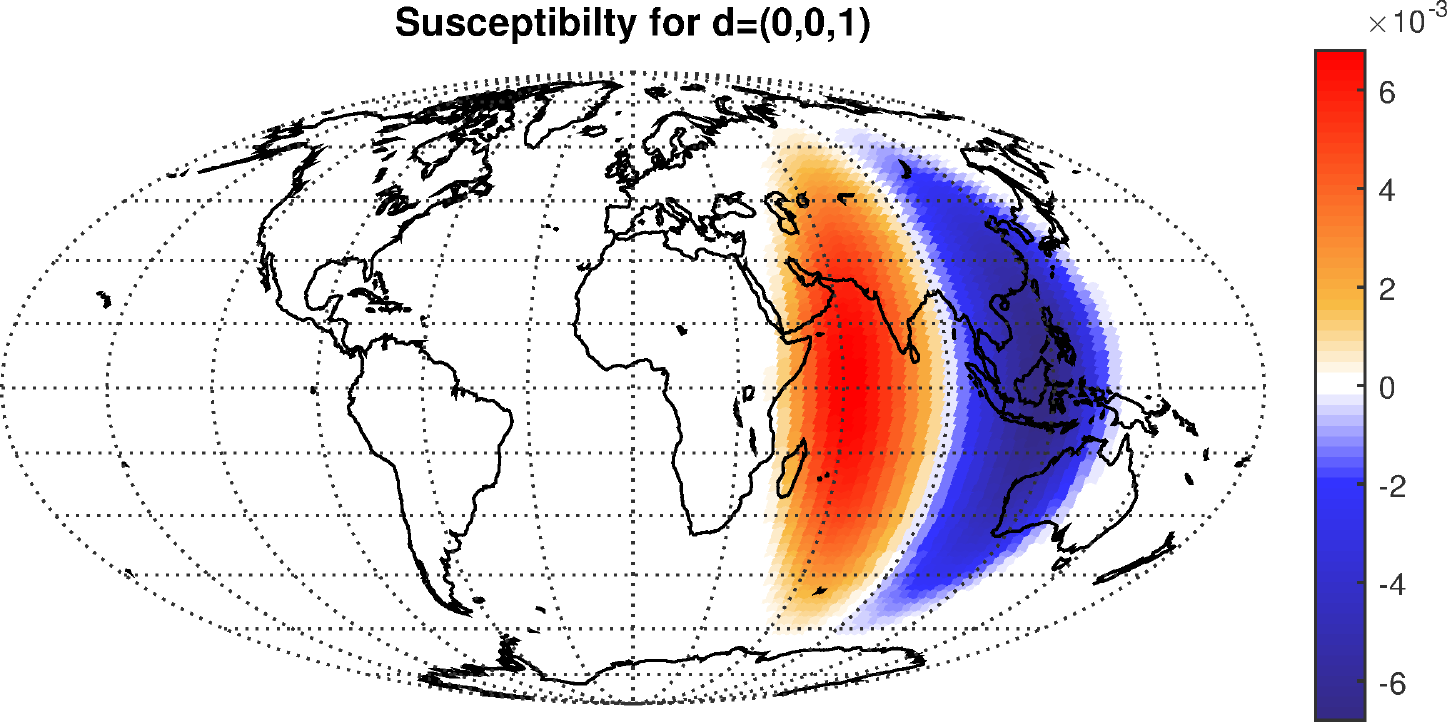}\quad\includegraphics[scale=0.38]{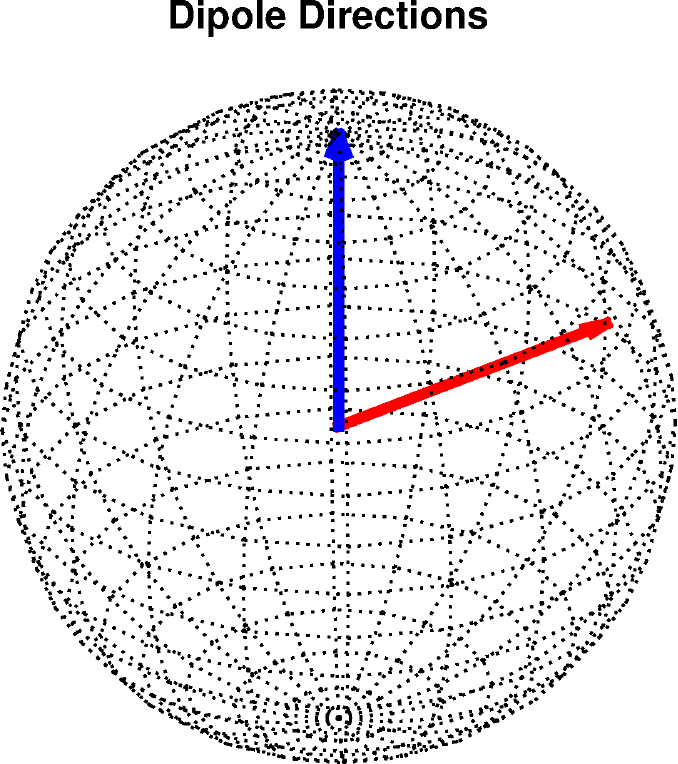} 
 \end{center}
 \caption{Illustration of two dipole induced magnetizations of the form described in Remark \ref{rem:smooth} that are equivalent from outside. We chose the auxiliary function to be $P(t)=e^{-\frac{1}{t^{2}}}\chi_{[0,1]}(t)$, the region $\Gamma=\{x\in{\SS}:(0,-1,0)^T\cdot x\geq0\}$ to be the eastern hemisphere, and the dipole directions  $\dd=(0,0,1)^T$ and $\overline{\dd}=(1,0,0)^T$, respectively. \emph{Left}: susceptibility $Q$, \emph{Center}: susceptibility $\overline{Q}$, \emph{Right}: dipole directions $\dd$ (blue) and $\overline{\dd}$ (red).}\label{fig:suszex}
\end{figure}

Let now $\overline{\m}\in H_1(\SS,\R^3)$, with $\textnormal{supp}(\overline{\m})\subset\Gamma$, and $V=V[\overline{\m}]$ be the corresponding potential on $\SS_R$. We are interested in finding out if there exists a dipole induced magnetization of the form \eqref{eqn:m} that produces the same magnetic potential on $\SS_R$ as $\overline{\m}$ (which is not necessarily of dipole induced form). If there exist $Q\in H_1(\SS)$, with $\textnormal{supp}(Q)\subset\Gamma$, and $\dd\in\SS$ such that the corresponding magnetization $\m(x)= Q(x)(3(x\cdot \dd)x-\dd)$ is equivalent from outside to $\overline{\m}$ (we also say $(Q,\dd)$ is equivalent from outside to $\overline{\m}$), then Theorem \ref{thm:sunique1} together with the Helmholtz and the Hardy-Hodge decomposition tells us
\begin{align}
 x\cdot\overline{\m}(x)=\overline{m}_1(x)=m_1(x)=2Q(x)x\cdot \dd,\quad x\in{\SS},\label{eqn:m11}
\end{align}
where $\overline{m}_1$ and $m_1$ are the radial contributions of $\overline{\m}$ and $\m$, respectively. The higher smoothness assumption of $\overline{\m},\m\in H_1(\SS,\R^3)$ is only required to allow differentiation of $\overline{\m}$ and $\m$ later on. From \eqref{eqn:m11} we get for the susceptibility $Q$ that 
\begin{align}
Q(x)=\frac{x\cdot\overline{\m}(x)}{2x\cdot \dd},\quad x\in\SS\setminus \{y\in \R^3:y\cdot\dd=0\}.\label{eqn:m12}
\end{align}
It remains to find the dipole direction $\dd\in\SS$. Again, referring to Theorem \ref{thm:sunique1} and the decompositions from Theorems \ref{thm:helmdecomp} and \ref{thm:shhdecomp}, we get, additionally to $\overline{\m}_1=\m_1$, that $\overline{\m}_2=\nabla^*\overline{m}_2=\nabla^*m_2=\m_2$. This yields, together with \eqref{eqn:m12},
\begin{align*}
 \Delta^*\overline{m}_2(x)&=\nabla^*\cdot\left(\overline{\m}(x)-\overline{\m}_1(x)\right)\nonumber
 =\nabla^*\cdot\left(Q(x)(3(x\cdot \dd)x-\dd)-\m_1(x)\right)\nonumber
 \\&=\nabla^*\cdot\left(\frac{3}{2}x\, \overline{m}_1(x)-\frac{\overline{m}_1(x)}{2(x\cdot \dd)}\dd-x\,\overline{m}_1(x)\right)\nonumber
 \\&=\overline{m}_1( x )-\frac{d}{2( x \cdot \dd)}\cdot\nabla^*\overline{m}_1( x )+\frac{\dd}{2( x \cdot \dd)^2}\cdot(\dd-( x \cdot \dd) x )\overline{m}_1( x )\nonumber
 \\&=\frac{\overline{m}_1( x )}{2}-\frac{\dd}{2( x \cdot \dd)}\cdot\nabla^*\overline{m}_1( x )+\frac{\overline{m}_1( x )}{2( x \cdot \dd)^2},\quad x \in\SS\setminus \{y\in \R^3:y\cdot\dd=0\}.
\end{align*}
Multiplying the above by $2(x\cdot \dd)^2$ leads to the condition
\begin{align}
 2(x\cdot \dd)^2\Delta^*\overline{m}_2(x)-(x\cdot \dd)^2\overline{m}_1(x)+(x\cdot \dd)\dd\cdot\nabla^*\overline{m}_1(x)-\overline{m}_1(x)=0.\label{eqn:poly2}
\end{align}
Eventually, integrating the square of the left hand side of \eqref{eqn:poly2} over $\SS$, we find that the dipole direction $\dd\in{\SS}$ has to be a zero of the following fourth-order polynomial
\small\begin{align}
 T_{\overline{g},\overline{\mathbf{h}},\overline{m}_1}(\dd)=&\sum_{i,j,k,l=1}^3 d_id_jd_kd_l\int_{\SS}  y _i y _j y _k y _l |\overline{g}( y )|^2 {\mathrm{d}}\omega( y )+ \sum_{i,j,k,l=1}^3 d_id_jd_kd_l\int_{\SS}  y _i y _j\,\overline{h}_k( y )\overline{h}_l( y )  {\mathrm{d}}\omega( y )\nonumber
 \\&+2\sum_{i,j,k,l=1}^3 d_id_jd_kd_l\int_{\SS}  y _i y _j y _k\,\overline{h}_l( y ) \overline{g}( y ) {\mathrm{d}}\omega( y )-2\sum_{i,j=1}^3 d_id_j\int_{\SS}  y _i y _j\,\overline{m}_1( y )\overline{g}( y ) {\mathrm{d}}\omega( y )\nonumber
 \\&-2\sum_{i,j=1}^3 d_id_j\int_{\SS}  y _i\,\overline{h}_j( y )\overline{m}_1( y ) {\mathrm{d}}\omega( y )+\int_{\SS}|\overline{m}_1( y )|^2 {\mathrm{d}}\omega( y ),\label{eqn:polyd}
\end{align}\normalsize
where $\overline{\mathbf{h}}=(\overline{h}_1,\overline{h}_2,\overline{h}_3)^T=\nabla^*\overline{m}_1$, $\overline{g}=2\Delta^*\overline{m}_2-\overline{m}_1$, and $\dd=(d_1,d_2,d_3)^T,y=(y_1,y_2,y_3)^T$. We can now summarize these observations in  the following theorem. 

\begin{thm}\label{thm:reconst}
Let $\overline{\m}\in H_1({\SS},\R^3)$ with $\textnormal{supp}(\overline{\m})\subset\Gamma$, for some region $\Gamma\subset{\SS}$ with $\overline{\Gamma}\not={\SS}$. Furthermore, we set $P_{\overline{\m},\dd}(x)=\frac{x\cdot\overline{\m}(x)}{2x\cdot \dd}$.  Then there exists a susceptibility $Q\in H_1({\SS})$ with $\textnormal{supp}(Q)\subset\Gamma$ and a dipole direction $\dd\in{\SS}$ such that $(Q,\dd)$ is equivalent to $\overline{\m}$ from outside if and only if there exists a $\dd\in{\SS}$ that satisfies 
\begin{align*}
T_{\overline{g},\overline{\mathbf{h}},\overline{m}_1}(\dd)=0
\end{align*}
and $P_{\overline{\m},\dd}\in H_1({\SS})$, where  $T_{\overline{g},\overline{\mathbf{h}},\overline{m}_1}$ is given as in \eqref{eqn:polyd}.
\end{thm}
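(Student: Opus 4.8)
The plan is to recognize that the statement repackages the chain of necessary conditions derived immediately above it, and then to check that those conditions are also sufficient. Both implications rest on a single clean equivalence which I would isolate first: for any two magnetizations supported in $\Gamma$ (with $\overline{\Gamma}\neq\SS$), equivalence from outside holds \emph{if and only if} their first two Helmholtz scalars agree, $m_1=\overline{m}_1$ and $m_2=\overline{m}_2$. To prove it I would apply Theorem \ref{thm:sunique1} to the difference $\m-\overline{\m}$, which is again supported in $\Gamma$. Equivalence from outside means $\m-\overline{\m}$ is silent from outside, so its type-$(2)$ Hardy--Hodge scalar vanishes by part (b); by part (c) the localization upgrades this to silence from inside, so its type-$(1)$ scalar vanishes as well. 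The linear relations \eqref{eqn:fx}--\eqref{eqn:fz} then show that $\tilde F_1=\tilde F_2=0$ is equivalent to $f_1=f_2=0$, which for $\m-\overline{\m}$ reads exactly $m_1=\overline{m}_1$, $m_2=\overline{m}_2$; conversely $f_1=f_2=0$ already forces $\tilde F_2=0$ through \eqref{eqn:fy}, hence silence from outside, so the equivalence is two-sided (and its ``easy'' direction needs no localization).

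\textbf{Necessity.} If $(Q,\dd)$ is equivalent to $\overline{\m}$ from outside, the equivalence above applied to $\m(x)=Q(x)(3(x\cdot\dd)x-\dd)$ yields $m_1=\overline{m}_1$ and $m_2=\overline{m}_2$. The first identity is \eqref{eqn:m11}, $\overline{m}_1=2Q\,(x\cdot\dd)$, which by \eqref{eqn:m12} forces $Q=P_{\overline{\m},\dd}$; as $Q\in H_1(\SS)$ this gives $P_{\overline{\m},\dd}\in H_1(\SS)$. Feeding this $Q$ into the surface-divergence computation that expresses $\Delta^* m_2=\nabla^*\cdot(\m-\m_1)$ in terms of $\overline{m}_1$ and $\dd$, and equating the result to $\Delta^*\overline{m}_2$, reproduces the displayed calculation and, after multiplying by $2(x\cdot\dd)^2$, the pointwise identity \eqref{eqn:poly2}. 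The decisive observation is that \eqref{eqn:polyd} is by construction the squared $L^2(\SS)$-norm of the left-hand side of \eqref{eqn:poly2}; since that left-hand side vanishes almost everywhere, we conclude $T_{\overline{g},\overline{\mathbf{h}},\overline{m}_1}(\dd)=0$.

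\textbf{Sufficiency.} Conversely, suppose $\dd\in\SS$ satisfies $T_{\overline{g},\overline{\mathbf{h}},\overline{m}_1}(\dd)=0$ together with $P_{\overline{\m},\dd}\in H_1(\SS)$. I would simply \emph{set} $Q:=P_{\overline{\m},\dd}$; then $Q\in H_1(\SS)$, and since the numerator $x\cdot\overline{\m}(x)$ vanishes wherever $\overline{\m}$ does, $\textnormal{supp}(Q)\subset\textnormal{supp}(\overline{\m})\subset\Gamma$. Putting $\m(x)=Q(x)(3(x\cdot\dd)x-\dd)$, its radial component is $m_1=x\cdot\m=2Q\,(x\cdot\dd)=x\cdot\overline{\m}=\overline{m}_1$. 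Because $T(\dd)=0$ is the vanishing of the integral of a square, the left-hand side of \eqref{eqn:poly2} is zero almost everywhere; dividing by $2(x\cdot\dd)^2$ and running the divergence computation in reverse gives $\Delta^* m_2=\Delta^*\overline{m}_2$, and the mean-zero normalization of the Helmholtz scalars (equivalently, injectivity of $\Delta^*$ on mean-zero functions) yields $m_2=\overline{m}_2$. Thus $\m-\overline{\m}$ is supported in $\Gamma$ with vanishing first two Helmholtz scalars, so by the equivalence of the first paragraph it is silent from outside, i.e. $(Q,\dd)$ is equivalent to $\overline{\m}$ from outside.

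The main obstacle I anticipate is the regularity bookkeeping around the nodal great circle $\{x\in\SS:x\cdot\dd=0\}$, where the quotient defining $P_{\overline{\m},\dd}$ may be singular. This is exactly why $P_{\overline{\m},\dd}\in H_1(\SS)$ has to be imposed separately rather than deduced, and why $\overline{\m}\in H_1(\SS,\R^3)$ is assumed: the smoothness legitimizes the surface-gradient and Beltrami manipulations leading to \eqref{eqn:poly2}, and in the sufficiency direction it guarantees that the almost-everywhere identity $\Delta^*(m_2-\overline{m}_2)=0$ off a null set genuinely forces $m_2=\overline{m}_2$ on all of $\SS$. I would therefore concentrate the care on checking that every step of the divergence computation producing \eqref{eqn:poly2} is reversible under these $H_1$ hypotheses, whereas the algebraic expansion of the squared integrand into the six terms of \eqref{eqn:polyd} is routine.
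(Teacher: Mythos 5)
Your proposal is correct and takes essentially the same approach as the paper: the paper's proof of Theorem \ref{thm:reconst} is precisely the derivation preceding it (equations \eqref{eqn:m11}--\eqref{eqn:polyd}), which uses Theorem \ref{thm:sunique1} with the localization part (c) and the Helmholtz/Hardy--Hodge relations to force $m_1=\overline{m}_1$, $m_2=\overline{m}_2$, hence $Q=P_{\overline{\m},\dd}$ and \eqref{eqn:poly2}, whose squared $L^2$-norm is $T_{\overline{g},\overline{\mathbf{h}},\overline{m}_1}(\dd)$. Your write-up merely makes explicit what the paper leaves implicit, namely the reverse (sufficiency) direction obtained by setting $Q:=P_{\overline{\m},\dd}$ and undoing the divergence computation.
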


\begin{rem}\label{rem:proc}
On the one hand, Theorem \ref{thm:reconst} provides a means of deciding whether a given potential $V$ on $\SS_R$ can be produced by a dipole induced magnetization of the form \eqref{eqn:m}. Namely, one first inverts $V$ to find a general magnetization $\overline{\m}$ such that $V=V[\overline{\m}]$ on $\SS_R$. Afterwards one can use Theorem \ref{thm:reconst} to check whether $V$ can also be expressed in the form $V[Q,\dd]$. On the other hand, Theorem \ref{thm:reconst} can give hints at the uniqueness of the susceptibility $Q$ and dipole direction $\dd$: if $T_{\overline{g},\overline{\mathbf{h}},\overline{m}_1}$ has only one zero (up to the sign), then uniqueness is given. 
\end{rem}


\section{Band-Limited Induced Magnetizations}\label{sec:bl}

Analogous questions as in Section \ref{sec:unique} are investigated under the assumption that the magnetization $\m$ is band-limited (and not spatially localized in the sense $\textnormal{supp}(\m)\subset\Gamma$). 

\begin{defi}\label{def:bl}
 We call a function $\f\in L^2(\SS,\R^3)$ \emph{band-limited} if there exists a $N\in \N_0$ such that
 \begin{align*}
  \big(\tilde{f}^{(i)}\big)^\wedge(n,k)=\int_{\SS}\f(y)\cdot \tilde{\y}^{(i)}_{n,k}(y)d\omega(y)=0,\quad n\geq N+1,k=-n,\ldots,n,\,i=1,2,3.,
 \end{align*}
 i.e., all Fourier coefficients vanish from some degree $N+1$ on. $N$ is called the band-limit of $\f$. A scalar function $f\in L^2(\SS)$ is called \emph{band-limited} if there exists a $N\in \N_0$ such that
 \begin{align*}
  f^\wedge(n,k)=\int_{\SS}f(y) Y_{n,k}(y)d\omega(y)=0,\quad n\geq N+1,k=-n,\ldots,n.
 \end{align*}
\end{defi}
We start by computing the Fourier expansion of magnetizations $\m$ of the form \eqref{eqn:m}. The inducing vectorial dipole field part can be expressed as
\begin{align}
 3(\dd\cdot x)x-\dd&=\sum_{k=-1}^1 \frac{8\pi}{3}Y_{1,k}(\dd)\left(\y^{(1)}_{1,k}(x)-\frac{1}{\sqrt{2}}\y^{(2)}_{1,k}(x)\right),\quad x\in\SS.
\end{align}
For a susceptibility $Q\in L^2(\SS)$ with Fourier expansion $Q=\sum_{m=0}^\infty\sum_{l=-m}^m Q^\wedge(m,l)Y_{m,l}$  one can then use the calculus of Wigner symbols (e.g., \cite{edmonds57,james73}; a notation compatible with ours is used in \cite{fengler05}) to obtain the following expression for the corresponding dipole induced magnetization $\m(x)= Q(x)\left(3(\dd\cdot x)x-\dd\right)$:
\begin{align}
 \m(x)&=\frac{8\pi}{3}\sum_{i=1}^3\sum_{p=0}^\infty \sum_{q=-p}^p\sum_{m=p-1}^{p+1}\sum_{l=q-1}^{q+1}Q^\wedge(m,l)Y_{1,k_{l,q}}(\dd)\alpha_{p,q,m,l}^{(i)}{\y}^{(i)}_{pq}(x),\quad x\in\SS,\label{eqn:mblt}
\end{align}\normalsize
where $k_{l,q}=-1$ if $l=q+1$, $k_{l,q}=0$ if $l=q$, and $k_{l,q}=1$ if $l=q-1$, and 
\begin{align}
 \alpha_{p,q,m,l}^{(1)}=&(-1)^q\sqrt{\frac{3(2m+1)(2p+1)}{4\pi}}\begin{pmatrix}
                                                          1&p&m\\0&0&0
                                                         \end{pmatrix}\begin{pmatrix}
                                                         1&p&m\\k_{l,q}&-q&l
                                                         \end{pmatrix}\label{eqn:a1}
\\\alpha_{p,q,m,l}^{(2)}=&-\frac{1}{4\sqrt{p(p+1)}}(2-m(m+1)+p(p+1))\alpha_{p,q,m,l}^{(1)}
\\\alpha_{p,q,m,l}^{(3)}=&i(-1)^{k_{l,q}}(2p+1)\sqrt{\frac{2m+1}{4\pi}}\begin{pmatrix}
                                                          p&m&1\\-q&l&k_{l,q}
                                                         \end{pmatrix}\label{eqn:a3}
                                                         \\&\times\left(\begin{Bmatrix}
                                                                                           p&p&1\\0&1&m
                                                                                          \end{Bmatrix}\begin{pmatrix}
                                                                                           p&m&0\\0&0&0
                                                                                          \end{pmatrix}
                                                                                          +\sqrt{\frac{5}{2}}\begin{Bmatrix}
                                                                                           p&p&1\\2&1&m
                                                                                          \end{Bmatrix}\begin{pmatrix}
                                                                                           p&m&2\\0&0&0
                                                                                          \end{pmatrix}\right).\nonumber
\end{align}\normalsize
The brackets $\big\{{\cdots\atop\cdots}\big\}$ denote Wigner-6j symbols while $\big({\cdots\atop\cdots}\big)$ denote Wigner-3j symbols. It is to note that round brackets are also used for matrices, however, it should be clear from the context if we mean Wigner-3j symbols or matrices. 

An expansion of the magnetization $\m$ in terms of $\tilde{\y}^{(i)}_{n,k}$, which reflects the Hardy-Hodge decomposition from Theorem \ref{thm:shhdecomp}, can be directly obtained from \eqref{eqn:mblt}, \eqref{eqn:ynkynkt1}, and \eqref{eqn:ynkynkt2}. This is summarized in the following proposition.

\begin{prop}\label{prop:mylt}
Let $Q\in L^2(\SS)$, $\dd\in\SS$, and $\alpha^{(i)}_{p,q,m,l}$, $i=1,2,3$, be the coefficients as in \eqref{eqn:a1}--\eqref{eqn:a3}. Then the dipole induced magnetization $\m(x)= Q(x)\left(3(\dd\cdot x)x-\dd\right)$ has the Fourier expansion
\begin{align*}
 \m(x)&= \frac{8\pi}{3}\sum_{i=1}^3\sum_{p=0}^\infty \sum_{q=-p}^p\sum_{m=p-1}^{p+1}\sum_{l=q-1}^{q+1}Q^\wedge(m,l)Y_{1,k_{l,q}}(\dd)\tilde{\alpha}_{p,q,m,l}^{(i)}\tilde{\y}^{(i)}_{pq}(x),\quad x\in\SS,
\end{align*}
with
\begin{align*}
\tilde{\alpha}_{p,q,m,l}^{(1)}&=\sqrt{\frac{p+1}{2p+1}}\alpha^{(1)}_{p,q,m,l}-\sqrt{\frac{p}{2p+1}}\alpha^{(2)}_{p,q,m,l}
\\\tilde{\alpha}_{p,q,m,l}^{(2)}&=\sqrt{\frac{p}{2p+1}}\alpha^{(1)}_{p,q,m,l}+\sqrt{\frac{p+1}{2p+1}}\alpha^{(2)}_{p,q,m,l}
\\\tilde{\alpha}_{p,q,m,l}^{(3)}&=\alpha^{(3)}_{p,q,m,l}.
 \end{align*}
The properties of the Wigner-3j symbols yield that $\tilde{\alpha}_{p,q,m,l}^{(1)}=\tilde{\alpha}_{p,q,m,l}^{(2)}=0$ if $p=m$, so that the fourth sum in the above representation of $\m$ has contributions only for $m\in\{p-1,p+1\}$. Any Fourier coefficients $Q^\wedge(m,l)=\int_\SS Q(y)Y_{m,l}(y)d\omega(y)$ with $l\geq m+1$ or $l\leq -m-1$ are zero by definition. 
\end{prop}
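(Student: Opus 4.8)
The plan is to treat the proposition as a change of basis applied to the already-established expansion \eqref{eqn:mblt}, supplemented by two observations coming from the structure of the Wigner-3j symbols. Since the expansion of $\m$ in the ${\y}^{(i)}_{n,k}$-system is given by \eqref{eqn:mblt}, the only thing left for the main assertion is to rewrite the type-$(1)$ and type-$(2)$ contributions in the $\tilde{\y}^{(i)}_{n,k}$-system and to read off the new coefficients.

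First I would observe that the linear map in \eqref{eqn:ynkynkt1}--\eqref{eqn:ynkynkt2} sending $({\y}^{(1)}_{n,k},{\y}^{(2)}_{n,k})$ to $(\tilde{\y}^{(1)}_{n,k},\tilde{\y}^{(2)}_{n,k})$ is represented by the matrix $\left(\begin{smallmatrix} \sqrt{\frac{n+1}{2n+1}} & -\sqrt{\frac{n}{2n+1}} \\ \sqrt{\frac{n}{2n+1}} & \sqrt{\frac{n+1}{2n+1}} \end{smallmatrix}\right)$, which is orthogonal with determinant one (its two column entries square-sum to one), so its inverse is simply its transpose. Hence ${\y}^{(1)}_{n,k} = \sqrt{\frac{n+1}{2n+1}}\tilde{\y}^{(1)}_{n,k} + \sqrt{\frac{n}{2n+1}}\tilde{\y}^{(2)}_{n,k}$ and ${\y}^{(2)}_{n,k} = -\sqrt{\frac{n}{2n+1}}\tilde{\y}^{(1)}_{n,k} + \sqrt{\frac{n+1}{2n+1}}\tilde{\y}^{(2)}_{n,k}$, while ${\y}^{(3)}_{n,k} = \tilde{\y}^{(3)}_{n,k}$. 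Substituting these into \eqref{eqn:mblt} and collecting the coefficients of $\tilde{\y}^{(1)}_{pq}$ and $\tilde{\y}^{(2)}_{pq}$ reproduces exactly the stated linear combinations $\tilde{\alpha}^{(1)}_{p,q,m,l}$ and $\tilde{\alpha}^{(2)}_{p,q,m,l}$, whereas $\tilde{\alpha}^{(3)}_{p,q,m,l} = \alpha^{(3)}_{p,q,m,l}$ is immediate. This is a routine two-by-two inversion, so I expect no difficulty here.

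For the vanishing claim I would invoke the standard parity rule that a Wigner-3j symbol with all lower entries equal to zero vanishes whenever the sum of its upper entries is odd. The coefficient $\alpha^{(1)}_{p,q,m,l}$ carries the factor $\left(\begin{smallmatrix} 1 & p & m \\ 0 & 0 & 0 \end{smallmatrix}\right)$, and for $p=m$ the sum $1+p+m=1+2p$ is odd, so $\alpha^{(1)}_{p,q,m,l}=0$. Since $\alpha^{(2)}_{p,q,m,l}$ is by its very definition a scalar multiple of $\alpha^{(1)}_{p,q,m,l}$, it vanishes simultaneously, and therefore both $\tilde{\alpha}^{(1)}_{p,q,m,l}$ and $\tilde{\alpha}^{(2)}_{p,q,m,l}$ vanish for $p=m$. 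Consequently the index $m=p$ contributes nothing to the type-$(1)$ and type-$(2)$ parts of the fourth sum, leaving only $m\in\{p-1,p+1\}$. One may additionally note, applying the same parity rule to $\left(\begin{smallmatrix} p & m & 2 \\ 0 & 0 & 0 \end{smallmatrix}\right)$ together with the triangle condition forcing $p=m$ in $\left(\begin{smallmatrix} p & m & 0 \\ 0 & 0 & 0 \end{smallmatrix}\right)$, that $\alpha^{(3)}_{p,q,m,l}$ conversely vanishes unless $m=p$, so the three contributions separate cleanly by type.

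Finally, the statement that $Q^\wedge(m,l)=0$ for $l\geq m+1$ or $l\leq -m-1$ is immediate from the definition of the scalar spherical harmonics $Y_{m,l}$, which only exist for $|l|\leq m$; it merely records which of the terms in the inner sums are genuinely present. The only place demanding care is the bookkeeping of the Wigner-symbol parity and triangle conditions; the genuinely heavy computation, namely the product expansion leading to \eqref{eqn:mblt} itself, lies upstream and is assumed here.
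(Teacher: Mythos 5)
Your proposal is correct and takes essentially the same route as the paper: the paper obtains the proposition precisely by rewriting the expansion \eqref{eqn:mblt} through the basis change \eqref{eqn:ynkynkt1}--\eqref{eqn:ynkynkt2} (whose inversion is trivial since the $2\times 2$ transformation is a rotation), and the vanishing claims likewise rest on the Wigner-3j parity rule applied to $\bigl(\begin{smallmatrix}1&p&m\\0&0&0\end{smallmatrix}\bigr)$ together with the fact that $\alpha^{(2)}_{p,q,m,l}$ is a scalar multiple of $\alpha^{(1)}_{p,q,m,l}$. Your additional observation that $\alpha^{(3)}_{p,q,m,l}$ conversely vanishes unless $m=p$ is a correct bonus not required for the statement.
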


Now we are in a place to characterize silent band-limited dipole induced magnetizations. Theorem \ref{thm:sunique1}(b) essentially states that a magnetization $\m$ is silent from outside if and only if all type-(2) Fourier coefficients $(\tilde{m}^{(2)})^\wedge(p,q)$ vanish. In consequence, the representation in Proposition \ref{prop:mylt} implies that $\m$ is silent from outside if and only if 
\small\begin{align}\label{eqn:silbl}
 \sum_{l=q-1}^{q+1}Q^\wedge(p-1,l)Y_{1,k_{l,q}}(\dd)\tilde{\alpha}_{p,q,p-1,l}^{(2)}+\sum_{l=q-1}^{q+1}Q^\wedge(p+1,l)Y_{1,k_{l,q}}(\dd)\tilde{\alpha}_{p,q,p+1,l}^{(2)}=0,\quad p\geq1,q=-p,\ldots,p.
\end{align}\normalsize
For $p\geq1$, $q=-p,\ldots,p$, and $l\in\{q-1,q,q+1\}$, we can compute from the representation in Proposition \ref{prop:mylt} that
\begin{align}\label{eqn:apqpm1l}
 \tilde{\alpha}_{p,q,p-1,l}^{(2)}&=\frac{(-1)^q}{2}(p-1)\sqrt{\frac{3(2p-1)}{4\pi p}}\begin{pmatrix}
                                                                       1&p&p-1\\0&0&0
                                                                      \end{pmatrix}
                                                                      \begin{pmatrix}
                                                                       1&p&p-1\\k_{l,q}&-q&l
                                                                      \end{pmatrix},
\end{align}
so that $\tilde{\alpha}_{p,q,p-1,l}^{(2)}=0$ if and only if $p=1$. Analogously, one can see that $\tilde{\alpha}_{p,q,p+1,l}^{(2)}\not=0$ for all $p\geq 1$, $q=-p,\ldots,p$, and $l\in\{q-1,q,q+1\}$. This leads us to the following statement.

\begin{lem}\label{lem:indmagbl}
Let $Q\in L^2(\SS)$ be band-limited and $\dd\in\SS$. If  $\overline{Q}\in L^2(\SS)$ is another band-limited susceptibility such that $(Q,\dd)$ and $(\overline{Q},\dd)$ are equivalent from outside, then all Fourier coefficients for degrees greater or equal to one coincide, i.e., $Q^\wedge(m,l)=\overline{Q}^\wedge(m,l)$ for all $m\geq 1$, $l=-m,\ldots,m$. 
\end{lem}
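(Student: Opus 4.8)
The plan is to reduce the statement to a silence condition for the difference and then exploit the recurrence \eqref{eqn:silbl} together with band-limitedness. First I would set $Q^-=Q-\overline{Q}$, which is again band-limited, say with band-limit $N$, and observe that since the two pairs share the same dipole direction $\dd$, linearity of $V[\,\cdot\,]$ gives that the residual magnetization $\m^-(x)=Q^-(x)(3(\dd\cdot x)x-\dd)$ is silent from outside. By Theorem \ref{thm:sunique1}(b) this is equivalent to the vanishing of all type-$(2)$ Fourier coefficients of $\m^-$, i.e.\ to the relations \eqref{eqn:silbl} with $Q$ replaced by $Q^-$, holding for every $p\geq 1$ and $q=-p,\ldots,p$.

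If $N=0$ there is nothing to prove, so assume $N\geq1$. The core is a downward induction on the degree $m$ establishing $(Q^-)^\wedge(m,l)=0$ for $m=N,N-1,\ldots,1$. Band-limitedness provides $(Q^-)^\wedge(m,l)=0$ for $m\geq N+1$. For the inductive step, assume $(Q^-)^\wedge(m',l)=0$ for all $m'\geq m+1$ and evaluate \eqref{eqn:silbl} (for $Q^-$) at level $p=m+1$: the sum carrying the degree-$(p+1)=(m+2)$ coefficients vanishes by hypothesis, leaving for each $q=-(m+1),\ldots,m+1$
\begin{align*}
 \sum_{l=q-1}^{q+1}(Q^-)^\wedge(m,l)\,Y_{1,k_{l,q}}(\dd)\,\tilde{\alpha}^{(2)}_{m+1,q,m,l}=0.
\end{align*}
Since $p=m+1\geq2$, the remark following \eqref{eqn:apqpm1l} guarantees $\tilde{\alpha}^{(2)}_{m+1,q,m,l}\neq0$, so the only unknowns in this relation are the degree-$m$ coefficients themselves.

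It remains to solve this tridiagonal-in-$l$ linear system for the $(Q^-)^\wedge(m,l)$, $l=-m,\ldots,m$; this is where I expect the only genuine difficulty, since the factors $Y_{1,k_{l,q}}(\dd)$ can vanish for special $\dd$. I would extract the information from the extreme equations. Recalling that $(Q^-)^\wedge(m,l)=0$ whenever $|l|>m$ (Proposition \ref{prop:mylt}), the equation at $q=m+1$ collapses to the single term $(Q^-)^\wedge(m,m)\,Y_{1,1}(\dd)\,\tilde{\alpha}^{(2)}_{m+1,m+1,m,m}=0$. Since $Y_{1,1}(\dd)$ and $Y_{1,-1}(\dd)$ vanish only simultaneously (they have equal modulus), there are just two cases. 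If $\dd$ is not a pole, then $Y_{1,1}(\dd)\neq0$ and one cascades through $q=m+1,m,\ldots,-(m-1)$, at each step peeling off $(Q^-)^\wedge(m,q-1)$ by means of the nonvanishing $Y_{1,1}(\dd)$ and $\tilde{\alpha}^{(2)}$, until every degree-$m$ coefficient is forced to zero. If $\dd=\pm(0,0,1)^T$, then $Y_{1,\pm1}(\dd)=0$ while $Y_{1,0}(\dd)\neq0$, so all off-diagonal factors drop out and the system becomes diagonal with entries $Y_{1,0}(\dd)\,\tilde{\alpha}^{(2)}_{m+1,q,m,q}\neq0$, again giving $(Q^-)^\wedge(m,l)=0$. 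In either case the induction step closes, and running $m$ down to $1$ yields $Q^\wedge(m,l)=\overline{Q}^\wedge(m,l)$ for all $m\geq1$, as claimed. Note that the argument never invokes level $p=1$, consistent with $\tilde{\alpha}^{(2)}_{1,q,0,l}=0$ leaving the degree-$0$ (constant) coefficient unconstrained.
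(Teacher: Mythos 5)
Your proof is correct and follows essentially the same route as the paper: reduce to the silence of the difference magnetization, invoke the type-$(2)$ characterization \eqref{eqn:silbl}, and run a downward induction anchored by the band-limit, using the non-vanishing of $\tilde{\alpha}^{(2)}_{p,q,p-1,l}$ for $p\geq 2$. Your explicit cascade through the equations (with the pole/non-pole case distinction on $\dd$) is just an unpacked, and if anything slightly more careful, version of the paper's argument that the tri-band matrices $\mathbf{M}^{\y_\dd}_{p-1}$ have full rank because at least one of their diagonals is entirely non-zero.
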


\begin{proof}
Let us assume for now that $(Q,\dd)$ is silent from outside. The equations in \eqref{eqn:silbl} can be rewritten in the form
\begin{align}\label{eqn:Msys}
 \mathbf{M}^{\y_\dd}_{p-1}\mathbf{q}_{p-1}+\mathbf{N}^{\y_\dd}_{p+1}\mathbf{q}_{p+1}=0,\quad p\geq 1,
\end{align}
where $ \mathbf{M}^{\y_\dd}_{p-1}\in\C^{(2p+1)\times (2p-1)}$, $\mathbf{N}^{\y_\dd}_{p+1}\in\C^{(2p+1)\times (2p+3)}$ are tri-band matrices and $\mathbf{q}_{p-1}\in \C^{2p-1}$, $\mathbf{q}_{p+1}\in \C^{2p+3}$ vectors. More precisely, the matrix $\mathbf{M}^{\y_\dd}_{p-1}$ and the vector $\mathbf{q}_{p-1}$ are of the form
\begin{align}\label{eqn:M}
 \mathbf{M}^{\y_\dd}_{p-1}=\begin{pmatrix}
                   a_{-p}&0&\cdots&0
                   \\b_{-p+1}&a_{-p+1}&\ddots&\vdots
                   \\c_{-p+2}&b_{-p+2}&\ddots&0
                   \\0&\ddots&\ddots&a_{p-2}
                   \\\vdots&\ddots&\ddots&b_{p-1}
                   \\0&\cdots&0&c_{p}
                  \end{pmatrix},\quad
                  \mathbf{q}_{p-1=}\begin{pmatrix}
                                   Q^\wedge(p-1,-(p-1))\\\vdots\\Q^\wedge(p-1,p-1)
                                  \end{pmatrix},
\end{align}
with $a_q=\tilde{\alpha}^{(2)}_{p,q,p-1,q+1}y_{\dd,1}$, $b_q=\tilde{\alpha}^{(2)}_{p,q,p-1,q}y_{\dd,2}$, $c_q=\tilde{\alpha}^{(2)}_{p,q,p-1,q-1}y_{\dd,3}$, and the auxiliary vector $\y_\dd=(y_{\dd,1},y_{\dd,2},y_{\dd,3})=(Y_{1,-1}(\dd),Y_{1,0}(\dd),Y_{1,1}(\dd))$.
The matrix $\mathbf{N}^{\y_\dd}_{p+1}$ has the form
\begin{align}\label{eqn:M2}
 \mathbf{N}^{\y_\dd}_{p+1}=\begin{pmatrix}
                   \gamma_{-p}&\beta_{-p}&\alpha_{-p}&0&\cdots&0
                   \\0&\gamma_{-p+1}&\beta_{-p+1}&\alpha_{-p+1}&\ddots&\vdots
                   \\\vdots&\ddots&\ddots&\ddots&\ddots&0
                   \\0&\cdots&0&\gamma_{p}&\beta_p&\alpha_p
                  \end{pmatrix},\quad
                  \mathbf{q}_{p+1=}\begin{pmatrix}
                                   Q^\wedge(p+1,-(p+1))\\\vdots\\Q^\wedge(p+1,p+1)
                                  \end{pmatrix},
\end{align}
with $\alpha_q=\tilde{\alpha}^{(2)}_{p,q,p+1,q+1}y_{\dd,1}$, $\beta_q=\tilde{\alpha}^{(2)}_{p,q,p+1,q}y_{\dd,2}$, $\gamma_q=\tilde{\alpha}^{(2)}_{p,q,p+1,q-1}y_{\dd,3}$. For $p\geq 1$ we have seen in \eqref{eqn:apqpm1l} that $\tilde{\alpha}_{p,q,p+1,l}^{(2)}\not=0$ and for $p\geq 2$ that $\tilde{\alpha}_{p,q,p-1,l}^{(2)}\not=0$. Furthermore, for any $\dd\in\SS$, at least one of the expressions $Y_{1,-1}(\dd)$, $Y_{1,0}(\dd)$, $Y_{1,1}(\dd)$ is non-zero. In consequence, for $p\geq 2$, the entries of at least one of the three main diagonals of $\mathbf{M}^{\y_\dd}_{p-1}$ are all non-zero, so that the matrix has full rank, i.e., $\textnormal{rank}(\mathbf{M}^{\y_\dd}_{p-1})=2p-1$. The same holds true for $\mathbf{N}^{\y_\dd}_{p+1}$.

Since $Q$ is band-limited, there must exist a $N\in\N_0$ such that $\mathbf{q}_{p+1}=0$, for $p\geq N$. Thus, iteratively, we obtain from \eqref{eqn:Msys} and the full rank of $\mathbf{M}^{\y_\dd}_{p-1}$ that any band-limited dipole induced magnetization that is silent from outside has to satisfy $\mathbf{q}_{p-1}=0$, for $p\geq 2$, i.e., $Q^\wedge(m,l)=0$ for all $m\geq 1$, $l=-m,\ldots,m$.

If $(Q,\dd)$ is not silent from outside but $Q$ and $\overline{Q}$ are two susceptibilities such that $(Q,\dd)$ and $(\overline{Q},\dd)$ are equivalent from outside, then the difference of the two corresponding magnetizations must be silent from outside, i.e., it must be satisfied that
\begin{align}
 \mathbf{M}^{\y_\dd}_{p-1}(\mathbf{q}_{p-1}-\overline{\mathbf{q}}_{p-1})+\mathbf{N}^{\y_\dd}_{p+1}(\mathbf{q}_{p+1}-\overline{\mathbf{q}}_{p+1})=0,
\end{align}
for all $p\geq 1$. Now, the previous considerations imply the statement of the lemma. 
\end{proof} 

\begin{rem}\label{rem:nullbl}
 Equation \eqref{eqn:silbl} contains contributions of the Fourier coefficient $Q^\wedge(0,0)$ only for the choice $p=1$. The observations in  \eqref{eqn:apqpm1l}, however, yield that $\tilde{\alpha}_{1,q,0,0}^{(2)}=0$, $q\in\{-1,0,1\}$, so that $Q^\wedge(0,0)$ does not have any effect on the magnetic potential $V[Q,\dd]$ on $\SS_R$, $R>1$. In other words, any constant susceptibility $Q$ leads to a dipole induced magnetization that is silent from outside. Lemma \ref{lem:indmagbl} implies that those are all silent band-limited dipole induced magnetizations.
 
If the particular dipole direction $\dd=(0,0,1)$ is chosen, then $Y_{1,k_{l,q}}(\dd)=0$ for $k_{l,q}\not=0$. For this setting, the equations \eqref{eqn:silbl} reduce to
\small\begin{align}\label{eqn:dneff}
 Q^\wedge(p-1,q)Y_{1,0}(\dd)\tilde{\alpha}_{p,q,p-1,q}^{(2)}+Q^\wedge(p+1,q)Y_{1,0}(\dd)\tilde{\alpha}_{p,q,p+1,q}^{(2)}=0,\quad p\in\N_0,q=-p,\ldots,p.
\end{align}\normalsize
Latter is essentially identical to the recursion relation that was obtained in \cite{maushaak03} to characterize silent magnetizations (which they called annihilators). In this sense, Lemma \ref{lem:indmagbl} and the first part of this remark are just slightly more general statements of these results.
\end{rem}

Next, we are interested in the equivalence of two dipole induced magnetizations with possibly different dipole directions. More precisely, for a given band-limited $Q\in L^2(\SS)$ and $\dd\in\SS$, we want to determine if there exists another susceptibility $\overline{Q}\in L^2(\SS)$ and dipole direction $\overline{\dd}\not=\pm\dd\in\SS$ such that  $(Q,\dd)$ and $(\overline{Q},\overline{\dd})$ are equivalent from outside (for $\overline{\dd}=\pm\dd$ this is, of course, always possible by Lemma \ref{lem:indmagbl} and Remark \ref{rem:nullbl}). Equivalence from outside means that the residual magnetization
\begin{align}\label{eqn:magind2}
  \m^-(x)=Q(\xi)\left(3(x\cdot \dd)x-\dd\right)-\overline{Q}(x)\left(3(x\cdot\overline{\dd})x-\overline{\dd}\right),\quad x\in\SS.
 \end{align}
 is silent from outside. According to \eqref{eqn:silbl} and \eqref{eqn:Msys} this is possible if and only if
\begin{align}\label{eqn:Msys2}
 \mathbf{M}^{\y_\dd}_{p-1}\mathbf{q}_{p-1}+\mathbf{N}^{\y_\dd}_{p+1}\mathbf{q}_{p+1}-\mathbf{M}^{\y_{\overline{\dd}}}_{p-1}\overline{\mathbf{q}}_{p-1}-\mathbf{N}^{\y_{\overline{\dd}}}_{p+1}\overline{\mathbf{q}}_{p+1}=0,\quad p\geq1.
\end{align}
The quantities $\mathbf{M}^{\y_\dd}_{p-1}$, $\mathbf{N}^{\y_\dd}_{p+1}$, $\mathbf{q}_{p\pm1}$ are defined as in \eqref{eqn:M} and \eqref{eqn:M2}. $\overline{\mathbf{q}}_{p\pm1}$ denotes the counterpart of $\mathbf{q}_{p\pm1}$ corresponding to $\overline{Q}$. Since  the susceptibilities $Q$, $\overline{Q}$ are assumed to be band-limited, there exists some $N\in\N_0$ such that $\mathbf{q}_{p+1}=\overline{\mathbf{q}}_{p+1}=0$ for all $p\geq N$, so that, for $p\in\{N,N+1\}$, \eqref{eqn:Msys2} reduces to 
\begin{align}\label{eqn:Msys3}
 \mathbf{M}^{\y_\dd}_{p-1}\mathbf{q}_{p-1}-\mathbf{M}^{\y_{\overline{\dd}}}_{p-1}\overline{\mathbf{q}}_{p-1}=0.
\end{align}
Now, given $\mathbf{q}_{p-1}$ and $\dd$, the first question to answer is if there exist $\overline{\mathbf{q}}_{p-1}\in\C^{2p-1}$ and a dipole direction $\overline{\dd}\not=\pm\dd\in\SS$ such that \eqref{eqn:Msys3} is satisfied. The system of linear equations is overdetermined, but from the proof of Lemma \ref{lem:indmagbl} we know that $\mathbf{M}^{\y_\dd}_{p-1}$ has full rank. From now on, we assume that $\overline{\dd}\in\SS\setminus\{(0,0,\pm1)\}$ because then $Y_{1,1}(\overline{\dd})\not=0$. This yields that the matrix $\widehat{\mathbf{M}}^{\y_{\overline{\dd}}}_{p-1}$, which is obtained from $\mathbf{M}^{\y_{\overline{\dd}}}_{p-1}$ by deleting the first two rows, is invertible. Analogously, $\widehat{\mathbf{M}}^{\y_\dd}_{p-1}$ denotes ${\mathbf{M}}^{\y_\dd}_{p-1}$ with its first to rows deleted. The uniquely determined candidate for $\overline{\mathbf{q}}_{p-1}\in\C^{2p-1}$, $p\in\{N,N+1\}$, is then obtained by
\begin{align}\label{eqn:solq}
 \overline{\mathbf{q}}_{p-1}=\left(\widehat{\mathbf{M}}^{\y_{\overline{\dd}}}_{p-1}\right)^{-1}\widehat{\mathbf{M}}^{\y_\dd}_{p-1}\mathbf{q}_{p-1}.
\end{align}
It remains to check whether \eqref{eqn:Msys3} is valid for this $\overline{\mathbf{q}}_{p-1}$, i.e., if 
\begin{align}\label{eqn:Msys4}
 \mathbf{M}^{\y_\dd}_{p-1}\mathbf{q}_{p-1}-\mathbf{M}^{\y_{\overline{\dd}}}_{p-1}\left(\widehat{\mathbf{M}}^{\y_{\overline{\dd}}}_{p-1}\right)^{-1}\widehat{\mathbf{M}}^{\y_\dd}_{p-1}\mathbf{q}_{p-1}=0
\end{align}
holds true for $p\in\{N,N+1\}$. By construction it actually suffices to check if the first two rows of the above system of equations hold true. From the structure of the inverse of upper triangular matrices, we find that 
$(\Pi_{q=-p+2}^p{c}_q(p,\y_{\overline{\dd}}))\big(\widehat{\mathbf{M}}^{\y_{\overline{\dd}}}_{p-1}\big)^{-1}$ is a matrix with entries that are polynomials with respect to $\mathbf{y}_{\overline{\dd}}=(Y_{1,-1}(\overline{\dd}),Y_{1,0}(\overline{\dd}),Y_{1,1}(\overline{\dd}))\in\C^3$ (the coefficients ${c}_q={c}_q(p,\y_{\overline{\dd}})$ are defined as in \eqref{eqn:M} and depend on $p$ and $\y_{\overline{\dd}}$). In conclusion, if $\overline{\dd}\in\SS\setminus\{(0,0,\pm1)\}$ is an admissible candidate for a dipole direction, then $\y_{\overline{\dd}}$ has to be a zero of the vector-valued polynomials $\mathbf{R}_{p,Q,\dd}$ given by
\small\begin{align}\label{eqn:blpoly1}
 \mathbf{R}_{p,Q,\dd}(\y)=(\Pi_{q=-p+2}^p{c}_q(p,\y))\left(\mathbf{M}^{\y_\dd}_{p-1}\mathbf{q}_{p-1}-\mathbf{M}^{\y}_{p-1}\left(\widehat{\mathbf{M}}^{\y}_{p-1}\right)^{-1}\widehat{\mathbf{M}}^{\y_\dd}_{p-1}\mathbf{q}_{p-1}\right),\quad \y\in\C^3, 
\end{align}\normalsize
for $p\in\{N,N+1\}$. It remains to check the cases $p=2,\ldots,N-1$ ($p=1$ is not of interest since the coefficients $Q^\wedge(0,0)$ and $\overline{Q}^\wedge(0,0)$ can be chosen arbitrarily according to Lemma \ref{lem:indmagbl} and Remark \ref{rem:nullbl}). In this case, the second and fourth summand in \eqref{eqn:Msys2} cannot be omitted and we get that $\mathbf{y}_{\overline{\dd}}$ additionally needs to be a zero of the polynomials $\mathbf{S}_{p,Q,\dd}$ given by
\small\begin{align}\label{eqn:blpoly2}
 \mathbf{S}_{p,Q,\dd}(\y)=(\Pi_{n=p}^{N}\Pi_{q=-n+2}^n{c}_q(n,\y))&\bigg(\mathbf{M}^{\y_\dd}_{p-1}\mathbf{q}_{p-1}+\mathbf{N}^{\y_\dd}_{p+1}\mathbf{q}_{p+1}-\mathbf{N}^{\y}_{p+1}\overline{\mathbf{q}}_{p+1}
 \\&-\mathbf{M}^{\y}_{p-1}\left(\widehat{\mathbf{M}}^{\y}_{p-1}\right)^{-1}\left(\widehat{\mathbf{M}}^{\y_\dd}_{p-1}\mathbf{q}_{p-1}+\widehat{\mathbf{N}}^\dd_{p+1}\mathbf{q}_{p+1}-\widehat{\mathbf{N}}^{\y}_{p+1}\overline{\mathbf{q}}_{p+1}\right)\bigg),\nonumber
\end{align}\normalsize
for $p=2,\ldots,N-1$. The additional product $\Pi_{n=p}^{N}$ is only included to guarantee that $\mathbf{S}_{p,Q,\dd}$ is a polynomial, although this is not crucial for our statements. The required vectors $\overline{\mathbf{q}}_{p+1}$ can be computed iteratively from the results of the previous steps: starting with  \eqref{eqn:solq} for $p\in\{N,N+1\}$ and continuing with
\begin{align}\label{eqn:solq2}
 \overline{\mathbf{q}}_{p-1}=\left(\widehat{\mathbf{M}}^{\y_{\overline{\dd}}}_{p-1}\right)^{-1}\left(\widehat{\mathbf{M}}^{\y_\dd}_{p-1}\mathbf{q}_{p-1}+\widehat{\mathbf{N}}^\dd_{p+1}\mathbf{q}_{p+1}-\widehat{\mathbf{N}}^{\y_{\overline{\dd}}}_{p+1}\overline{\mathbf{q}}_{p+1}\right),
\end{align}
for $p=N-1,\ldots,2$.

Eventually, we see that in order to determine if, for a given band-limited susceptibility $Q$ and dipole direction $\dd\in\SS$, there exists another band-limited susceptibility $\overline{Q}$ and dipole direction $\overline{\dd}\in\SS\setminus\{(0,0,\pm1)\}$ such that $(Q,\dd)$ and $(\overline{Q},\overline{\dd})$ are equivalent from outside, one possible way is to find common zeros of $\mathbf{R}_{p,Q,\dd}$, $p\in\{N,N+1\}$, and  $\mathbf{S}_{p,Q,\dd}$, $p=2,\ldots, N-1$. If a common zero $\y\in\C^3$ other than $\y=\y_{\pm\dd}$ exists and if it is of the form $\y=\y_{\overline{\dd}}=(Y_{1,-1}(\overline{\dd}),Y_{1,0}(\overline{\dd}),Y_{1,1}(\overline{\dd}))$, then a candidate for $\overline{\dd}$ has been found (and the corresponding suceptibility $\overline{Q}$ is determined up to a constant via the Fourier coefficients gathered in  \eqref{eqn:solq}, \eqref{eqn:solq2}). However, it is by no means true that all common zeros of $\mathbf{R}_{p,Q,\dd}$ and $\mathbf{S}_{p,Q,\dd}$ need to be representable in the form $(Y_{1,-1}(\overline{\dd}),Y_{1,0}(\overline{\dd}),Y_{1,1}(\overline{\dd}))$ in the first place.  Finally, the so far excluded case $\overline{\dd}=(0,0,\pm1)$ has to be checked separately (e.g., by choosing $\widehat{\mathbf{M}}^{\y_{\overline{\dd}}}_{p-1}$ to be the matrix that is obtained from ${\mathbf{M}}^{\y_{\overline{\dd}}}_{p-1}$ not by deleting the first two rows but by deleting the first and last row). 

\begin{rem}\label{rem:suszbl}
 From Lemma \ref{lem:indmagbl} and Remark \ref{rem:nullbl} it is clear that for constant susceptibilities $Q\equiv c$ and $\overline{Q}\equiv{c}$, with $c,{c}\not=0$, it holds that $(Q,\dd)$ and $(\overline{Q},\overline{\dd})$ are equivalent from outside for any $\dd,\overline{\dd}\in\SS$. A slightly more complex example for equivalent band-limited magnetizations would be for band-limit $N=1$. Let us choose $\dd=(0,0,1)$ and $\overline{\dd}=(1,0,0)$ and construct $\mathbf{q}_1$ and $\overline{\mathbf{q}}_1$ from \eqref{eqn:solq} and \eqref{eqn:Msys4}. Clearly, $\mathbf{q}_1$ needs to be in the nullspace of of the matrix $\mathbf{M}^{\y_\dd}_{1}-\mathbf{M}^{\y_{\overline{\dd}}}_{1}\left(\widehat{\mathbf{M}}^{\y_{\overline{\dd}}}_{1}\right)^{-1}\widehat{\mathbf{M}}^{\y_\dd}_{1}$, which is spanned by $\mathbf{q}_1=\big(-\frac{1}{\sqrt{2}},0,-\frac{1}{\sqrt{2}}\big)$. From \eqref{eqn:solq} we then obtain $\overline{\mathbf{q}}_1=(0,1,0)$. This leads us to band-limited susceptibilities 
 \begin{align}
  Q(x)&=-\frac{1}{\sqrt{2}}Y_{1,-1}(x)-\frac{1}{\sqrt{2}}Y_{1,1}(x),\quad x\in \SS,
  \\\overline{Q}(x)&=Y_{1,0}(x),\quad x\in \SS.
 \end{align}
 They are illustrated in Figure \ref{fig:suszex2}. We see that, by the procedure described in the previous paragraphs, it is easy to construct band-limited $(Q,\dd)$ and $(\overline{Q},\overline{\dd})$, with $\dd\not=\pm\overline{\dd}$, that are equivalent from outside. However, to check if, for a given $(Q,\dd)$, there exists another band-limited $\overline{Q}$ and $\overline{\dd}\not=\pm\dd$ such that $(Q,\dd)$ and $(\overline{Q},\overline{\dd})$ are equivalent from outside is somewhat more tedious. But essentially it boils down to finding zeros of polynomials.
\end{rem}

 \begin{figure}\begin{center}
 \includegraphics[scale=0.44]{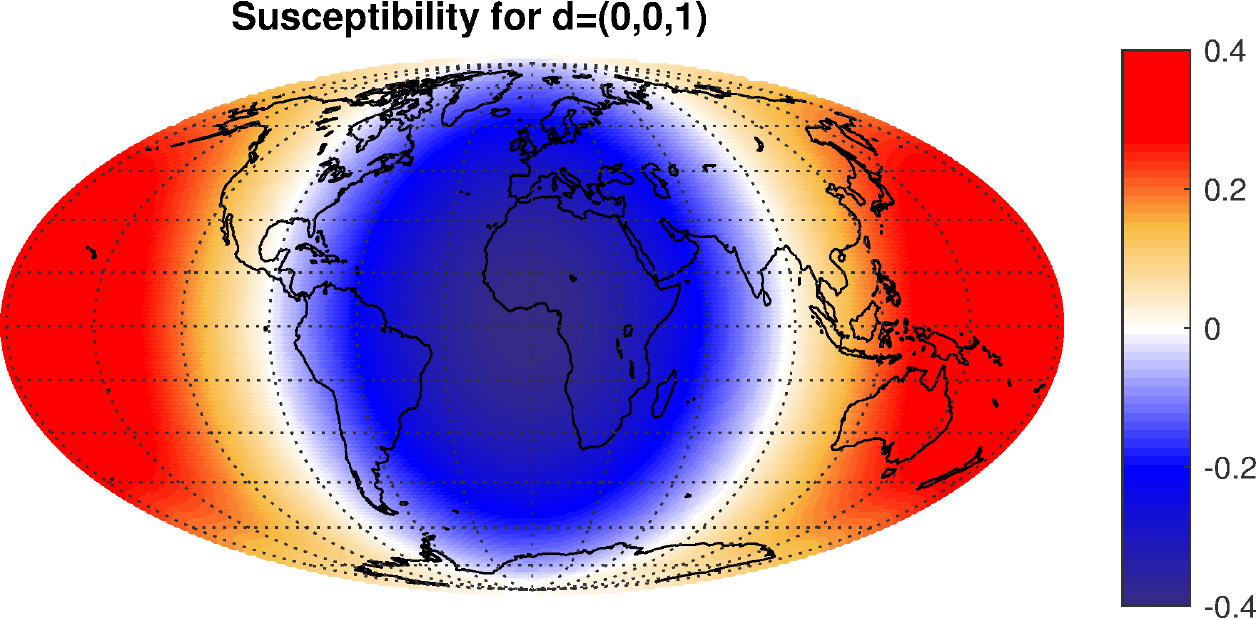}\quad\includegraphics[scale=0.46]{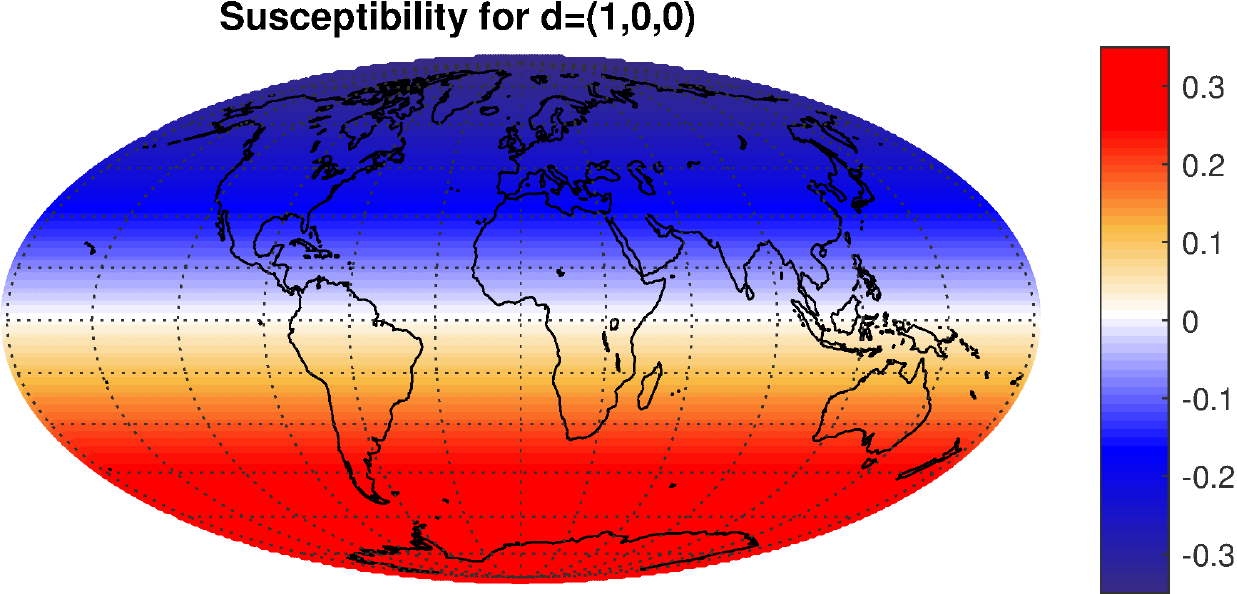}\quad\includegraphics[scale=0.38]{dip_direct-eps-converted-to.pdf} 
 \end{center}
 \caption{Illustration of the two band-limited dipole induced magnetizations with band-limit $N=1$ described Remark \ref{rem:suszbl} that are equivalent from outside. \emph{Left}: susceptibility $Q$, \emph{Center}: susceptibility $\overline{Q}$, \emph{Right}: dipole directions $\dd$ (blue) and $\overline{\dd}$ (red).}\label{fig:suszex2}
\end{figure}

To conclude this section, we summarize the previous considerations in the upcoming theorem. We actually formulate a slightly more general version that allows to decide if, for a given band-limited $\overline{\m}\in L^2(\SS,\R^3)$ (not necessarily of dipole induced form \eqref{eqn:m}), there exists a dipole direction $\dd\in\SS$ and a susceptibility $Q\in L^2(\SS)$ such that $(Q,\dd)$ and $\overline{\m}$ are equivalent from outside. This is essentially a band-limited counterpart to Theorem \ref{thm:reconst}.

\begin{thm}\label{thm:reconstbl}
Let $\overline{\m}\in L^2({\SS},\R^3)$ be band-limited with band-limit $N+1$.  Then there exists a band-limited susceptibility $Q\in L^2({\SS})$ and a dipole direction $\dd\in{\SS}$ such that $(Q,\dd)$ is equivalent to $\overline{\m}$ from outside if and only if there exists a vector $\y\in\C^3$ that is a zero of the vector-valued polynomial
\begin{align*}
\mathbf{T}_{\overline{\m}}=\mathbf{R}_{N+1,\overline{\m}}^2+\mathbf{R}_{N,\overline{\m}}^2+\sum_{p=2}^{N-1}\mathbf{S}_{p,\overline{\m}}^2,
\end{align*}
and that can be written in the form $\y=\y_\dd=(Y_{1,-1}({\dd}),Y_{1,0}({\dd}),Y_{1,1}({\dd}))$ for $\dd\in{\SS}$. The square in $\mathbf{R}_{p,\overline{\m}}^2$ and $\mathbf{S}_{p,\overline{\m}}^2$ is to be understood as acting componentwise on the vectors. The polynomials $\mathbf{R}_{p,\overline{\m}}$ and $\mathbf{S}_{p,\overline{\m}}$ are defined by
\begin{align*}
 \mathbf{R}_{p,\overline{\m}}(\mathbf{y})&=(\Pi_{q=-p+2}^p{c}_q(p,\y)\left(\overline{\m}_{p}-\mathbf{M}^{\y}_{p-1}\left(\widehat{\mathbf{M}}^{\y}_{p-1}\right)^{-1}\widehat{\overline{\m}}_{p}\right),
\\
 \mathbf{S}_{p,\overline{\m}}(\mathbf{y})&=(\Pi_{n=p}^{N}\Pi_{q=-n+2}^n{c}_q(n,\y))\bigg(\overline{\m}_{p}-\mathbf{N}^{\y}_{p+1}{\mathbf{q}}_{p+1}-\mathbf{M}^{\y}_{p-1}\left(\widehat{\mathbf{M}}^{\y}_{p-1}\right)^{-1}\left(\widehat{\overline{\m}}_{p}-\widehat{\mathbf{N}}^{\y}_{p+1}{\mathbf{q}}_{p+1}\right)\bigg),
\end{align*}\normalsize
with $\overline{\m}_p=\big((\tilde{\overline{m}}^{(2)})^\wedge(p,-p),\ldots, (\tilde{\overline{m}}^{(2)})^\wedge(p,p)\big)$ and $\mathbf{q}_{p}=(Q^\wedge(p,-p),\ldots,Q^\wedge(p,p))$. The matrices $\mathbf{M}^{\y}_{p-1}$ and $\mathbf{N}^{\y}_{p+1}$ are given as in \eqref{eqn:M} and \eqref{eqn:M2}. $\widehat{\mathbf{M}}^{\y}_{p-1}$ and $\widehat{\mathbf{N}}^{\y}_{p+1}$ denote the matrices ${\mathbf{M}}^{\y}_{p-1}$ and ${\mathbf{N}}^{\y}_{p+1}$, respectively, with its first two rows (for $\dd\in \SS\setminus\{(0,0,\pm1)\}$) or  its first and last row (for $\dd=(0,0,\pm1)$) deleted. Analogously, $\widehat{\overline{\m}}_{p}$ represents the vector ${\overline{\m}}_{p}$ with its first two entries or its first and last entry deleted.
The vectors $\mathbf{q}_p$, containing the Fourier coefficients of $Q$, can be computed iteratively by 
\begin{align*}
 \mathbf{q}_{p-1}&=\left(\widehat{\mathbf{M}}^{\y}_{p-1}\right)^{-1}\widehat{\overline{\m}}_{p},\quad p\in\{N,N+1\},
 \\\mathbf{q}_{p-1}&=\left(\widehat{\mathbf{M}}^{\y}_{p-1}\right)^{-1}\left(\widehat{\overline{\m}}_{p}-\widehat{\mathbf{N}}^{\y}_{p+1}{\mathbf{q}}_{p+1}\right),\quad p=N-1,\ldots,2.
\end{align*}
\end{thm}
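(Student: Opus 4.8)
The plan is to recast the equivalence-from-outside condition as a linear system in the Fourier coefficients of $Q$, parametrized by $\y_\dd$, and then eliminate those coefficients degree by degree from the top down, exactly as in the discussion preceding the statement. First I would invoke Theorem \ref{thm:sunique1}(b): $(Q,\dd)$ is equivalent from outside to $\overline{\m}$ if and only if the residual magnetization is silent, i.e.\ the type-$(2)$ Fourier coefficients of the dipole-induced magnetization $\m(x)=Q(x)(3(\dd\cdot x)x-\dd)$ agree with those of $\overline{\m}$. Writing the degree-$p$ type-$(2)$ coefficients of $\overline{\m}$ as $\overline{\m}_p$ and the degree-$p$ coefficients of $Q$ as $\mathbf{q}_p$, Proposition \ref{prop:mylt} turns this (up to a global constant) into the block-bidiagonal system
\begin{align*}
 \mathbf{M}^{\y_\dd}_{p-1}\mathbf{q}_{p-1}+\mathbf{N}^{\y_\dd}_{p+1}\mathbf{q}_{p+1}=\overline{\m}_p,\quad p\geq 1,
\end{align*}
in which $\y_\dd=(Y_{1,-1}(\dd),Y_{1,0}(\dd),Y_{1,1}(\dd))$ enters only as a parameter and $\mathbf{M},\mathbf{N}$ are the matrices \eqref{eqn:M}, \eqref{eqn:M2}.

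Next I would reduce this to a finite system. Since $\overline{\m}$ has band-limit $N+1$ we have $\overline{\m}_p=0$ for $p\geq N+2$. I would then show that any admissible band-limited $Q$ has band-limit at most $N$: if its band-limit were $M\geq N+1$, the degree-$(M+1)$ equation would read $\mathbf{M}^{\y_\dd}_{M}\mathbf{q}_{M}=0$, and the full column rank of $\mathbf{M}^{\y_\dd}_{M}$ (established in the proof of Lemma \ref{lem:indmagbl}) would force $\mathbf{q}_M=0$, a contradiction. Consequently $\mathbf{q}_p=0$ for $p\geq N+1$, so the two top equations $p\in\{N,N+1\}$ lose their $\mathbf{N}$-term and reduce to $\mathbf{M}^{\y_\dd}_{p-1}\mathbf{q}_{p-1}=\overline{\m}_p$.

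The core is a downward elimination. At each degree the system is overdetermined by exactly two rows, so I would pass to the square sub-block $\widehat{\mathbf{M}}^{\y_\dd}_{p-1}$ (delete two rows; invertible for $\dd\in\SS\setminus\{(0,0,\pm1)\}$ since $Y_{1,1}(\dd)\neq0$ there, as in Lemma \ref{lem:indmagbl}), solve uniquely for $\mathbf{q}_{p-1}$ by the theorem's formulas, and read the two remaining rows as consistency conditions. Carrying this out for $p\in\{N,N+1\}$ produces the vanishing of $\mathbf{R}_{p,\overline{\m}}(\y_\dd)$, and for $p=N-1,\dots,2$ (where the already-computed $\mathbf{q}_{p+1}$ re-enters through the $\mathbf{N}$-term) the vanishing of $\mathbf{S}_{p,\overline{\m}}(\y_\dd)$. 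A common zero of all these is precisely what $\mathbf{T}_{\overline{\m}}(\y_\dd)=0$ should encode. The converse direction simply runs the same iteration: given a zero $\y=\y_\dd$ of the required form, I recover $\dd$ from $\y_\dd$, build $\mathbf{q}_p$ iteratively, and note that the free coefficient $Q^\wedge(0,0)$ does not affect the potential (Remark \ref{rem:nullbl}).

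I expect three delicate points. First, the passage from rational to polynomial conditions: the inverse $(\widehat{\mathbf{M}}^{\y_\dd}_{p-1})^{-1}$ introduces denominators, which I would clear by the diagonal products $\Pi c_q$, checking that on the locus where $\widehat{\mathbf{M}}^{\y_\dd}_{p-1}$ is invertible these factors are nonzero so that polynomial zeros coincide with the genuine consistency conditions and no spurious solutions appear. Second, the degenerate direction $\dd=(0,0,\pm1)$ must be handled separately by deleting the first and last rows instead of the first two. The \emph{hardest part}, however, will be making precise the sense in which the single condition $\mathbf{T}_{\overline{\m}}(\y)=0$ is equivalent to the simultaneous vanishing of all the $\mathbf{R}$ and $\mathbf{S}$: their components live in spaces of differing dimension, and over $\C$ a sum of algebraic squares can vanish without each summand vanishing, so I would need to fix the squaring as a modulus-squared (hence work in $\y$ and $\overline{\y}$) in order to guarantee that the zero set of $\mathbf{T}_{\overline{\m}}$ is exactly the common zero set of the elimination residuals, together with checking the lowest degree $p=1$, where $\tilde{\alpha}^{(2)}_{1,q,0,l}=0$ removes the dependence on $Q^\wedge(0,0)$.
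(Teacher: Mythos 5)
Your proposal is correct and takes essentially the same route as the paper: the paper's own proof merely rewrites the silence condition as $\mathbf{M}^{\y}_{p-1}\mathbf{q}_{p-1}+\mathbf{N}^{\y}_{p+1}\mathbf{q}_{p+1}-\overline{\m}_{p}=0$, $p\geq1$, and then invokes verbatim the top-down elimination developed in the paragraphs preceding the theorem (square sub-blocks via row deletion, consistency residuals $\mathbf{R}_{p,\overline{\m}}$ and $\mathbf{S}_{p,\overline{\m}}$, denominator clearing by the products $\Pi c_q$, separate treatment of $\dd=(0,0,\pm1)$), which is exactly your plan. Your refinements --- the full-column-rank argument forcing $Q$ to have band-limit at most $N$, and reading the componentwise squares in $\mathbf{T}_{\overline{\m}}$ as modulus-squares so that its vanishing is genuinely equivalent to the simultaneous vanishing of all residuals (the paper's description of $\mathbf{T}_{\overline{\m}}$ as a single non-negative polynomial only makes sense under that reading, since the entries are complex) --- sharpen points the paper leaves implicit rather than constituting a different approach.
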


\begin{proof} 
The condition \eqref{eqn:Msys2} for two dipole induced magnetizations can be rewritten in the following way
\begin{align}\label{eqn:Msys5}
 \mathbf{M}^\y_{p-1}\mathbf{q}_{p-1}+\mathbf{N}^\y_{p+1}\mathbf{q}_{p+1}-\overline{\m}_{p}=0,\quad p\geq1,
\end{align}
to fit the setup of the theorem. The desired results then follow in the exact same manner as described in the previous paragraphs. The polynomial $\mathbf{T}_{\overline{\m}}$ has only been introduced to obtain a single non-negative polynomial of which the zeros have to be found, rather than finding zeros separately for all $\mathbf{R}_{p,\overline{\m}}$ and $\mathbf{S}_{p,\overline{\m}}$.
\end{proof}

\begin{rem}\label{rem:proc2}
Just as mentioned in Remark \ref{rem:proc}, for a given magnetic potential $V$, one first has to find a general magnetization $\overline{\m}$ such that $V=V[\overline{\m}]$ on $\SS_R$. Afterwards one can use Theorem \ref{thm:reconstbl} to check whether $V$ can also be expressed in the form $V=V[Q,\dd]$. For the construction of the polynomial $\mathbf{T}_{\overline{\m}}$ in Theorem \ref{thm:reconstbl}, only the contribution $\tilde{\overline{\m}}^{(2)}$ of $\overline{\m}$, which is determined uniquely by $V$, is required.
\end{rem}

\section{Numerical Examples}\label{sec:num}

We now provide some numerical examples for the considerations in Section \ref{sec:unique}. Remark \ref{rem:proc} motivates the following two-step procedure to check whether a susceptibility $Q$ and dipole direction $\dd\in\SS$ exist such that $V[Q,\dd]={V}$ for a given potential ${V}$ on $\SS_R$ and to actually compute such $Q$, $\dd$. In fact, the focus is on finding a suitable dipole direction $\dd$ (this is the quantity of interest, e.g., in some paleomagnetic problems; and once the dipole direction is known, the susceptibility could be obtained by solving the linear inverse problem $V[Q,\dd]={V}$ for a given $\dd$).

\begin{meth}\label{meth:1}
 Let a magnetic potential ${V}$ be given on a sphere $\SS_R$ of radius $R>1$, and let $\Gamma\subset\SS$ be a subregion with $\Gamma\not=\SS$. Then proceed as follows:
 \begin{enumerate}
  \item[(1)] Find a magnetization $\m^*\in H_1({\SS},\R^3)$ with $\textnormal{supp}(\m^*)\subset\Gamma$ that satisfies 
  \begin{align*}
   V[\m^*](x)={V}(x), \quad x\in\SS_R.
  \end{align*}
  By $V[\m]$ we denote the magnetic potential generated by $\m$ via \eqref{eqn:smag0}.
  \item[(2)] Compute $g^*$, $\mathbf{h}^*$, and $m_1^*$ from the $\m^*$ obtained in (1). Find a $\dd^*\in{\SS}$ that satisfies
  \begin{align*}
   T_{g^*,\mathbf{h}^*,m_1^*}(\dd^*)=\min_{\dd\in{\SS}} T_{g^*,\mathbf{h}^*,m_1^*}(\dd).
  \end{align*}
  \item[(3)] Find a susceptibility $Q^*\in H_1({\SS})$ with $\textnormal{supp}(Q^*)\subset\Gamma$  such that 
  \begin{align*}
   V[Q^*,\dd^*]= {V}(x), \quad x\in\SS_R.
  \end{align*}
  By $V[Q,\dd]$ we denote the magnetic potential $V[\m]$ generated by a magnetization $\m$ of the form \eqref{eqn:m}. 
  \end{enumerate}
  If the data misfit $\|V[Q^*,\dd^*]-{V}\|_{L^2(\SS_R)}$ in step (3) is 'too large', go back to (2), find a new $\dd^{**}\not=\dd^*\in\SS$ and repeat step (3) with this $\dd^{**}$. If no other $\dd^{**}$ exists, this is an indicator that the given magnetic potential ${V}$ cannot be produced by a dipole induced magnetization. If $\|V[Q^*,\dd^*]-{V}\|_{L^2(\SS_R)}$ in step (3) is 'sufficiently small', then $Q^*$ and $\dd^*$  represent a susceptibility and a dipole direction with the desired properties.
\end{meth}

\begin{rem}
Concerning step (2) in Procedure \ref{meth:1}, Theorem \ref{thm:reconst} actually requires to find a zero $\dd^*$ of $T_{g^*,\mathbf{h}^*,m_1^*}$. However, such a zero might not exist either because there does not exist a dipole induced magnetization that produces ${V}$ in the first place or because noise in the measurements or reconstruction errors may have lead to a deteriorated version of  $T_{g^*,\mathbf{h}^*,m_1^*}$. In order to exclude false conclusions due to latter mentioned error sources, we minimize  $T_{g^*,\mathbf{h}^*,m_1^*}$ instead of trying to find its zeros (since  $T_{g^*,\mathbf{h}^*,m_1^*}$ is always non-negative by construction, this procedure is justified). If  $T_{g^*,\mathbf{h}^*,m_1^*}(\dd^*)$ is 'too large', this is an indicator that no zero exists and, thus, no dipole induced magnetization exists that produces ${V}$. The question of what 'too large' means is of course a delicate one, we illustrate it by some examples later on.

Since we are mainly interested in the dipole direction $\dd$, the first two steps in Procedure \ref{meth:1} are the important ones. But step (3) can be seen as a validation of the result of the first two steps: Theorem \ref{thm:reconst} requires $P_{{\m}^*,\dd^*}\in H_1({\SS})$ in order to guarantee that there exists a $Q^*$ such that $V[Q^*,\dd^*]=V$ on $\SS_R$ (in that case, $Q^*(x)= \frac{x\cdot\m^*(x)}{2x\cdot \dd^*}$ would be the corresponding susceptibility). However, due to measurement and reconstruction errors in ${V}$ and $\m^*$, respectively, it is unlikely that $P_{{\m}^*,\dd^*}\in H_1({\SS})$ for the $\m^*$ and $\dd^*$ obtained in steps (1) and (2). Thus, it is reasonable to invert ${V}$ again in step (3), now with a given $\dd^*$, in order to obtain an approximation of $Q^*$ that lies in $H_1(\SS)$. If the data misfit $\|V[Q^*,\dd^*]-{V}\|_{L^2(\SS_R)}$ is 'small enough', this indicates that $\dd^*$ is an admissible dipole direction.
 
Last but not least, it should be noted that the inverse problems in step (1) and (3) of Procedure \ref{meth:1} are linear (opposed to computing approximations $Q^*$ and $\dd^*$ directly from a single inversion of ${V}$). Additionally, Procedure \ref{meth:1} supplies more information on possible candidates for dipole directions than the direct inversion, since it is fairly easy to find minimizers of $T_{g^*,\mathbf{h}^*,m_1^*}$ in step (2).
\end{rem}

We illustrate Procedure \ref{meth:1} for three different situations. All situations have in common that the potential ${V}$ is given on $\SS_R$, with $R=1.06$ (which simulates the situation of a satellite flying at an altitude of around $380$km above the Earth's surface). Furthermore, ${V}$ is assumed to be given only in discrete points on an equiangular grid of $40,401$ points. The magnetization $\m$ on ${\SS}$ that generates ${V}$ is varied among the three situations, but it is always supported in the lower hemisphere, i.e., $\textnormal{supp}(\m)\subset \Gamma=\{x\in{\SS}:x\cdot \mathbf{v}\leq0\}$ for $\mathbf{v}=(0,0,1)^T$ being fixed:
\begin{itemize}
 \item [(a)] $\m$ is a dipole induced magnetization that is uniquely determined. In particular,  $\m$ is of the form \eqref{eqn:m} with dipole direction $\dd=(0,0.436,0.9)^T$ and susceptibility
 \begin{align*}
  Q(x)=4(x\cdot \mathbf{v})^3\chi_{[-1,0]}(x\cdot \mathbf{v}),\quad x\in{\SS},
 \end{align*}
where $\chi_{[-1,0]}$ denotes the characteristic function on the interval $[-1,0]$.
\item[(a')] Same as in (a) but only a noisy version ${V}^\eps$ of ${V}$ is given. In this example, we choose the noise level $\eps={\|{V}^\eps-{V}\|_{L^2(\SS_R)}}/{\|{V}\|_{L^2(\SS_R)}}=10^{-2}$.
\item[(b)]$\m$ is a dipole induced magnetization that is non-unique and of a form as described in Remark \ref{rem:smooth}. In particular, we choose the dipole direction $\dd=(1,0,0)^T$ and the susceptibility 
 \begin{align*}
  Q(x)=P\left(x\cdot\frac{\dd\times{\dd}}{|\dd\times {\dd}|}\right)x\cdot {\dd},\quad x\in{\SS},\qquad P(t)=e^{-\frac{1}{t^2}}\chi_{[-1,0]}(t),\quad t\in[-1,1],
 \end{align*}
 where ${\dd}=(0,1,0)^T$ is a fixed auxiliary vector. (According to Remark \ref{rem:smooth}, choosing ${Q}(x)=P\big(x\cdot(\dd\times{\dd})/|\dd\times {\dd}|\big)x\cdot {\dd}$ yields a further dipole induced magnetization that is equivalent to $\m$ from above. In other words, $({Q},{\dd})$ is equivalent from above to $(Q,\dd)$.)
 \item[(c)] $\m$ is not a dipole induced magnetization. In particular, we choose 
 \begin{align*}
  \m(x)=Q(x)\mathbf{v},\quad x\in{\SS},
 \end{align*}
with $Q$ as in (a).
\end{itemize}

For each of the situations above we apply the first two steps of Procedure \ref{meth:1} (the third step is only indicated for situation (a')). In step (1), we construct $\m^*$ to be the minimizer of the functional
\begin{align}\label{eqn:minfunc}
 \mathcal{F}[\m]=\left\|V[\m]-{V}\right\|_{L^2(\SS_R)}^2+\alpha\|\m\|_{H_1({\SS},\R^3)}^2+\beta\|\m\|_{L^2(\SS\setminus\Gamma,\R^3)}^2,
\end{align}
where $\|\m\|_{H_1({\SS},\R^3)}$ denotes the Sobolev norm $\m$ (see, e.g., \cite{freedenschreiner09} for more details). 
The first term in \eqref{eqn:minfunc} simply represents a data misfit that measures the deviation of $V[\m]$ from the known magnetic potential ${V}$, while the second term is a Tikhonov-type regularization to reduce noise amplification resulting from the ill-posedness of the downward continuation of the potential field data ${V}$ to the surface ${\SS}$ (this is well-studied and can be found, e.g., in \cite{freeden99, lu15} and references therein). The third term in \eqref{eqn:minfunc} eventually penalizes magnetizations $\m$ that have contributions outside $\Gamma$, i.e., magnetizations that do not satisfy supp$(\m)\subset\Gamma$. For the discretization of $\mathcal{F}[\m]$, we expand $\m$ in terms of (vectorial) Abel-Poisson kernels: 
\begin{align}
 \m( x)&=\sum_{i=1}^3\sum_{n=1}^N\gamma_{i,n} \,o^{(i)}_x K( x\cdot  x_n),
 \\K( x\cdot x_n)&=\frac{1-h^2}{(1+h^2-2h( x\cdot x_n))^{\frac{3}{2}}},
\end{align}
where $h\in(0,1)$ is a fixed parameter (influencing the localization of $K$; we use $h=0.9$) and $\{ x_n\}_{n=1,\ldots,N}\subset{\SS}$ is a set of uniformly distributed points indicating the centers of the kernel $K$ (in our case, we choose $N=10235$ different centers). Some general properties of the Abel-Poisson kernel $K$ can be found, e.g., in \cite{freeden98}. With this discretization, the minimization of $\mathcal{F}[\m]$ reduces to solving a set of linear equations with respect to the coefficients $\gamma_{i,n}$.
%
In step (2), we compute $T_{g^*,\mathbf{h}^*,m_1^*}$ from the $\m^*$ obtained in step (1) and find its minimizers. For the purpose of illustration, we simply plot $T_{g^*,\mathbf{h}^*,m_1^*}$ over the sphere to indicate where the minima $\dd^*\in{\SS}$ are located. Eventually, given $\dd^*\in{\SS}$, in step (3) we minimize a functional $\mathcal{G}$ similar to \eqref{eqn:minfunc} in order to obtain $Q^*$. More precisely, we minimize
\begin{align}\label{eqn:minfunc2}
 \mathcal{G}[Q]=\left\|V[Q,\dd^*]-{V}\right\|_{L^2(\SS_R)}^2+\alpha\|\m[Q]\|_{H_1({\SS},\R^3)}^2+\beta\|\m[Q]\|_{L^2({\SS}\setminus\Gamma,\R^3)}^2,
\end{align}
where $\m[Q]$ denotes the induced magnetization $\m[Q]( x)=Q(x)(3(x\cdot \dd^*) x-\dd^*)$, $ x\in{\SS}$. For the numerical evaluation, we proceed similarly as for \eqref{eqn:minfunc} by expanding $Q$ in terms of (scalar) Abel Poisson kernels and solving a corresponding system of linear equations (details for a similar problem can be found in \cite{gerhards16a}). Any numerical integrations necessary during the procedure are performed via the methods of \cite{driscoll94} (when the integration region comprises the entire sphere ${\SS}$ or $\SS_R$, respectively) and \cite{hesse12} (when the integration is only performed over a spherical cap ${\SS}\setminus\Gamma$).

\begin{figure}\begin{center}
 \includegraphics[scale=0.4]{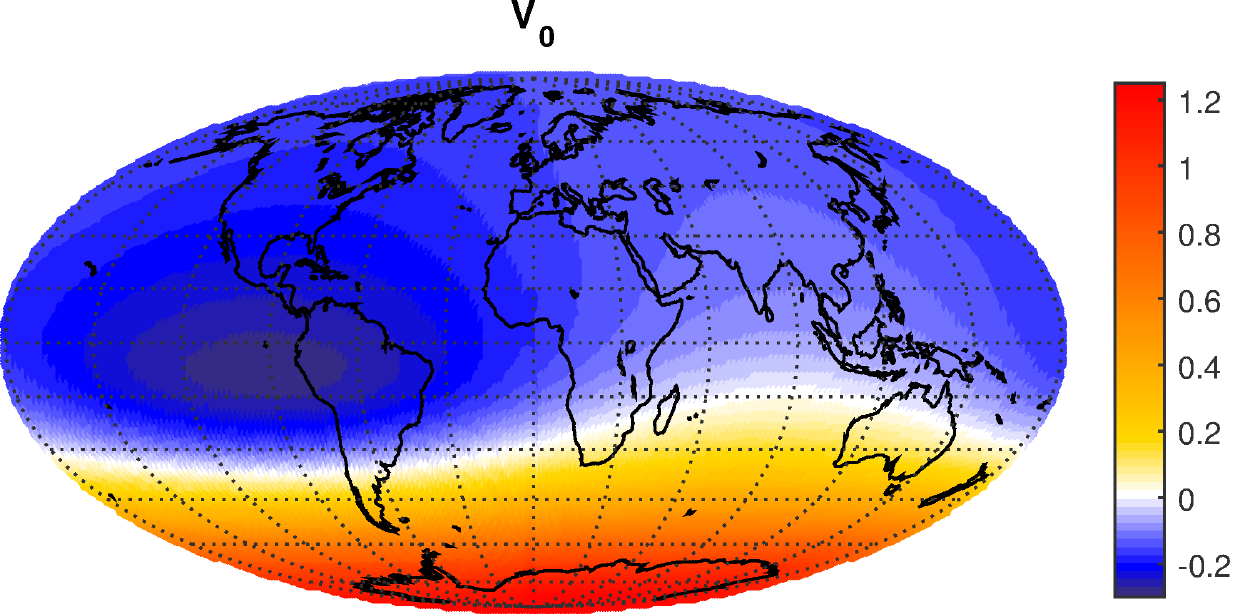}\quad\includegraphics[scale=0.42]{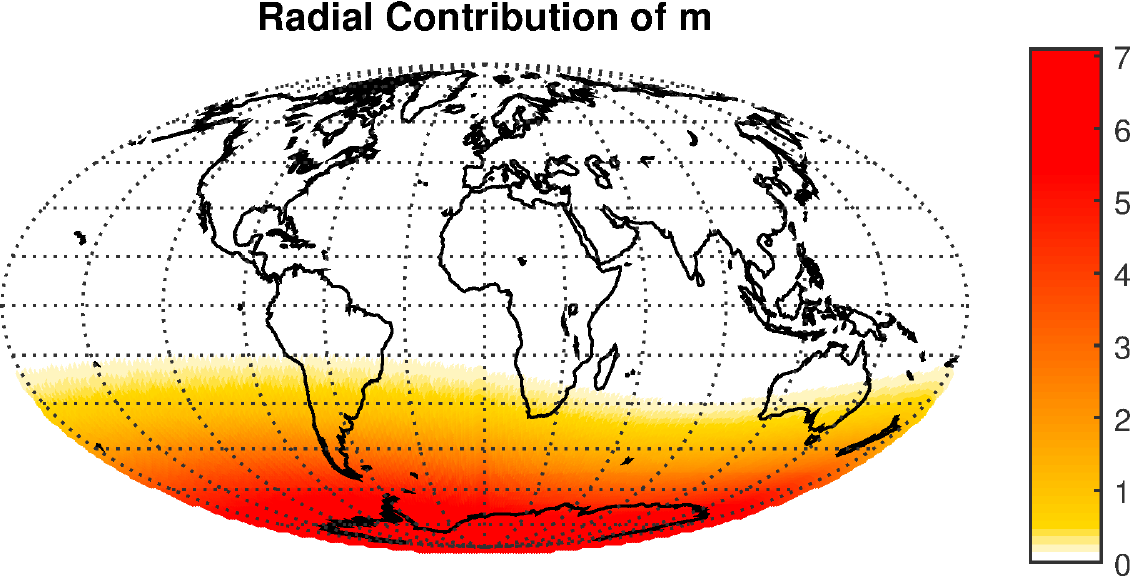}\quad\includegraphics[scale=0.42]{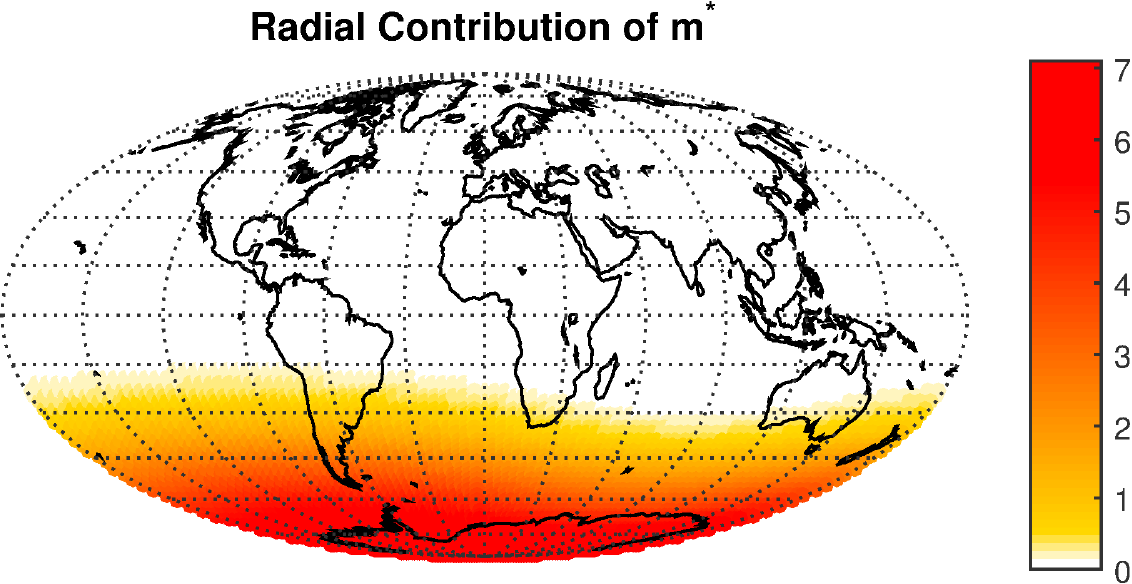}
 \end{center}
 \caption{Illustration of step (1) for situation (a): noise-free input data ${V}$ (\emph{left}), radial component $m_1$ of the true magnetization $\m$ (\emph{center}), and  radial component $m_1^*$ of the reconstructed magnetization $\m^*$ (\emph{right}).}\label{fig:case1a}
 \begin{center}
 \includegraphics[scale=0.5]{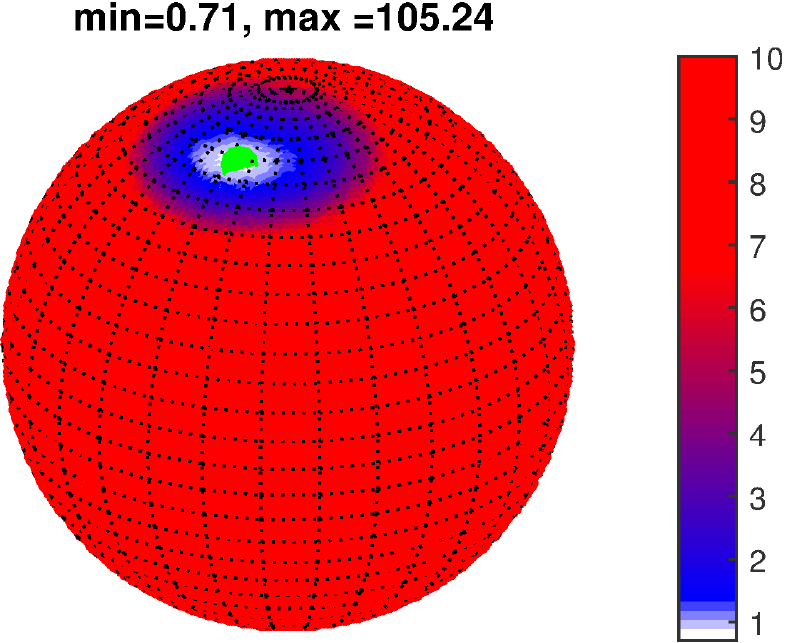} \end{center}
 \caption{Illustration of step (2) for situation (a): the figure shows the evaluation of $T_{g^*,\mathbf{h}^*,m_1^*}$ on the unit sphere, the \emph{green dot} indicates the location of the true dipole direction $\dd$. The color bar has been modified to emphasize the minimum, the actual minimum and maximum is indicated in the title.}\label{fig:case1b}
\end{figure}

\begin{figure}\begin{center}
 \includegraphics[scale=0.4]{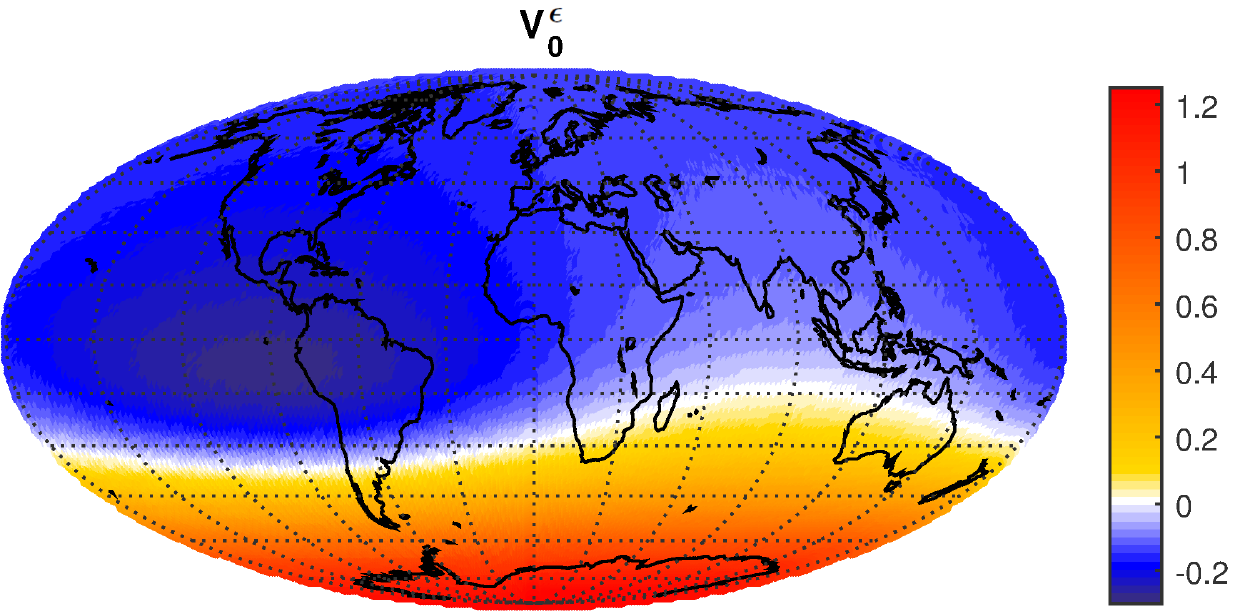}\quad\includegraphics[scale=0.42]{m_unique_no_noise-eps-converted-to.pdf}\quad\includegraphics[scale=0.42]{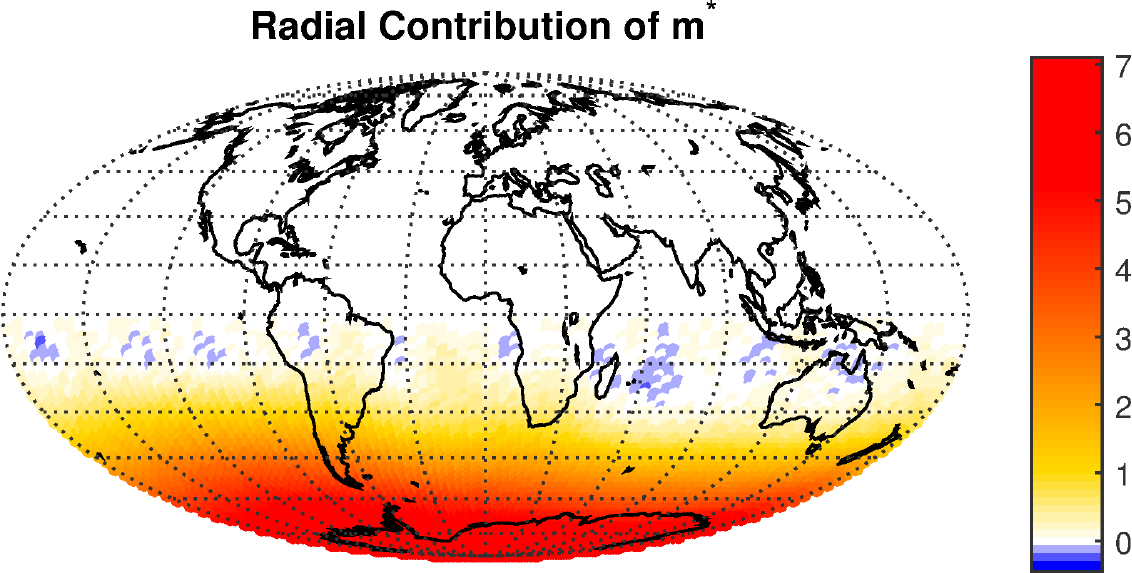}
 \end{center}
 \caption{Illustration of step (1) for situation (a'): noisy input data ${V}^\eps$ (\emph{left}), radial component $m_1$ of the true magnetization $\m$ (\emph{center}), and  radial component $m_1^*$ of the reconstructed magnetization $\m^*$ (\emph{right}).}\label{fig:case1'a}
 \begin{center}
 \includegraphics[scale=0.5]{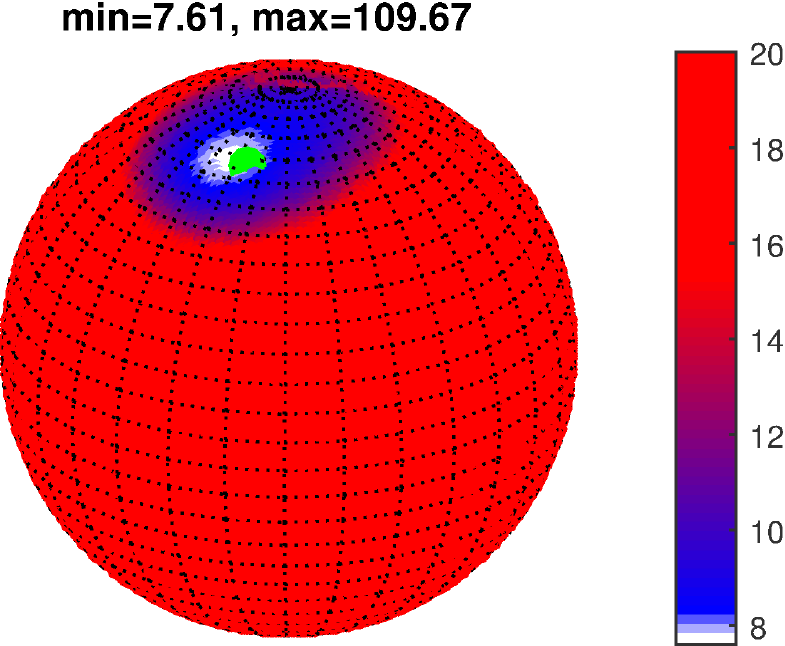} \end{center}
 \caption{Illustration of step (2) for situation (a'): the figure shows the evaluation of $T_{g^*,\mathbf{h}^*,m_1^*}$ on the unit sphere, the \emph{green dot} indicates the location of the true dipole direction $\dd$. The color bar has been modified to emphasize the minimum, the actual minimum and maximum is indicated in the title.}\label{fig:case1'b}
 \begin{center}
  \includegraphics[scale=0.4]{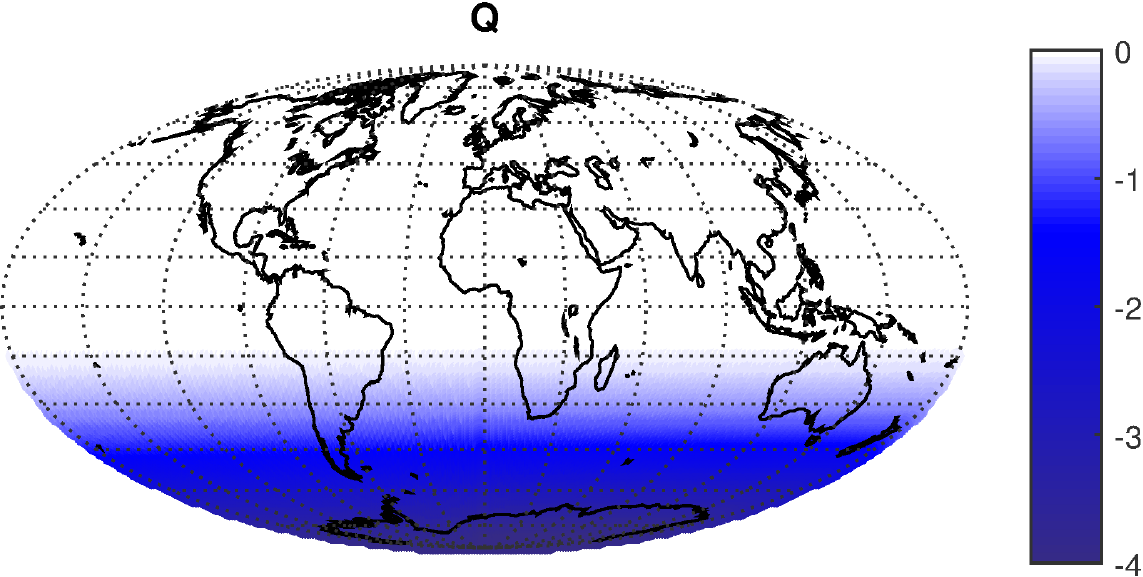}\quad\includegraphics[scale=0.42]{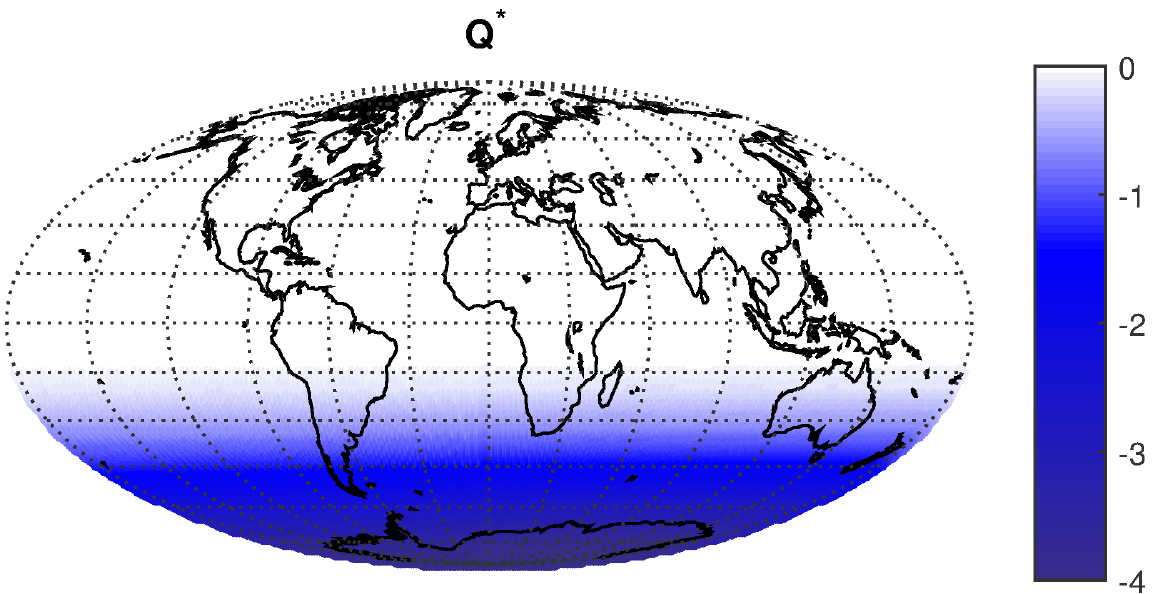}\quad\includegraphics[scale=0.42]{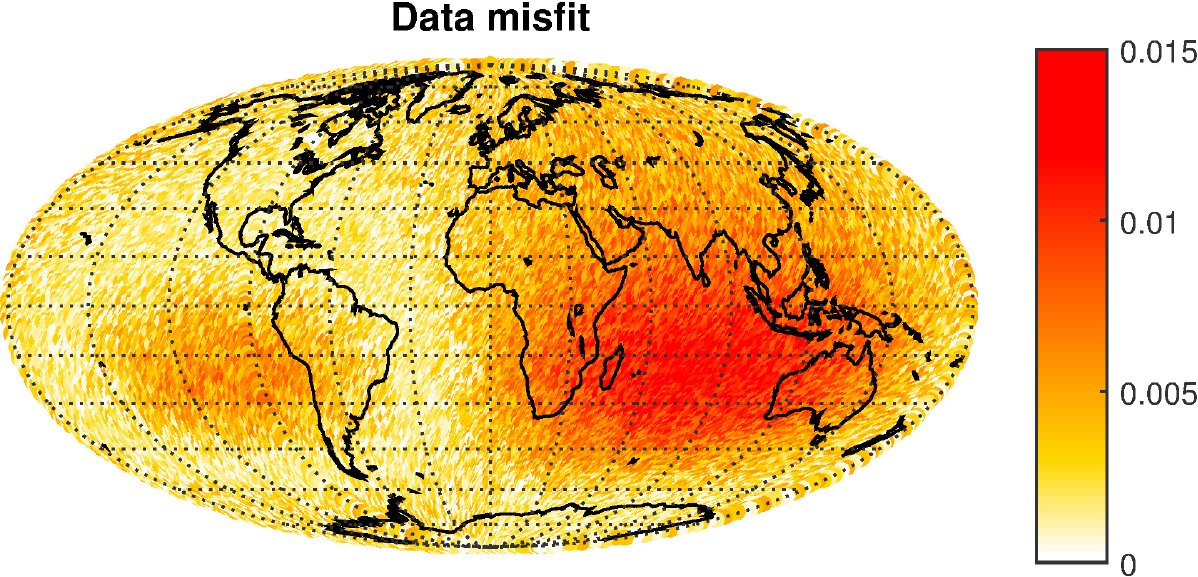} \end{center}
 \caption{Illustration of step (3) for situation (a'): true susceptibility $Q$ (\emph{left}) and reconstructed susceptibility $Q^*$ for  $\dd^*=(0.027,0.433,0.901)^T$ (\emph{center}). The \emph{data misfit} $|{V}^\eps-V[Q^*,\dd^*]|$ is indicated in the \emph{right} image.}\label{fig:case1'c}
\end{figure}

\begin{figure}\begin{center}
 $\begin{array}{ccc}\includegraphics[scale=0.4]{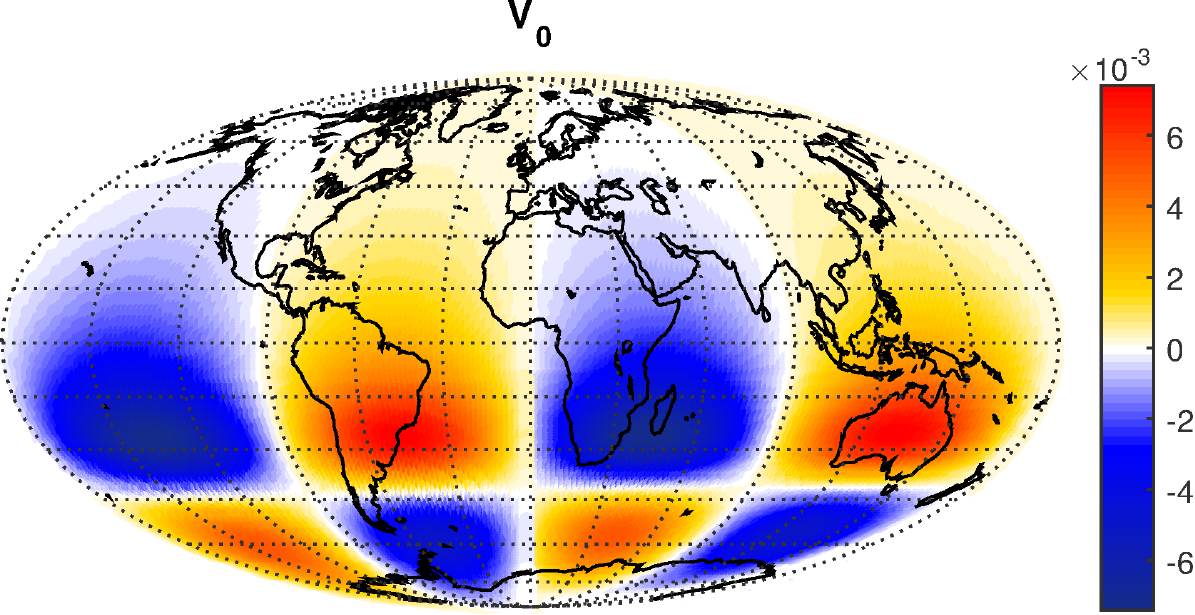}&\includegraphics[scale=0.42]{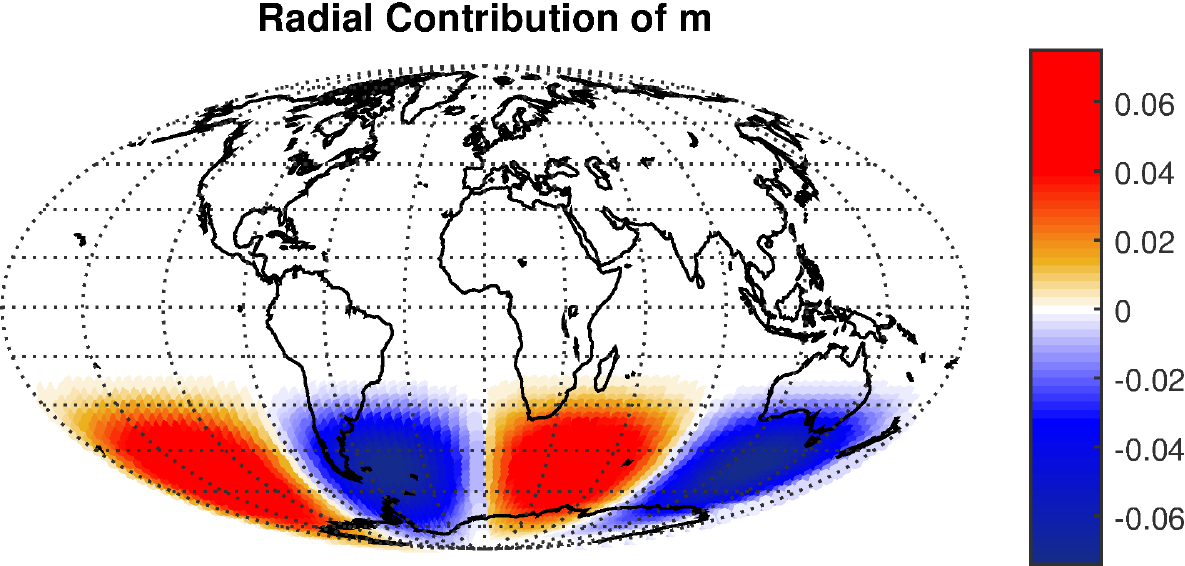}&\includegraphics[scale=0.42]{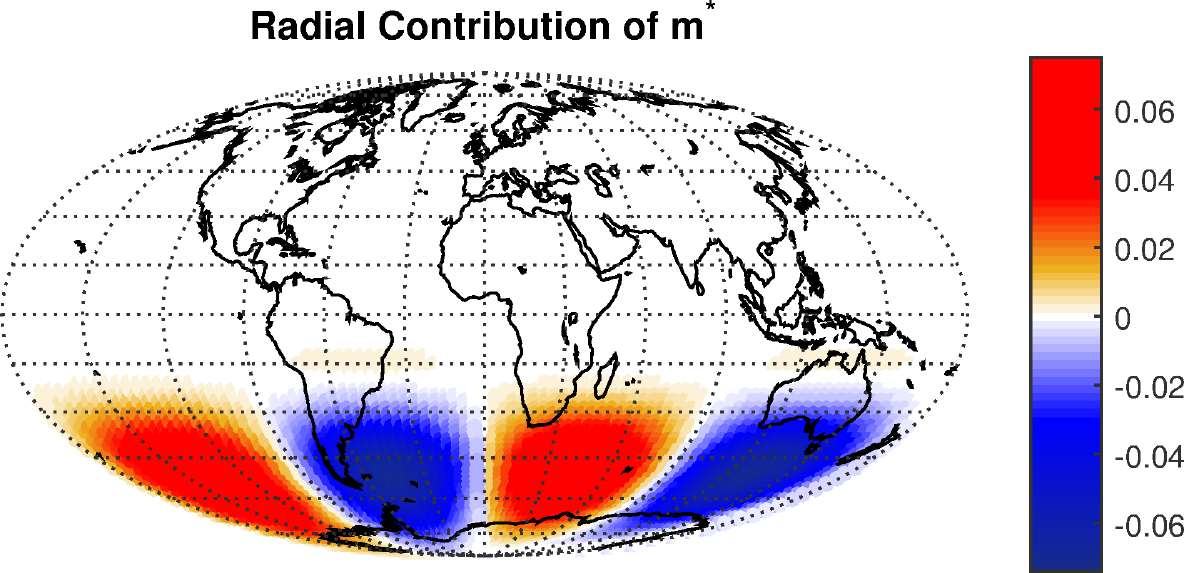}
 \\&\includegraphics[scale=0.42]{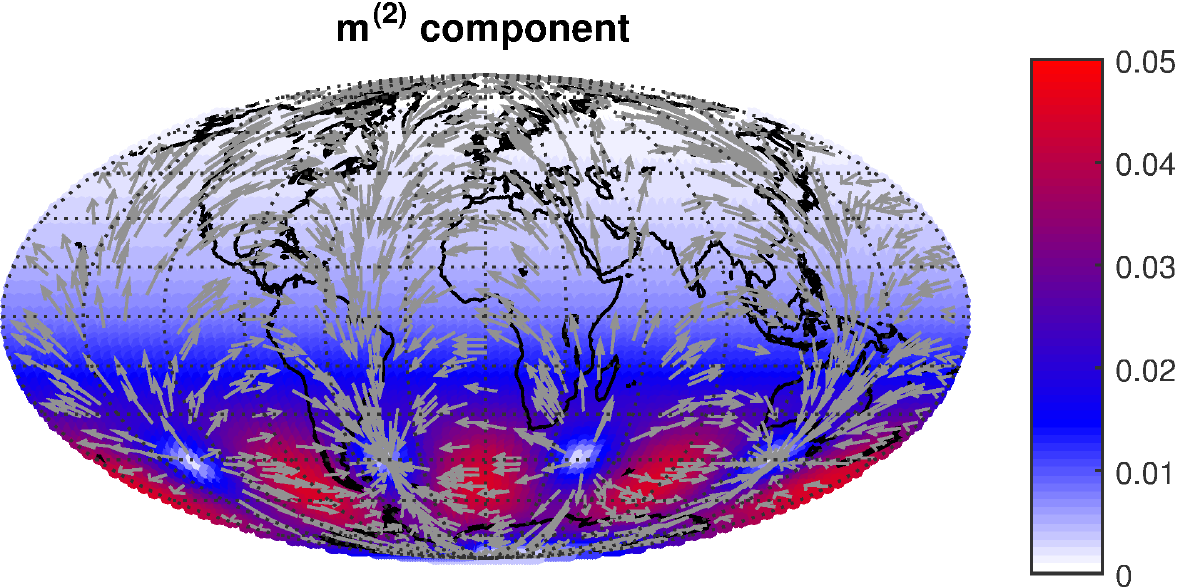}&\includegraphics[scale=0.42]{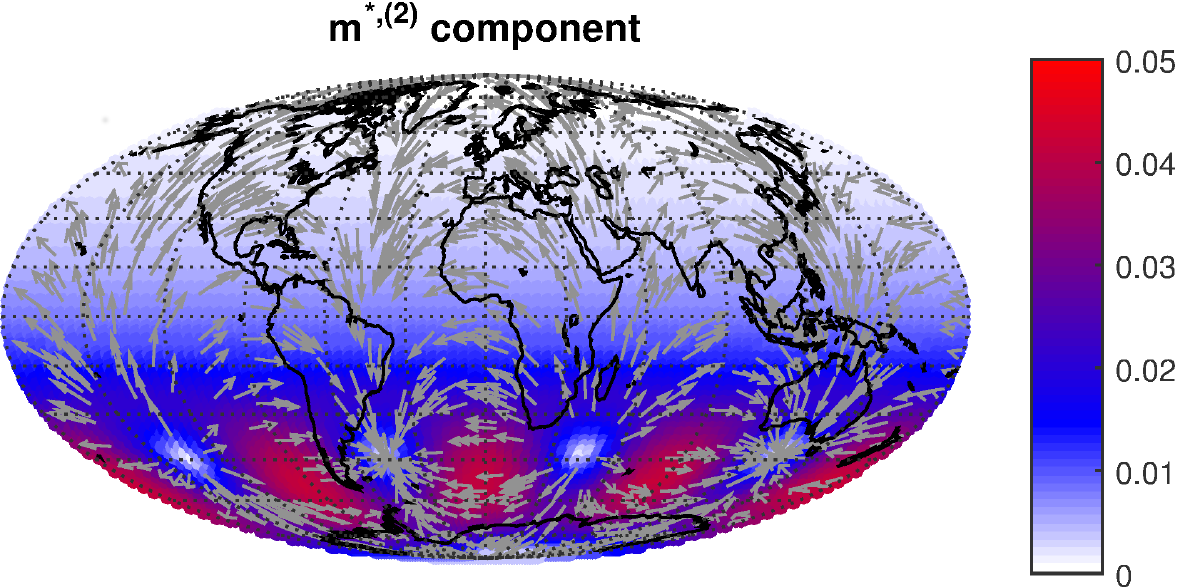}
 \\&\includegraphics[scale=0.42]{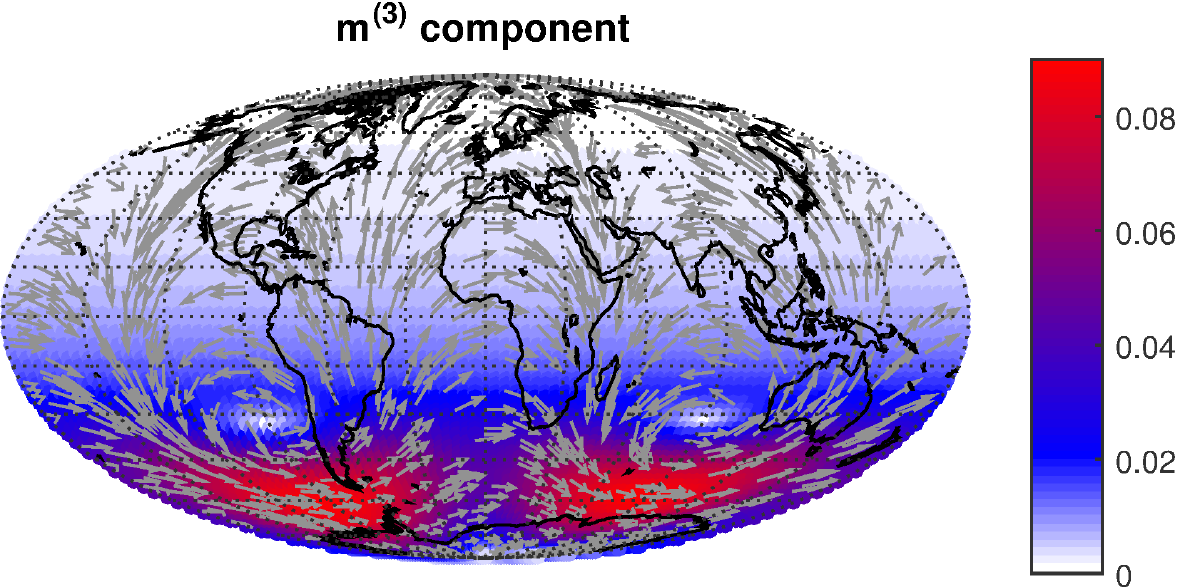}&\includegraphics[scale=0.42]{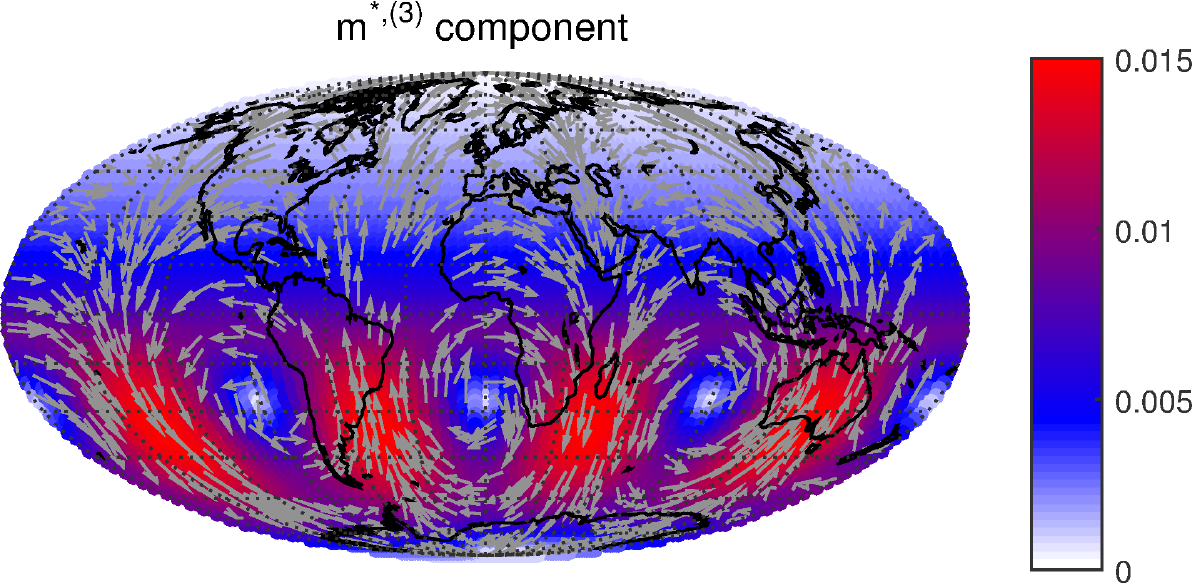}\end{array}$
 \end{center}
 \caption{Illustration of step (1) for situation (b): input data ${V}$ (\emph{left}), radial component $m_1$ and the contributions $\m^{(2)}$ and $\m^{(3)}$ of the true magnetization $\m$ (\emph{center}), and radial component $m_1^*$ and the contributions $\m^{*,(2)}$ and $\m^{*,(3)}$  of the reconstructed magnetization $\m^*$ (\emph{right}). In the plots of the second and third row, colors indicate the absolute values $|\m^{(i)}|$ and $|\m^{*,(i)}|$, $i=1,2$, and arrows the orientation.}\label{fig:case2a}
 \begin{center}
 \includegraphics[scale=0.4]{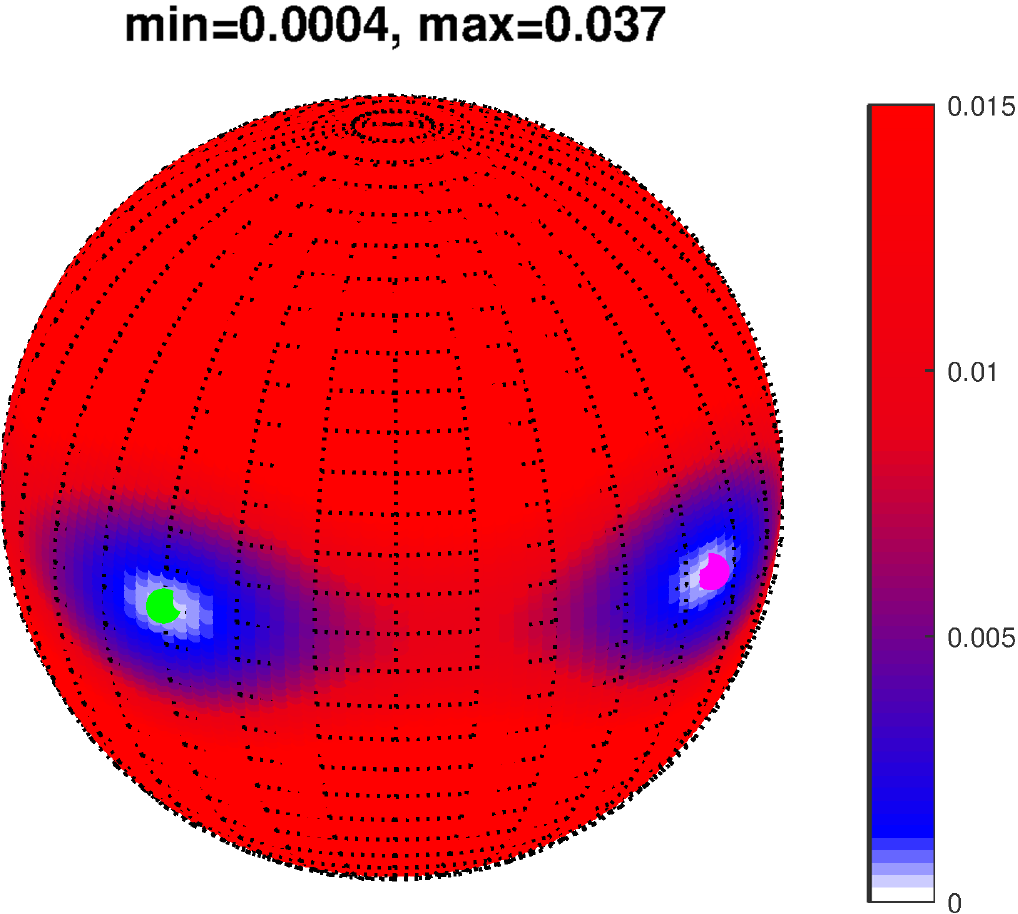} \end{center}
 \caption{Illustration of step (2) for situation (b): the figure shows the evaluation of $T_{g^*,\mathbf{h}^*,m_1^*}$ on the unit sphere, the \emph{green and purple dots} indicate the locations of the possible (true) dipole directions $\dd$ and ${\dd}$, respectively. The color bar has been modified to emphasize the minimum, the actual minimum and maximum is indicated in the title.}\label{fig:case2b}
\end{figure}

The results of step (1) and (2) for situation (a) are indicated in Figures \ref{fig:case1a} and \ref{fig:case1b}, respectively. The reconstruction $\m^*$ nicely fits the true $\m$. For brevity, we illustrated only the radial components. Figure \ref{fig:case1b} shows that the minimum of $T_{g^*,\mathbf{h}^*,m_1^*}$ coincides with the desired dipole direction $\dd$. The corresponding results for the noisy situation (a') are indicated in Figures \ref{fig:case1'a} and \ref{fig:case1'b}. We see that the reconstructed radial contribution of $\m^*$ shows some minor artifacts but the dipole direction $\dd$ still coincides quite well with the minimum of $T_{g^*,\mathbf{h}^*,m_1^*}$. In the perfect case it should hold that  $T_{g^*,\mathbf{h}^*,m_1^*}(\dd^*)=0$, however, we see that the actual minimum value is rather large in the noisy setup. Therefore, to make sure that we found a good candidate $\dd^*$ for the dipole direction, we proceed to step (3) with the approximation $\dd^*=(0.027,0.433,0.901)^T$ of the minimum of $T_{g^*,\mathbf{h}^*,m_1^*}$. The reconstructed susceptibility $Q^*$ and the true susceptibility $Q$ are indicated in Figure \ref{fig:case1'c} and they match very well, indicating that $\dd^*$ is a good approximation of the true dipole direction. The data misfit $|{V}^\eps-V[Q^*,\dd^*]|$ offers a decision criterion that does not require the knowledge of the true $Q$ and is also indicated in Figure \ref{fig:case1'c}. In this case, we see that the data misfit is small and we accept $\dd^*$ as an approximation of the true dipole direction.

Steps (1) and (2) for situation (b), where no uniqueness of $Q$ and $\dd$ is given, are shown in Figures \ref{fig:case2a} and \ref{fig:case2b}, respectively. In Figure \ref{fig:case2a} we indicated all three contributions (i.e., the radial contribution $m_1$ and the surface curl- and surface divergence-free contributions $\m^{(2)}$ and $\m^{(3)}$, respectively) of $\m$ and $\m^*$. It is seen that the radial contribution and the surface curl-free contribution of the true and the reconstructed magnetization coincide, as is expected from Theorem \ref{thm:sunique1}. However, the surface divergence-free contribution is not uniquely determined and therefore may differ, as is the case here. But latter has no impact on our further procedure. Figure \ref{fig:case2b} shows that the two possible dipole directions $\dd$ and ${\dd}$ are precisely the minima of $T_{g^*,\mathbf{h}^*,m_1^*}$. Which direction is the correct one cannot be decided without further a priori geophysical information due to the intrinsic non-uniqueness.

For situation (c), the magnetization $\m$ has been reconstructed very well as can be exemplarily seen for the radial component in Figure  \ref{fig:case3a}. Figure \ref{fig:case3b} shows that the minima of $T_{g^*,\mathbf{h}^*,m_1^*}$ are located on the equator, i.e., any possible candidate for a dipole direction $\dd^*$ must lie in the equatorial plane. However, the acquired minimum value is so large that this leads us to conclude that the potential ${V}$ cannot be generated by a dipole induced magnetization.

\begin{figure}\begin{center}
 \includegraphics[scale=0.38]{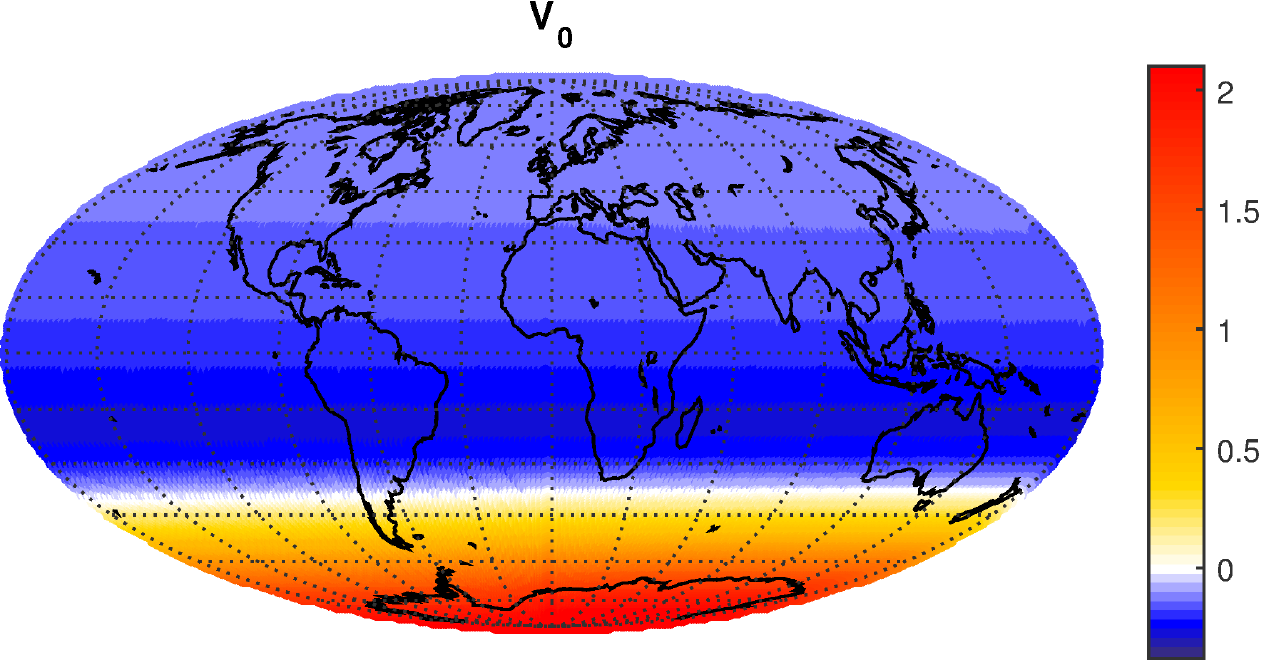}\quad\includegraphics[scale=0.42]{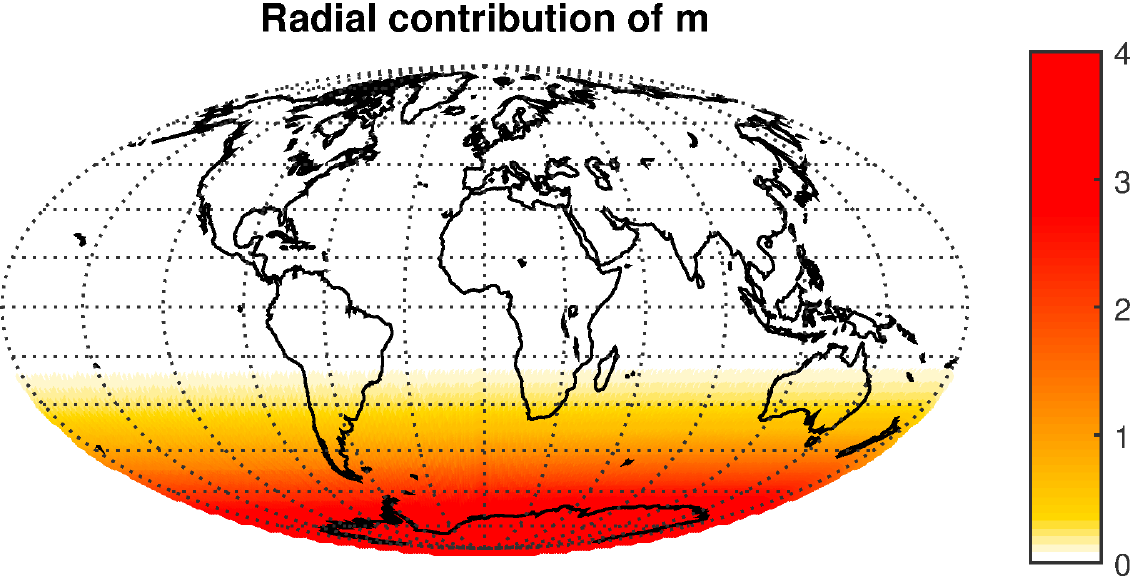}\quad\includegraphics[scale=0.42]{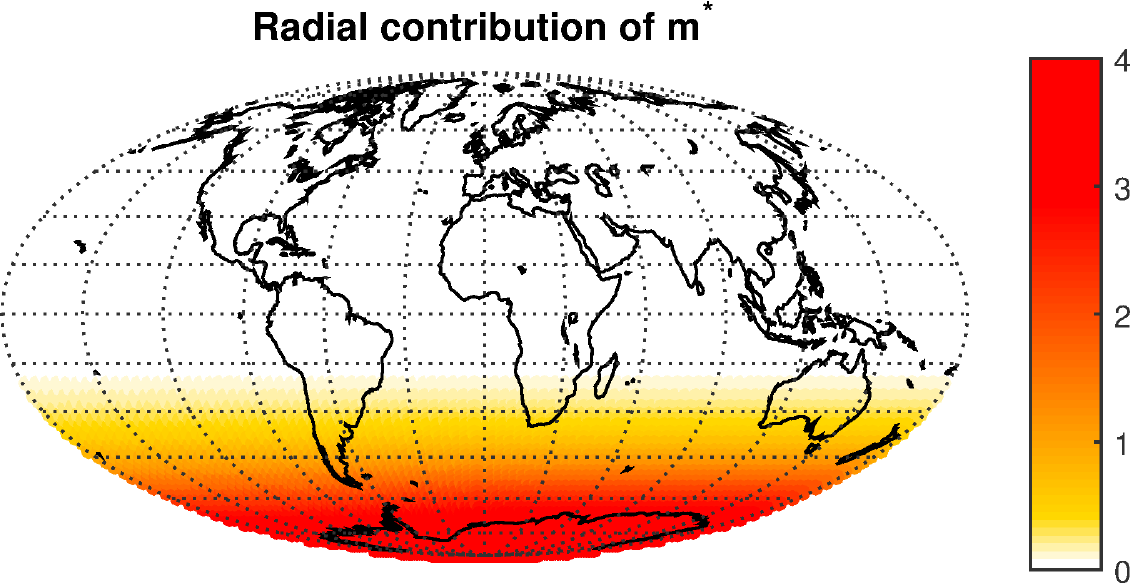}
 \end{center}
 \caption{Illustration of step (1) for situation (c): input data ${V}$ (\emph{left}), radial component $m_1$ of the true magnetization $\m$ (\emph{center}), and  radial component $m_1^*$ of the reconstructed magnetization $\m^*$ (\emph{right}).}\label{fig:case3a}
 \begin{center}
 \includegraphics[scale=0.6]{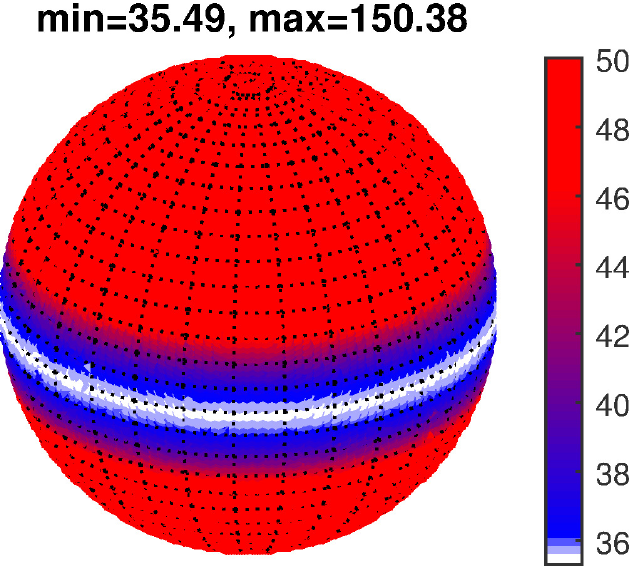} \end{center}
 \caption{Illustration of step (2) for situation (c): the figure shows the evaluation of $T_{g^*,\mathbf{h}^*,m_1^*}$ on the unit sphere. The color bar has been modified to emphasize the minimum, the actual minimum and maximum is indicated in the title.}\label{fig:case3b}
\end{figure}


\section{Conclusion}
The fact that generally only the $\tilde{\m}^{(2)}$-contribution of a spherical magnetization $\m$ can be uniquely reconstructed from satellite magnetic field measurements leads to uniqueness issues, e.g., in determining possible dipole directions (assuming that the underlying magnetization is of induced type). The additional assumption that $\m$ is localized in some subregion of a spherical planetary surface allows to uniquely determine the $\tilde{\m}^{(1)}$- and  $\tilde{\m}^{(2)}$-contributions of $\m$ (although $\tilde{\m}^{(3)}$ is still unknown), which implies that the radial contribution $\m^{(1)}$ and the tangential surface curl-free contribution $\m^{(2)}$ are determined uniquely. Here, we have shown that for the latter situation there exists a procedure for the determination of candidates for the dipole direction $\dd$ and for the decision if a measured magnetic field can be produced by a dipole induced magnetization in the first place (a similar procedure has been derived for band-limited magnetizations, but in our examples in Section \ref{sec:num} we focused on the spatial localization constraint as we believe it to be more feasible for actual applications). The numerical treatment of the involved extremal problems allows various approaches and should be investigated in more detail for future applications. The focus of this paper is on the presentation and illustration of the conceptual setup for the improved reconstruction of dipole directions and the investigation of uniqueness issues.
\\[2ex]
\begin{ackno}
 The author thanks Foteini Vervelidou, GFZ Potsdam, for pointing out the problem of studying the reconstruction of dipole directions. The work was partly supported by DFG grant GE 2781/1-1.
\end{ackno}

\footnotesize
\bibliographystyle{unsrt}

\end{document}